\pdfoutput=1
\documentclass[12pt, letterpaper]{article}
\usepackage[left = 1in, right = 1in, top = 1in, bottom = 1.2in]{geometry}
\usepackage{amsfonts, amssymb, amsmath}
\usepackage{array}
\usepackage{booktabs}
\usepackage{natbib}
\usepackage{graphicx}
\usepackage{multirow}
\usepackage{amsmath,bm}
\usepackage{graphicx,psfrag,epsf}
\usepackage{enumerate}
\usepackage{natbib}
\usepackage{url} 

\usepackage{amssymb, amsfonts, amsthm}

\let\E\undefined
\DeclareMathOperator{\E}{E}
\let\pr\undefined
\DeclareMathOperator{\pr}{Pr}

\let\Pn\undefined

\DeclareMathOperator{\Pn}{\mathbb{P}_{\mathit{n}}}

\def\wt{\widetilde}
\def\wh{\widehat}
\def\wb{\overline}
\def\wub{\underline}

\def\L{\mathcal{L}}
\def\TB{\mathcal{T}_B}
\def\S{\mathcal{S}}
\def\Sc{\mathcal{S}^c}

\def\R{\mathbb{R}}
\def\Hgamma{\mathbb{H}_{\bm{\gamma}}}
\def\H{\mathbb{H}}

\let\var\undefined
\DeclareMathOperator{\var}{Var}


\let\T\undefined
\def\T{\mathrm{\scriptscriptstyle T}}
\DeclareMathOperator*{\argmax}{arg\,max}
\DeclareMathOperator*{\argmin}{arg\,min}




\newcommand{\RomanNum}[1]{\MakeUppercase{\romannumeral #1\relax}}
\newcommand{\romanNum}[1]{\MakeLowercase{\romannumeral #1\relax}}

\newcommand{\cc}[1]{\multicolumn{1}{c}{#1}}
\newcommand{\cellcenter}{\cc}

\def\symmdiff{\mathbin{\triangle}}

\theoremstyle{plain}
\newtheorem{theorem}{Theorem}
\newtheorem{proposition}{Proposition}
\newtheorem{lemma}[proposition]{Lemma}
\newtheorem{corollary}[proposition]{Corollary}

\theoremstyle{remark}
\newtheorem{remark}{Remark}
\newtheorem{assumption}{Assumption}

\title{Interpretable Dynamic Treatment Regimes}

\author{
  Yichi Zhang,
  Eric B. Laber,
  Anastasios Tsiatis,  
  and Marie Davidian\\ 
North Carolina State University}

\date{}

\begin{document}

\maketitle

\begin{abstract}
  Precision medicine is currently a topic of great interest in
  clinical and intervention science.  A key component of precision
  medicine is that it is evidence-based, i.e., data-driven, and
  consequently there has been tremendous interest in estimation of
  precision medicine strategies using observational or randomized
  study data.  One way to formalize precision medicine is through a
  treatment regime, which is a sequence of decision rules, one per
  stage of clinical intervention, that map up-to-date patient
  information to a recommended treatment.  An optimal treatment regime is
  defined as maximizing the mean of some cumulative clinical outcome
  if applied to a population of interest.  It is well-known that
  even under simple generative models an optimal treatment regime can
  be a highly nonlinear function of patient information.
  Consequently, a focal point of recent methodological research has
  been the development of flexible models for estimating optimal
  treatment regimes.  However, in many settings, estimation of an
  optimal treatment regime is an exploratory analysis intended to
  generate new hypotheses for subsequent research and not to directly
  dictate treatment to new patients.  In such settings, an estimated
  treatment regime that is interpretable in a domain context may be of
  greater value than an unintelligible treatment regime built using
  `black-box' estimation methods.  We propose an estimator of an
  optimal treatment regime composed of a sequence of decision rules,
  each expressible as a list of ``if-then'' statements that can be
  presented as either a paragraph or as a simple flowchart that is
  immediately interpretable to domain experts.  The discreteness of
  these lists precludes smooth, i.e., gradient-based, methods of
  estimation and leads to non-standard asymptotics.  Nevertheless, we
  provide a computationally efficient estimation algorithm, prove
  consistency of the proposed estimator, and derive rates of
  convergence.  We illustrate the proposed methods using a series of
  simulation examples and application to data from a sequential
  clinical trial on bipolar disorder.

\bigskip\noindent
\emph{Keywords:}  Precision medicine, treatment regimes,
interpretability, decision lists, tree-based methods,
 research-practice gap.
\end{abstract}

\section{Introduction}
\label{sec:intro}
Precision medicine
is now almost universally recognized as a path to delivering the
best possible healthcare \citep[][]{collins2015new,
  ashley2015precision, jameson2015precision}.  Furthermore,
technological advancements and investment in big-data infrastructure
have made it possible to collect, store, and curate large amounts of
patient-level data to inform the practice of precision medicine
\citep[][]{krumholz2014big}.  Quantitative researchers have responded
with a surge of methodological developments aimed at `mathematizing'
precision medicine in the form of treatment regimes, a sequence of
decision rules, one per stage of clinical intervention,  that map
up-to-date patient information to a treatment recommendation; an
optimal treatment regime is defined as maximizing the mean of some
desirable clinical outcome if applied to a population of interest.  It
can be shown that even under the simplest generative models the
optimal regime is a nonlinear function of patient information
\citep[][]{robins2004optimal, schulte2014q-and-a,
  laber2014interactive}; consequently, to avoid model
misspecification, a recent trend is to apply flexible supervised
learning methods to estimate optimal treatment regimes.  These flexible methods
include direct-search using large-margin classifiers
\citep[][]{zhao2012estimating, zhao2014doubly, kang2014combining,
  zhao2015new, xu2015regularized}; $Q$-learning with non-parametric
regression models \citep[][]{qian2011performance,
  zhao2011reinforcement, moodie2013q-learning, kosorokKNN}; and
tree-based methods \citep[][]{zhang2012estimating, laber2015tree, zhang2015using,
  doove2015novel}. Further testament to the popularity of these
methods is that the Journal of the American
Statistical Association's Theory and Methods Invited Paper and
the Case Studies and Applications Invited Paper at the 2016 Joint
Statistical Meetings will feature non-parametric methods for estimating
treatment regimes \citep[][]{zhou2015residual, xu2015bayesian}.
 
Flexible estimation methods mitagate the risk of model
misspecification but potentially at the price of rendering the
estimated regime unintelligible.  This price is may be too high in
settings where the primary role of an estimated optimal regime is to
generate new scientific hypotheses or inform future research.  For
example, in the context of sequential multiple assignment clinical
trials \citep[SMARTs,][]{murphy2005experimental, lei2012smart}
estimation of an optimal treatment regime is typically included as a
secondary, exploratory analysis, as sizing the trial to ensure
high-quality estimation of an optimal regime is complex
\citep[][]{laber2016sizing}.  Tree-based regimes, like regression or
classification trees, offer flexibility while retaining
interpretability.  Here, we propose a method for estimation of an
optimal treatment regime that comprises a sequence of decision rules
each of which is represented as a sequence if-then statements mapping logical
clauses to treatment recommendations.  Decision rules of this form are
a special case of tree-based rules, known as decision lists
\citep[][]{rivest1987learning, marchand2005learning, letham2012building, wang2015falling, zhang2015using}, that
are immediately interpretable in a domain context as they can be
expressed in either flow-chart or paragraph form.  Thus, regimes of
this form are amenable to critique and examination by clinicians and
facilitate collaborative, iterative development of data-driven
precision medicine.  Furthermore, we shall show that despite the
structure imposed by the decision lists, they are sufficiently
expressive so as to provide high-quality regimes even under non-linear
generative models previously used in the literature to illustrate the
value of non-parametric estimation methods.

In addition to the clinical and scientific value of interpretable,
list-based regimes, the proposed work provides a number of important
methodological contributions.  Unlike existing tree-based methods for
estimating optimal treatment regimes, the proposed methodology applies
to problems with an arbitrary number of treatment stages and
treatments per stage.  In principle, robust policy-search
\citep[][]{zhang2013robust} could be used with CART
\citep[][]{breiman1984classification} to estimate a multi-stage,
tree-based treatment regime; however, this method relies on inverse
probability of treatment weighting which rapidly becomes unstable as
the number of treatment stages increases.  A second contribution is
that we prove that the proposed estimator is consistent for the
optimal regime within the class of list-based regimes and derive rates
of convergence for the proposed estimator.  These theoretical results
are non-trivial because the discreteness of the list precludes the use
of standard asymptotic approaches; to our knowledge these are first
results on convergence rates for decision lists and are therefore
of independent interest.  A third contribution is the
proposed estimation algorithm used to construct the decision lists at
each stage.  This algorithm reduces computation time of naive
recursive-splitting algorithm from $O(n^3)$ to $O(n\log\,n)$ where $n$
is the number of subjects in the sample.  Furthermore we modify the
splitting criteria proposed by \citet{zhang2015using} to avoid
(asymptotically) becoming stuck in a local mode.

In  Section~\ref{sec:method}, we describe list-based treatment regimes
and describe our estimation algorithm.  In Section~\ref{sec:theory}, we 
prove consistency of the proposed estimator and derive rates of
convergence.    In Section ~\ref{sec:simulation}, we demonstrate
the finite sample performance of the proposed method using 
simulation experiments.  We illustrate the proposed method
using data from a clinical trial in Section~\ref{sec:data-analysis} 
and make concluding remarks in Section~\ref{sec:discussion}.

\section{Methodology}
\label{sec:method}
\subsection{Framework}
Consider $n$ $i.i.d.$ observations collected
from a sequential clinical trial with $T$ stages;  the proposed
methodology also applies to observational data provided
that standard causal assumptions required for $Q$-learning
are satisfied 
\citep[see][for a statement of these
assumptions]{schulte2014q-and-a}.  
In the assumed setup 
the observed data are
$\{ (\bm{S}_{it}, A_{it}, Y_{it})\,:\, t = 1, \dots, T \}_{i=1}^{n}$,
which comprise $i.i.d.$ trajectories of the form 
$\{ (\bm{S}_{t}, A_{t}, Y_{t})\,:\, t = 1, \dots, T \}$
 where:
$\bm{S}_{t} \in \mathbb{R}^{p_t}$ is a vector of covariates
measured at the beginning of the $t$-th stage;
$A_{t} \in \mathcal{A}_t$ is the treatment
actually received during the $t$-th stage; and
$Y_{t} \in \mathbb{R}$ is a scalar outcome measured at the end of
the $t$-th stage.  Let $m_t = |\mathcal{A}_t|$ denote
the number of available treatment options at the $t$-th stage.
The final outcome of interest is the sum of
immediate outcomes, $Y = \sum_{t=1}^T Y_{t}$.  We assume that
larger values of $Y$ are better.  
Let $\bm{X}_{t}$ denote the information available to the decision
maker
at stage $t$ so that $\bm{X}_1 = \bm{S}_1$ and 
$\bm{X}_t  = (\bm{X}_{t-1}, A_{t}, \bm{S}_t)$ for $t > 1$.  
Let $\mathcal{X}_{t} \subset \mathbb{R}^{d_t}$ be the support of
$\bm{X}_{t}$, where $d_t = \sum_{s=1}^t p_s + 2(t-1)$ is the
dimension of $\bm{X}_{t}$. 

A treatment regime $\bm{\pi} = (\pi_1,\ldots, \pi_T)$ is a sequence
of functions $\pi_t:\mathcal{X}_t\rightarrow \mathcal{A}_t$ so that
under 
$\bm{\pi}$ a patient presenting with $\bm{X}_t = \bm{x}_t$ at stage
$t$  is recommended treatment $\pi_t(\bm{x}_t)$.  For simplicity,
we assume that all treatment are feasible for all patients; the
extension
to patient-specific sets of feasible treatments is straightforward 
\citep[][]{schulte2014q-and-a}.  For any
regime $\bm{\pi}$,  let $\mathbb{E}^{\bm{\pi}}$ denote expectation with 
respect to distribution induced by assigning treatments according
to $\bm{\pi}$.  Given a class of
regimes $\Pi$, an optimal regime satisfies, $\bm{\pi}^{\mathrm{opt}}
\in \Pi$ and
$\mathbb{E}^{\bm{\pi}^{\mathrm{opt}}}Y 
\ge \mathbb{E}^{\bm{\pi}}Y$ for all $\bm{\pi} \in\Pi$.   Our goal
is to construct an estimator of $\bm{\pi}^{\mathrm{opt}}$ when 
$\Pi$ is the class of list-based regimes.  
Each decision rule $\pi_t$ in a list-based regime has  the form:
\begin{align}
\label{eq:list}
&\texttt{If } \bm{x}_t \in R_{t1} \texttt{ then } a_{t1}; \nonumber \\
&\texttt{else if } \bm{x}_t \in R_{t2} \texttt{ then } a_{t2}; \nonumber \\
&\texttt{...} \nonumber \\
&\texttt{else if } \bm{x}_t \in R_{t L_t} \texttt{ then } a_{t L_t},
\end{align}
where: each $R_{t\ell}$ is a subset of $\mathcal{X}_t$ with the
restriction that
$R_{t L_t} = \mathcal{X}_t$;
$a_{t\ell} \in \mathcal{A}_t$;
$\ell=1,\ldots, L_t$; and $L_t$ is the length of $\pi_t$. 
Thus, a compact representation of $\pi_t$ is 
$\{(R_{t\ell}, a_{t\ell})\}_{\ell=1}^{L_t}$.  
To increase
interpretability, we restrict $R_{t\ell}$ to clauses involving
thresholding with at most two covariates, hence
$R_{t\ell}$ is an element of
\begin{align}
\label{eq:form}
\mathcal{R}_{t} = \{
& \mathcal{X}_t,\;
\{\bm{x} \in \mathcal{X}_t: x_{j_1} \leq \tau_1\},\;
\{\bm{x} \in \mathcal{X}_t: x_{j_1} > \tau_1\}, \nonumber \\
& \{\bm{x} \in \mathcal{X}_t: x_{j_1} \leq \tau_1 \text{ and } 
                              x_{j_2} \leq \tau_2\},\;
\{\bm{x} \in \mathcal{X}_t: x_{j_1} \leq \tau_1 \text{ and } 
                            x_{j_2} > \tau_2\}, \nonumber \\
& \{\bm{x} \in \mathcal{X}_t: x_{j_1} > \tau_1 \text{ and }   
                               x_{j_2} \leq \tau_2\},\;
\{\bm{x} \in \mathcal{X}_t: x_{j_1} > \tau_1 \text{ and }   
                            x_{j_2} > \tau_2\}: \nonumber \\ 
& 1 \leq j_1 < j_2 \leq d_t, 
  \tau_1, \tau_2 \in \mathbb{R} \},
\end{align}
where $j_1, j_2$ are indices and $\tau_1, \tau_2$ are thresholds.
We also impose an upper bound, $L_{\mathrm{max}}$,  on list length $L_t$
for all $t$.
Hence, the class of regimes of interest is 
$\Pi = \otimes_{t=1}^{T}\Pi_t$, where $\Pi_t = \{ \{R_{t\ell}, a_{t\ell}\}_{\ell=1}^{L_t}:  
R_{t\ell} \in \mathcal{R}_t, a_{t\ell} \in \mathcal{A}_t, 
L_t \leq L_{\text{max}} \}$.

\begin{remark}
We omit sets of the form  
$\{\bm{x}_t \in \mathcal{X}_t: x_{j_1} \leq \tau_1 \text{ or } x_{j_2} \leq \tau_2 \}$
in the definition of $\mathcal{R}_t$
because such sets are expressible in terms of the sets already in $\mathcal{R}_t$.
For example, the clause ``if $\bm{x}_t \in R_{t1}$ then $a_{t1}$''
with $R_{t1} = \{\bm{x}_t \in \mathcal{X}_t: x_{tj_1} \leq \tau_1 \text{ or } x_{tj_2} \leq \tau_2 \}$
can be written as
``if $\bm{x}_t \in R'_{t1}$ then $a_{t1}$; else if $\bm{x}_t \in R'_{t2}$ then $a_{t1}$''
with $R'_{t1} = \{\bm{x}_t \in \mathcal{X}_t: x_{tj_1} \leq \tau_1\}$
and $R'_{t2} = \{\bm{x}_t \in \mathcal{X}_t: x_{tj_2} \leq \tau_2\}$.
Moreover, the latter form has the benefit of avoiding the measurement of $x_{j_2}$
for subjects satisfying $x_{j_1} \leq \tau_1$, which may be an important consideration
if $x_{j_2}$ refers to some biomarker that is expensive to measure 
\citep[see][for discussion of decision lists and measurement cost]{zhang2015using}.
\end{remark}

\begin{remark}
  Under certain generative models, distinct sets in $\mathcal{R}_t$
  may correspond to the same group of subjects with probability one.
  For example, if $X_{t1}$ takes values in $\{0, 1\}$, the set
  $\{\bm{x} \in \mathcal{X}_t: x_1 \leq 0\}$ and the set
  $\{\bm{x} \in \mathcal{X}_t: x_1 \leq 0.5\}$ correspond to the
  same group of subjects.  To address this issue, it is tempting to
  require the threshold for $x_1$ to take values in the support of
  $X_{t1}$.  Nevertheless, such requirement is not sufficient to
  ensure that different sets in $\mathcal{R}_t$ correspond to
  different groups of subjects.  To see this, suppose
  $(X_{t1}, X_{t2})^\T$ can take three possible values:
  $(0, 0)^\T$, $(1, 0)^\T$ and $(1, 1)^\T$, e.g., if $X_{t1}$ and
  $X_{t2}$ are indicators of two symptoms where the second symptom can
  be present only when the first symptom is present.  In this case, the set
  $\{\bm{x} \in \mathcal{X}_t: x_1 \leq 0\}$ and the set
  $\{\bm{x} \in \mathcal{X}_t: x_1 \leq 0 \text{ and } x_2 \leq 0\}$
 correspond to the same group of subjects.  Therefore, we allow
  the thresholds to take arbitrary values.   In our theoretical
  analysis, we quantify dissimilarity of sets in $\mathcal{R}_t$
  using a distance that accounts for the distribution of
  $\bm{X}_{t}$.
\end{remark}


To estimate $\bm{\pi}^{\mathrm{opt}}$ we combine non-parametric
$Q$-learning with policy-search \citep[see][for a discussion of this
idea in the context of single decision point]{taylor2015reader}.  
To develop our ideas, we first
provide a high-level schematic for our algorithm, then we describe
implementation and modeling details, and finally we 
discuss a computational insight that improves computation time. 
A complete description of our estimation algorithm is lengthy and
technical  and is therefore presented in the Supplemental Materials. 

Define
$Q_T(\bm{x}_T, a_T) = \mathbb{E}(Y_T|\bm{X}_T=\bm{x}_T, A_T =a_T)$.
Then it can be shown that
$\pi_{T}^{\mathrm{opt}} = \arg\max_{\pi \in
  \Pi_T}\mathbb{E}Q_T\left\lbrace \bm{X}_T, \pi\left( \bm{X}_T \right)
\right\rbrace$.
Recursively, for $t=T-1,\ldots, 1$ define
$Q_t(\bm{x}_{t}, a_t) = \mathbb{E}\left[ Y_t + Q_{t+1}\left\lbrace
    \bm{X}_{t+1}, \pi_{t+1}^{\mathrm{opt}}\left(\bm{X}_{t+1}\right)
  \right\rbrace \big| \bm{X}_{t}=\bm{x}_t, A_t=a_t \right]$
and subsequently it can be shown that
$\pi_{t}^{\mathrm{opt}} = \allowbreak
\arg\max_{\pi_t\in\Pi_t}\mathbb{E} Q_{t}\left\lbrace \bm{X}_t,
  \pi_t(\bm{X}_t)\right\rbrace$
\citep[][]{schulte2014q-and-a}.  For each $t=1,\ldots, T$ let
$\mathcal{Q}_t$ denote a postulated class of models for $Q_t$.  
$Q$-learning with policy-search follows directly from the foregoing
definitions, a schematic is as follows.

\begin{itemize}
  \item[(S1)]  Construct an estimator of $Q_T$ in $\mathcal{Q}_T$,
    e.g.,  one could use penalized least squares   
$\widehat{Q}_{T} =
\arg\min_{Q_T\in\mathcal{Q}_T}\sum_{i=1}^{n}\left\lbrace
Y_{Ti} - Q_{T}(\bm{X}_{Ti},. A_{Ti})
\right\rbrace^2 + \mathcal{P}_T(Q_T)$, where $\mathcal{P}_T(Q_T)$ is a penalty
on the complexity of $Q_T$.   Define $\widehat{\pi}_{T} =
\arg\max_{\pi\in\Pi_T} \sum_{i=1}^{n}\widehat{Q}_{T}\left\lbrace
\bm{X}_{Ti}, \pi_T(\bm{X}_{Ti})
\right\rbrace$.  
\item[(S2)] Recursively, for $t=T-1, \ldots, 1$ construct an estimator
of $Q_t$ in $\mathcal{Q}_t$, say $\widehat{Q}_t$, e.g.,  
$$\widehat{Q}_{t} =
\arg\min_{Q_t\in\mathcal{Q}_t}\sum_{i=1}^{n}\left\lbrace
Y_{ti} + \widehat{Q}_{t+1}\left\lbrace
\bm{X}_{(t+1)i}, \widehat{\pi}_{t+1}(\bm{X}_{(t+1)i})
\right\rbrace
 - Q_{t}(\bm{X}_{ti},. A_{ti})
\right\rbrace^2 + \mathcal{P}_t(Q_t),$$ where $\mathcal{P}_t(Q_t)$ is a penalty
on the complexity of $Q_t$.  Define $\widehat{\pi}_t =\\
\arg\max_{\pi_t\in\Pi_t} \sum_{i=1}^{n}\widehat{Q}_{t}\left\lbrace
\bm{X}_{ti}, \pi_t(\bm{X}_{ti})
\right\rbrace$. 
\end{itemize}


Implementation of the preceding schematic requires a choice of models
for the $Q$-functions, a means of constructing an estimator within
this class, and an algorithm for computing
$\arg\max_{\pi_t\in\Pi_t}\sum_{i=1}^{n}\widehat{Q}_{t}\left\lbrace
  \bm{X}_{ti}, \pi_t\left( \bm{X}_{ti} \right) \right\rbrace$.
In our implementation, we use kernel ridge regression with an extended
Gaussian kernel to construct estimators of the $Q$-functions and a
greedy stepwise algorithm to approximate $\widehat{\pi}_{t}$ from the estimated
$Q$-functions.


\subsection{Kernel Ridge Regression}
\label{subsec:krr}

We use kernel ridge regression to estimate the $Q$-functions.
Starting with the last stage, let $K_T(\cdot, \cdot)$ be a symmetric
and positive definite function from
$\mathbb{R}^{d_T} \times \mathbb{R}^{d_T}$ to $\mathbb{R}$, and 
let $\mathbb{H}_T$ be the corresponding
reproducing kernel Hilbert space (RKHS).  
In our implementation,
we employ an extension of the
Gaussian kernel that employs different scaling factors in different
variables:
$ K_T(\bm{x}, \bm{z}) = \exp\left\{ -\sum_{j=1}^{d_T} \gamma_{Tj} (x_j
  - z_j)^2 \right\} $,
where $\bm{\gamma}_T = (\gamma_{T1}, \dots, \gamma_{T d_T})^\T$ is a
tuning parameter and $\gamma_{Tj} > 0$ for all $j$.   
For each $a \in \mathcal{A}_T$, 
we estimate $Q_T(\cdot, a)$ via penalized least squares
\[
\wh{Q}_T(\cdot, a) = \argmin_{f \in \mathbb{H}_T} 
\frac{1}{n_{Ta}} \sum_{i \in \mathcal{I}_{Ta}} \{Y_{iT} - f(\bm{X}_{iT})\}^2
+ \lambda_T \lVert f \rVert_{\mathbb{H}_T}^2,
\]
where $\mathcal{I}_{Ta} = \{i: A_{iT} = a\}$, $n_{Ta} =
|\mathcal{I}_{Ta}|$,
and $\lambda_{T}> 0$ is a tuning parameter.  
Let
$\bm{Y}_{Ta} = (Y_{iT})_{i \in \mathcal{I}_{Ta}}$ and
$\bm{K}_{Ta} = \{K(\bm{X}_{iT}, \bm{X}_{jT})\}_{i,j \in \mathcal{I}_{Ta}}$.
By the representer theorem \citep{kimeldorf1971some}, 
$\wh{Q}_T(\bm{x},a) = 
\sum_{i \in \mathcal{I}_{Ta}} K_T(\bm{x}, \bm{X}_{iT}) \wh{\beta}_{iTa}$,
where
$\wh{\bm{\beta}}_{Ta} = (\wh{\beta}_{iTa})_{i \in \mathcal{I}_{Ta}}$ satisfy
$\wh{\bm{\beta}}_{Ta} = \argmin_{\bm{\beta}} 
\lVert \bm{Y}_{Ta} - \bm{K}_{Ta} \bm{\beta} \rVert^2
+ n_{Ta} \lambda_T \bm{\beta}^\T \bm{K}_{Ta} \bm{\beta}$.  
Define $\widehat{\pi}_{T} =\arg\max_{\pi_T\in\Pi_T}\sum_{i=1}^{n}
\widehat{Q}_{T}\left\lbrace \bm{X}_{Ti},
  \pi_{T}(\bm{X}_{Ti})\right\rbrace$.  

Similarly, for each $t < T$ let 
$\mathbb{H}_t$ be the RKHS induced by the kernel
$ K_t(\bm{x}, \bm{z}) = \exp\left\{
-\sum_{j=1}^{d_t} \gamma_{tj} (x_j - z_j)^2
\right\} $,
and
$\bm{\gamma}_t = (\gamma_{t1}, \dots, \gamma_{t d_t})^\T$
is a tuning parameter.
Recursively, for each $t < T$, $a_t \in \mathcal{A}_t$, 
estimate $Q_t(\cdot, a)$ by
\[
\wh{Q}_t(\cdot, a) = \argmin_{f \in \mathbb{H}_t} 
\frac{1}{n_{ta}} \sum_{i \in \mathcal{I}_{ta}} \left[
Y_{it} + \wh{Q}_{t+1}\{\bm{X}_{i,t+1}, \wh{\pi}_{t+1}(\bm{X}_{i,t+1})\} - f(\bm{X}_{it}) \right]^2 
+ \lambda_t \lVert f \rVert_{\mathbb{H}_t}^2,
\]
where 
$\mathcal{I}_{ta} = \{i: A_{it} = a\}$,
$n_{ta} = |\mathcal{I}_{ta}|$,
$\mathbb{H}_t$ is an RKHS induced by the kernel
$ K_t(\bm{x}, \bm{z}) = \exp\left\{
-\sum_{j=1}^{d_t} \gamma_{tj} (x_j - z_j)^2
\right\} $,
and $\lambda_t$,
$\bm{\gamma}_t = (\gamma_{t1}, \dots, \gamma_{t d_t})^\T$
are tuning parameters.

\subsection{Construction of Decision Lists}
\label{subsec:dl}
In addition to a method for estimating the $Q$-functions, the proposed
method requires a method for computing
$\arg\max_{\pi_t\in\Pi_t}\sum_{i=1}^{n} \widehat{Q}_{t}\left\lbrace
  \bm{X}_{ti}, \widehat{\pi}_t(\bm{X}_{ti}) \right\rbrace$
where $\Pi_t$ is the space of list-based decision rules defined
previously.  Any element in $\Pi_t$ can be expressed as
$\{(R_{t\ell}, a_{t\ell})\}_{\ell=1}^{L_t}$,  however, simultaneous
optimization over all regions and treatments is not computationally 
feasible except in very small problems.  Instead, we propose an
algorithm that constructs $\widehat{\pi}_{t}$ using a greedy 
optimization procedure that optimizes one clause in
$\widehat{\pi}_{t}$ at a time; unlike many greed algorithms,  the
proposed method is consistent for the global maximizer.  
To provide intuition, we describe in detail the first two steps of this greedy
algorithm before stating it in more general terms.  

\subsubsection{Estimation of the first clause}
Define $\widehat{\pi}_{t}^{Q}$ to be map 
$\bm{x}_{t} \mapsto
\arg\max_{a_t\mathcal{A}_t}\widehat{Q}_{t}(\bm{x}_{t}, a_t)$; thus,
$\widehat{\pi}_{t}^{Q}$ is an optimal estimated decision rule at stage
$t$ using non-parametric $Q$-learning.   
To estimate the first clause $(R_{t1}, a_{t1})$ in $\pi_t$,
we consider the following decision-list parameterized by $R$ and $a$:
\begin{align}
\label{eq:algo-regime1}
&\texttt{If } \bm{x}_t \in R \texttt{ then } a; \nonumber \\
&\texttt{else if } \bm{x}_t \in \mathcal{X}_t \texttt{ then } \wh{\pi}_t^Q(\bm{x}_t).
\end{align}
If all subjects follow \eqref{eq:algo-regime1}, the estimated mean outcome is
\begin{equation}
\label{eq:algo-outcome1}
\frac{1}{n} \sum_{i=1}^n \left[ I(\bm{X}_{it} \in R) \wh{Q}_t(\bm{X}_{it}, a)
  + I(\bm{X}_{it} \notin R) \wh{Q}_t\{\bm{X}_{it}, \wh{\pi}_t^Q(\bm{X}_{it})\} \right].
\end{equation}
Hence, we can pick the maximizer of \eqref{eq:algo-outcome1} as the
estimator of $(R_{t1}, a_{t1})$.  Note that the difference between the
estimated mean outcome under $\wh{\pi}_t^Q$ and that under
\eqref{eq:algo-regime1} is
$ n^{-1} \sum_{i=1}^n I(\bm{X}_{it} \in R) \left[
  \wh{Q}_t\{\bm{X}_{it}, \wh{\pi}_t^Q(\bm{X}_{it})\} -
  \wh{Q}_t(\bm{X}_{it}, a) \right] $,
which measures the decrease in the estimated mean outcome when some
part of $\wh{\pi}_t^Q$ is replaced with an if-then clause.  This represents
the price paid for interpretability, and by maximizing
\eqref{eq:algo-outcome1}, we minimize this price.  

To improve generalization performance, we add a complexity penalty
to \eqref{eq:algo-regime1}; in addition to encouraging parsimonious
lists,  we shall see that this penalty also ensures a unique
maximizer.  Define $V(R) \in \left\lbrace 0,1,3\right\rbrace$ 
to be the number of covariates needed to check inclusion in $R$.  
We define $\widehat{R}_1$ and $\widehat{a}_1$ as the maximizers over
$R$ and $a$ in 
\begin{multline}
\label{eq:algo-outcome1a}
\frac{1}{n} \sum_{i=1}^n \left[ I(\bm{X}_{it} \in R) \wh{Q}_t(\bm{X}_{it}, a)
+ I(\bm{X}_{it} \notin R) \wh{Q}_t\{\bm{X}_{it}, \wh{\pi}_t^Q(\bm{X}_{it})\} \right]\\
{} + \zeta \left\{ \frac{1}{n} \sum_{i=1}^n I(\bm{X}_{it} \in R) \right\} + \eta \{ 2 - V(R) \},
\end{multline}
where $\zeta, \eta > 0$ are tuning parameters.  Thus, the first
penalty term rewards regions $R$ with large mass relative to the 
distribution of $\bm{X}_{t}$ whereas the second term rewards
regions that involve fewer covariates.   
Moreover, we impose the constraint 
$n^{-1} \sum_{i=1}^n I(\bm{X}_{it} \in R) > 0$
to avoid searching over vacuous clauses.

\subsubsection{Estimation of the second clause}
To estimate the second clause
we consider the following decision list parameterized by $R$ and $a$
\begin{align}
\label{eq:algo-regime2}
&\texttt{If } \bm{x} \in \wh{R}_{t1} \texttt{ then } \wh{a}_{t1}; \nonumber \\
&\texttt{else if } \bm{x} \in R \texttt{ then } a; \nonumber \\
&\texttt{else if } \bm{x} \in \mathcal{X}_t \texttt{ then } \wh{\pi}_t^Q(\bm{x}).
\end{align}
If all the subjects follow the regime \eqref{eq:algo-regime2}, the
estimated mean outcome is
\begin{multline}
\label{eq:algo-outcome2}
\frac{1}{n} \sum_{i=1}^n I(\bm{X}_{it} \in \wh{R}_{t1}) 
\wh{Q}_t(\bm{X}_{it}, \wh{a}_{t1}) + 
\frac{1}{n} \sum_{i=1}^n I(\bm{X}_{it} \notin \wh{R}_{t1}, \bm{X}_{it} \in R) 
\wh{Q}_t(\bm{X}_{it}, a) \\
{} + \frac{1}{n} \sum_{i=1}^n I(\bm{X}_{it} \notin \wh{R}_{t1}, \bm{X}_{it} \notin R) 
\wh{Q}_t\{\bm{X}_{it}, \wh{\pi}_t^Q(\bm{X}_{it})\}.
\end{multline}
Note that the first term in \eqref{eq:algo-outcome2} can be dropped
during the optimization as it is independent of $R$ and $a$.  As in
\eqref{eq:algo-outcome1a},  we maximize the penalized criterion
\begin{multline}
\label{eq:algo-outcome2a}
\frac{1}{n} \sum_{i=1}^n I(\bm{X}_{it} \notin \wh{R}_{t1}, \bm{X}_{it} \in R) 
\wh{Q}_t(\bm{X}_{it}, a)
+ \frac{1}{n} \sum_{i=1}^n I(\bm{X}_{it} \notin \wh{R}_{t1}, \bm{X}_{it} \notin R) 
\wh{Q}_t\{\bm{X}_{it}, \wh{\pi}_t^Q(\bm{X}_{it})\} \\
{} + \zeta \left\{ \frac{1}{n} \sum_{i=1}^n I(\bm{X}_{it} \notin \wh{R}_{t1}, \bm{X}_{it} \in R) \right\} 
+ \eta \{ 2 - V(R) \}
\end{multline}
with respect to $R \in \mathcal{R}_t, a \in \mathcal{A}_t$ and subject
to the constraint
$n^{-1} \sum_{i=1}^n I(\bm{X}_{it} \notin \wh{R}_{t1}, \bm{X}_{it} \in
R) > 0$.
We continue this procedure until either every subject gets a
recommended treatment, namely $R_{t\ell} = \mathcal{X}_t$ for some
$\ell$, or the maximum length is reached, 
$\ell = L_{\text{max}}$.  If the maximum list length is reached, 
we set
$R_{tL_{\max}} = \mathcal{X}_t$ to ensure that the regime applies to every
subject and choose $\widehat{a}_{tL_{\max}}$ be the estimated 
best single treatment for all remaining subjects.

\subsubsection{Estimation of all clauses}

An algorithmic description of the proposed algorithm is given below.
Additional
computational details, including the time complexity, are given
in the next section.  
\begin{enumerate}
\item[]Step~1. Initialize $\ell = 1$.

\item[]Step~2. If $\ell < L_{\text{max}}$, compute
\begin{multline}
\label{eq:sample-objective}
(\wh{R}_{t\ell}, \wh{a}_{t\ell}) = 
\argmax_{R \in \mathcal{R}_t , a \in \mathcal{A}_t}\;
\frac{1}{n} \sum_{i=1}^n \bigg[
 I(\bm{X}_{it} \in \wh{G}_{t\ell}, \bm{X}_{it} \in R) \wh{Q}_t(\bm{X}_{it}, a)  \\
 {} + I(\bm{X}_{it} \in \wh{G}_{t\ell}, \bm{X}_{it} \notin R) \wh{Q}\{\bm{X}_{it}, \wh{\pi}_t^Q(\bm{X}_{it})\}
\bigg]  \\
 {} + \zeta \bigg\{ \frac{1}{n} \sum_{i=1}^n I(\bm{X}_{it} \in \wh{G}_
{t\ell}, \bm{X}_{it} \in R) \bigg\}
+ \eta \{2 - V(R)\}
\end{multline}
subject to $n^{-1} I(\bm{X}_{it} \in \wh{G}_
{t\ell}, \bm{X}_{it} \in R) > 0$,
where 
$\wh{G}_{t1} = \mathcal{X}_t$,
$\wh{G}_{t\ell} = \mathcal{X}_t \setminus
\big( \bigcup_{k < \ell} \wh{R}_{tk} \big)$ for $\ell \geq 2$,
and $V(R) \in \{0, 1, 2\}$ 
is the number of variables used to define $R$.
It is easy to verify that the objective function above
reduces to \eqref{eq:algo-outcome1a} when $\ell=1$
and to \eqref{eq:algo-outcome2a} when $\ell=2$.
If $\ell = L_{\text{max}}$, set
\begin{equation}
\label{eq:sample-objective2}
(\wh{R}_{t\ell}, \wh{a}_{t\ell}) = 
\argmax_{R \in \mathcal{R}_t, a \in \mathcal{A}_t}
\frac{1}{n} \sum_{i=1}^n I(\bm{X}_{it} \in \wh{G}_{t\ell}) \wh{Q}(\bm{X}_{it}, a) + \eta\{ 2 - V(R) \}.
\end{equation}
The solution of \eqref{eq:sample-objective2}
must satisfy $V(R) = 0$ and hence $\wh{R}_{t\ell} = \mathcal{X}_t$.
Consequently the last clause does apply to all the rest subjects.

\item[]Step~3. If $\wh{R}_{t\ell} = \mathcal{X}_t$ then go to Step~4;
otherwise, increase $\ell$ by $1$ and repeat Steps~2~and~3.

\item[]Step~4. Output $\wh{\pi}_t = \{(\wh{R}_{tk}, \wh{a}_{tk})\}_{k=1}^{\ell}$.
\end{enumerate}

\subsubsection{Implementation details and time complexity}
Computation of $(\wh{R}_{t\ell}, \wh{a}_{t\ell})$ in
\eqref{eq:sample-objective} requires special attention because the
objective function is non-differentiable and non-convex.  We first
argue that brute-force search can be used to obtain
$(\wh{R}_{t\ell}, \wh{a}_{t\ell})$.  Although $\mathcal{R}_t$ contains
infinitely many elements, because the objective function in
\eqref{eq:sample-objective} is piecewise linear, for each covariate it
suffices to consider $n$ thresholds located at the order statistics of
that covariate.  Hence, the number of thresholds to enumerate is of
order $n^2$.  In addition, there are $d_t (d_t+1) / 2$ choices for
variables in $R_{t\ell}$, $m_t$ choices for $a_{t\ell}$, and each
evaluation of \eqref{eq:sample-objective} takes $O(n)$ operations.
Therefore, the time complexity for finding
$(\wh{R}_{t\ell}, \wh{a}_{t\ell})$ via brute-force search is
$O(n^3 d_t^2 m_t)$.  Unfortunately, the factor $n^3$ is overwhelming
even when the sample size $n$ is moderate.

Instead of brute-force search, we propose a novel algorithm
to compute $(\wh{R}_{t\ell}, \wh{a}_{t\ell})$, 
that substantially reduces the time complexity.
Note that the $n^3$ factor is due to the enumeration of thresholds and the evaluation of the objective function in \eqref{eq:sample-objective}.
By reorganizing the enumeration and evaluation,
the proposed algorithm reduces the $n^3$ factor to $n \log n$. Thus,
with this implementation, the proposed algorithm can be applied to
large datasets; this is appealing in an era of `big-data' where large 
data-bases are being mined to generate hypotheses about precision
medicine.   
\begin{proposition}
\label{thm:time-comp}
For each $t$ and $\ell$, 
the estimator $(\wh{R}_{t\ell}, \wh{a}_{t\ell})$ in \eqref{eq:sample-objective} 
can be computed within $O(n \log n \, d_t^2 m_t)$ operations.
\end{proposition}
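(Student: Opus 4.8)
The plan is to turn \eqref{eq:sample-objective} into a weighted subset‑selection problem over $\mathcal{R}_t$ and then exploit the fact that every region in $\mathcal{R}_t$ is cut out by at most two covariates and two thresholds, so that an $O(\log n)$‑per‑step order‑statistic data structure replaces the brute‑force enumeration of $O(n^2)$ threshold pairs. First I would fix $a\in\mathcal{A}_t$ and rewrite $I(\bm X_{it}\in\wh G_{t\ell},\bm X_{it}\notin R)=I(\bm X_{it}\in\wh G_{t\ell})-I(\bm X_{it}\in\wh G_{t\ell},\bm X_{it}\in R)$ in \eqref{eq:sample-objective}; since $n^{-1}\sum_i I(\bm X_{it}\in\wh G_{t\ell})\wh Q_t\{\bm X_{it},\wh\pi_t^Q(\bm X_{it})\}$ does not depend on $(R,a)$, maximizing \eqref{eq:sample-objective} is equivalent to maximizing
\[
\sum_{i:\,\bm X_{it}\in\wh G_{t\ell}\cap R} w_i(a)\;+\;n\eta\{2-V(R)\},\qquad w_i(a)=\wh Q_t(\bm X_{it},a)-\wh Q_t\{\bm X_{it},\wh\pi_t^Q(\bm X_{it})\}+\zeta,
\]
over $R\in\mathcal{R}_t$, $a\in\mathcal{A}_t$, subject to $\wh G_{t\ell}\cap R\neq\emptyset$. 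Given the $\wh Q_t$‑predictions and $\wh\pi_t^Q$, all $nm_t$ weights $w_i(a)$ are formed in $O(nm_t)$ operations. Because $\eta\{2-V(R)\}$ is constant on each of the three groups $V(R)\in\{0,1,2\}$, I would solve the inner maximization separately within each group and compare the three resulting candidate clauses.

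For $V(R)=0$ the only region is $R=\mathcal{X}_t$, so I just compute $\sum_{i:\bm X_{it}\in\wh G_{t\ell}}w_i(a)$ for each $a$, costing $O(nm_t)$. For $V(R)=1$ a region is $\{x_j\le\tau\}$ or $\{x_j>\tau\}$; I would sort the subjects by each covariate once (reused across $a$), costing $O(n\log n\,d_t)$, and then for each of the $O(d_tm_t)$ pairs $(j,a)$ make a single pass over the sorted list, maintaining the running prefix sum of $w_i(a)$ over active subjects together with its running maximum and, symmetrically, the running suffix maximum; this yields the optimal threshold and orientation in $O(n)$ time, for a total of $O(n\log n\,d_t+n\,d_tm_t)$, which is within the claimed bound. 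The key point enabling both of these (and the next step) is that the objective depends on the thresholds only through which active subjects lie in $R$, so it suffices to let each threshold range over the observed covariate values.

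The substantive step is $V(R)=2$, where $R$ is one of the four ``corners'' determined by covariates $x_{j_1},x_{j_2}$ and thresholds $\tau_1,\tau_2$. I would fix a variable pair $(j_1,j_2)$, a treatment $a$, and one corner, say $\{x_{j_1}\le\tau_1,x_{j_2}\le\tau_2\}$ (the other three are handled by reversing the sweep direction and/or swapping prefix with suffix), and sweep $\tau_1$ upward through the observed values of $x_{j_1}$ among active subjects, each increment making exactly one further subject ``eligible'' in the $j_1$ direction. Over the positions of the active subjects sorted by $x_{j_2}$ I would maintain a segment tree holding, at each leaf $p$, the prefix sum $\sum_{p'\le p}v[p']$ of the array $v$ with $v[p]=w_i(a)$ if the subject at position $p$ is currently eligible and $v[p]=0$ otherwise: inserting subject $i$ is a lazily‑propagated range‑add of $w_i(a)$ over the suffix of positions $\ge p_i$ together with activation of leaf $p_i$, and the best admissible $\tau_2$ for the current $\tau_1$ is the maximum stored value over activated leaves. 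Restricting to activated leaves enforces $\wh G_{t\ell}\cap R\neq\emptyset$, since such a prefix contains at least one eligible subject, and the step‑function structure of the objective in $\tau_2$ guarantees its maximizer is attained at an activated leaf. Each insertion and each query costs $O(\log n)$ and there are at most $n$ of each, so fixing $(j_1,j_2,a)$ and one corner costs $O(n\log n)$; summing over the $O(d_t^2)$ variable pairs, the $m_t$ treatments, and the four corners gives $O(n\log n\,d_t^2m_t)$.

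Combining the three cases and taking the best of the (at most) three candidate clauses yields $(\wh R_{t\ell},\wh a_{t\ell})$ in $O(n\log n\,d_t^2m_t)$ total, as claimed. I expect the $V(R)=2$ case to be the main obstacle: a direct implementation enumerates $O(n^2)$ threshold pairs and re‑evaluates the objective in $O(n)$ time each, giving $O(n^3)$ per variable pair, and the crux is to verify that the threshold sweep backed by the range‑add/range‑max segment tree correctly and efficiently maintains the optimal second threshold (including the non‑vacuousness constraint and all four corner orientations) in $O(\log n)$ amortized time per step, which is exactly what reduces the $n^3$ factor to $n\log n$.
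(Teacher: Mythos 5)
Your proposal is correct and follows essentially the same route as the paper's proof: after reducing \eqref{eq:sample-objective} to maximizing a weighted sum $\sum_{i\in \wh{G}_{t\ell}\cap R} w_i(a)$ plus the complexity penalty, you handle one-covariate regions by a sorted linear sweep of prefix/suffix sums and two-covariate regions by sweeping the first threshold while maintaining a balanced binary tree over the second covariate's order statistics that supports $O(\log n)$ insertion and best-prefix-sum queries, then loop over the $O(d_t^2)$ variable pairs and $m_t$ treatments. Your segment tree with lazy range-add and a max over activated leaves is an equivalent realization of the paper's binary tree storing subtree sums and best thresholding sums, and your explicit treatment of the non-vacuousness constraint and the corner orientations matches the paper's (the only detail the paper spells out that you gloss over is the aggregation of tied covariate values, which is a minor implementation point).
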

The proof of this result is constructive but technical so we provide a
sketch of the main idea here and relegate the remaining details to the
Supplemental Materials.  Suppose $R$ involves only one covariate:
$R = \{\bm{x}: x_j \leq \tau\}$.  For fixed $t$, $j$ and $a$, we
observe that, up to a constant independent of $\tau$, the objective
function in \eqref{eq:sample-objective} is of the form
$F(\tau) = n^{-1} \sum_{i=1}^n I(X_{ijt} \leq \tau) U_i + I(X_{ijt} >
\tau) V_i$,
where $U_i$ and $V_i$ are constants.  As discussed previously, we
need only to compute $F(\tau)$ for $\tau$ equal to observed covariate values, $X_{ijt}$.
Let $i_1 < \dots < i_n$ be a permutation of $1, \dots, n$ such that
$X_{i_1 jt} \leq \dots \leq X_{i_n jt}$.  Then, it can be shown that
$F(X_{i_s jt}) = F(X_{i_{s-1} jt}) + U_{i_s} - V_{i_s}$, $s \geq 2$.
Hence, one can enumerate all possible values for $\tau$ and evaluate
$F(\tau)$ in $O(n)$ time, in contrast to $O(n^2)$ time for brute-force
search.  A similar recursive relationship can be established if $R$ is
of the form $\{\bm{x}: x_j > \tau\}$.  When $R$ involves two
covariates, we combine this sorting technique with binary search tree
\citep{cormen2009introduction}, which enables us to find the
thresholds in $O(n \log n)$ time.

\begin{remark}
The proposed algorithm differs from that in \citet{zhang2015using}
in two important ways.
First, the two algorithms maximize different objective functions.
In 
\citet{zhang2015using},
regime \eqref{eq:algo-regime1} is replaced by
``if $\bm{x} \in R$ then $a$; else if $\bm{x} \in \mathcal{X}_t$ then $a'$'',
where $R$, $a$ and $a'$ are obtained by maximizing the estimated mean outcome
under such a regime.
However, this criterion fails to account for subsequent splits in the
decision lists and can thereby get stuck in  a local mode. In
contrast, the proposed algorithm approximates the remaining list with
the estimated optimal regime using non-parametric $Q$-learning.  
To illustrate the difference between the two objective functions,
consider a scenario with $T=1$ stage, 
a single covariate $S_1 \sim \text{Uniform}(-2, 2)$ 
and suppose that $\wh{Q}_1(x, a) = Q_1(x, a) = ax(x-1), a \in \{-1, 1\}$.
Assume $\zeta$ and $\eta$ are small but positive.
Then the solution of \eqref{eq:sample-objective} is
$\wh{R}_{11} = \{x: x \leq \tau\}$ and $\wh{a}_{11} = 1$
with $\tau \approx 0$. 
Nevertheless, if the term $\wh{\pi}^Q_t(\bm{X}_{it})$
were replaced by a fixed treatment $a' \neq a$,
the solution would be $\wh{R}_{11} = \mathcal{X}_1$ and $\wh{a}_{11}=1$,
leading to a suboptimal  regime. 
A second difference between the proposed algorithm and the one proposed 
in \citet{zhang2015using} is that the latter requires a pre-specified
set of candidate thresholds for each predictor,
and its time complexity is the same as brute-force search
if we use all the unique values as candidate thresholds. 
\end{remark}

\section{Theoretical Results}
\label{sec:theory}
For each $\ell = 1, 2, \dots$, define the population analogs of 
\eqref{eq:sample-objective} and \eqref{eq:sample-objective2} 
as follows,
$(R^{\ast}_{t\ell}, a^{\ast}_{t\ell}) = \argmax_{R \in \mathcal{R}_t,
  a \in \mathcal{A}_t} \Psi_{t\ell}(R, a)$, where
\begin{multline}
\label{eq:population-objective}
\Psi_{t\ell}(R,a) = E\left[ I(\bm{X}_t \in G^{\ast}_{t\ell}, \bm{X}_t
  \in R) Q(X_t, a) + I(\bm{X}_t \in G^{\ast}_{t\ell}, \bm{X}_t \notin
  R) Q\left\{\bm{X}_t, \pi^{Q}_{t}(\bm{X}_t)\right\} \right] \\ +
\zeta \Pr(\bm{X}_t \in G^{\ast}_{t\ell}, \bm{X}_t \in R) + \eta
\{2-V(R)\},
\end{multline}
and
$G^{\ast}_{t\ell} = \mathcal{X}_t$ if $\ell=1$ 
and $G_{t\ell}^{\ast} = \mathcal{X}_t \setminus (\cup_{k<\ell} R^{\ast}_{t k})$ otherwise,
until either $R^{\ast}_{t\ell} = \mathcal{X}_t$ or $\ell = L_{\text{max}}$.
In the latter case, instead of \eqref{eq:population-objective} we define
\begin{equation}
\label{eq:population-objective2}
\Psi_{t\ell}(R, a) = E \left\{ I(X_t \in G^{\ast}_{t\ell}) Q(\bm{X}_t, a) \right\}
+ \eta\{2-V(R)\}.
\end{equation}
Let $L_t^* = \min\{\ell: R_{t\ell}^* = \mathcal{X}_t\}$ and
$\pi_t^* = \big\{ (R^{\ast}_{t\ell}, a^{\ast}_{t\ell}) \big\}_{\ell =
  1}^{L_t^*}$.
In \eqref{eq:population-objective} and
\eqref{eq:population-objective2}, the $Q$-functions are defined as
$Q_T(\bm{x}, a) = E(Y_T | \bm{X}_T=\bm{x}, A_T=a)$,
$Q_t(\bm{x}, a) = E[Y_t + Q_{t+1} \{ \bm{X}_{t+1}, \pi^{\ast}_{t+1}
(\bm{X}_{t+1})\} | \bm{X}_t= \bm{x}, A_t=a ]$
for $t=T-1, \dots, 1$. Furthermore, let
$\pi_{t}^{Q}(\bm{x}) = \argmax_{a \in \mathcal{A}_t} Q_t(\bm{x}, a)$
for all $t$.

We assume that all the covariates and outcomes are bounded. 
This is a common assumption in the context  of nonparametric
regression; the extension to include unbounded covariates is possible but at the
expense of additional complexity.  
\begin{assumption}
There exists $b>0$ such that $\lVert \bm{X}_t \rVert_\infty \leq b$ 
and $|Y_t| \leq b$ with probability one for all $t=1, \dots, T$. 
\end{assumption}
We also assume positivity \cite[][]{robins2004optimal}, which ensures that
$Q_t(\bm{x}, a)$ is well-defined for all $a\in
\mathcal{A}_t$.
\begin{assumption}
For each $t$ and $a \in \mathcal{A}_t$, $\pr(A_t = a | \bm{X}_t) \geq
\varpi$ almost surely for some positive constant
$\varpi$.\end{assumption}

A crucial intermediate step in deriving the asymptotic behavior of
$\wh{\pi}_t$'s is establishing convergence of $\wh{Q}_t$ to $Q_t$; to
facilitate this step we require a certain degree of smoothness in
$Q_t$.
A common means of imposing smoothness is to assume differentiability
\citep[see, e.g.,][]{stone1982optimal}.
However, the non-differentiable maximization operator that is implicit 
in the definition of the $Q$-functions forces us to consider a weaker
notion of smoothness. 
 Denote $B_t =
[-b, b]^{d_t} \subset \mathbb{R}^{d_t}$.  For any function $f: B_t \to
\mathbb{R}$, define the $r$-th difference $\Delta^{r}_{\bm{h}}(f)$ by
$\Delta^{r}_{\bm{h}}(f)(\bm{x}) = \sum^{r}_{i=0}
\binom{r}{i}(-1)^{r-i}f(\bm{x} + i\bm{h})$ if $\bm{x} \in B_{t, r,
  \bm{h}}$ and $0$ otherwise, where $r$ is a positive integer,
$\bm{h}=(h_1, \dots, h_{d_t})^T$, $h_j \geq 0$ for all $j=1, \dots,
d_t$, and $B_{t, r, \bm{h}} = \{ \bm{x} \in B_t: \bm{x} + i \bm{h} \in
B_t \text{ for all } i \leq r \}$.  Define the $r$-th modulus of
smoothness of $f$ by $\omega_r(f, s) = \sup_{\lVert \bm{h} \rVert_2
  \leq s} \sup_{\bm{x} \in B_t} \lvert \Delta^{r}_{\bm{h}}(f)(\bm{x})
\rvert$.  The definition above is similar to \citet[Definition
  2.1]{eberts2013optimal}, but replaces the $L_p$ norm with the
supremum norm.  This modification allows us to drop the requirement
that $\bm{X}_t$ have a density with respect to  Lebesgue measure.
Thus, our analysis applies when $\bm{X}_t$ contains discrete
covariates.

The concept of modulus of smoothness generalizes the concept of
differentiability.  To see this, consider an example where $d=1$. We
observe that $\lim_{h \to 0} h^{-1}\Delta_{h}^{1}(f) (x) =
f'(x)$.
Suppose $|f'(x)|$ is bounded, then for sufficiently small $h$, there
exists a constant $C_f$ such that
$|\Delta^{1}_{h}(f)(x)| \leq C_f|h|$. Hence, any continuously
differentiable function $f$, defined on a finite interval, satisfies
$\omega_1(f, s)= O(s)$, as $s \to 0$.  Generally, if $f$ is $r$-times
continuously differentiable, then $w_r(f, s) = O(s^r)$ as $s \to 0$.
In addition, some non-differentiable functions also satisfy this
condition. Consider $f(x) = |x|$ and $f(x) = \max(x, 0)$.  It is easy
to verify that $|\Delta_{h}^{1}(f)(x)| \leq |h|$ for any $x$.  Thus
$f(x)=|x|$ and $f(x)=\max(x,0)$ also satisfy $\omega_1(f, s) = O(s)$
though $f$ is not differentiable at $0$.  We make the following
assumption regarding the smoothness of the $Q$-functions.

\begin{assumption}
For each $t$, there exists a positive integer $r_t$ such that 
$\omega_r\{Q^{\ast}_t(\cdot, a), s\} = O(s^{r_t})$ as $s \to 0$,
for any $a \in \mathcal{A}_t$. 
\end{assumption}

In order to study the probabilistic convergence of $\wh{\pi}_t$ to
$\pi^{\ast}_{t}$, it is necessary to define an appropriate distance
between $\wh{R}_{t\ell}$ and $R^{\ast}_{t\ell}$.  In view of Remark~2,
the distance should incorporate the distribution of
$\bm{X}_t$, thus, we  define
$\rho_t(R_1, R_2) = \Pr \{ \bm{X}_t \in (R_1 \symmdiff R_2) \} $,
where
$C\symmdiff D$ denotes the symmetric set difference between sets $C$ and $D$.  It
can be verified that $\rho_t(R_1, R_2)$ is non-negative, symmetric, and
satisfies the triangle inequality.  Note that $\rho_t(R_1, R_2)=0$
indicates only that $R_1$ and $R_2$ refer to the same group of
subjects with probability one  with respect to $\bm{X}_t$ but does not 
imply $R_1=R_2$.  For example, suppose $X_{t\ell}$ takes values in
$\{0, 1\}$, $R_1=\{ x: x \leq 0\}$ and $R_2=\{ x: x \leq 0.5\}$.  Then
$\rho_t(R_1, R_2)=0$, as expected.  Furthermore, the use of
$\rho_t$ helps to avoid the issue of non-unique representations of $R$
when some covariates can be expressed using others.  For example, if
$X_{t1} = -X_{t2}$ and both are continuous, then
$\rho_t(\{\bm{x}: x_1 \leq \tau\}, \{\bm{x}: x_2 > \tau\}) = 0$.  
Thus, our goal is to identify an equivalence class of clauses that
each describe the same subset of patients.  
We require the following identifiability assumption on the equivalence
class of optimal clauses.  
\begin{assumption}
For each $t$ and $\ell$, the following inequalities hold:
\begin{itemize}
\item[(\romanNum{1})] There exists a constant $\kappa > 0$ such that
$\Psi_{t\ell}(R, a^{\ast}_{t\ell}) 
\leq \Psi_{t\ell}(R^{\ast}_{t\ell} , a^{\ast}_{t\ell}) 
- \kappa \rho^{2}_{t}(R, R^{\ast}_{t\ell})$
as $\rho_{t}(R, R^{\ast}_{t\ell}) \to 0$;

\item[(\romanNum{2})] For any $\delta > 0$, 
there exists a constant $\epsilon > 0$ such that 
$\Psi_{t\ell}(R, a^{\ast}_{t\ell}) 
\leq \Psi_{t\ell}(R^{\ast}_{t\ell} , a^{\ast}_{t\ell}) 
- \epsilon$ 
for all $R \in \mathcal{R}_t$ 
with $\rho_{t}(R, R^{\ast}_{t\ell}) > \delta$;

\item[(\romanNum{3})] There exists a constant $\varsigma > 0$ 
such that 
$\Psi_{t\ell}(R, a) 
\leq \Psi_{t\ell}(R^{\ast}_{t\ell}, a^{\ast}_{t\ell}) 
- \varsigma$
for all $R \in \mathcal{R}_t$ and 
$a \in \mathcal{A}_t \setminus \{a^{\ast}_{t\ell}\}$. 
\end{itemize}
\end{assumption}

Assumption~4 guarantees the uniqueness of
$(R^{\ast}_{t\ell}, a^{\ast}_{t\ell})$ in the sense that if
$(\wt{R}^{\ast}_{t\ell}, \wt{a}^{\ast}_{t\ell})$ is another maximizer
of $\Psi_{t\ell}(R, a)$, then
$\rho(R^{\ast}_{t\ell}, \wt{R}^{\ast}_{t\ell}) = 0$ and
$a^{\ast}_{t\ell} = \wt{a}^{\ast}_{t\ell}$.  Moreover, condition
(\romanNum{1}) assumes that $\Psi(R, a^{\ast}_{t\ell})$ behaves like a
quadratic function in a neighborhood of $R^{\ast}_{t\ell}$.  When
$\bm{X}_t$ has bounded density and $R$, $R_{t\ell}^*$ use the same
covariates, $\Psi_{t\ell}$ can be viewed as a function of the
threshold values and condition (\romanNum{1}) implies that
$\Psi_{t\ell}$ behaves like a quadratic function near the optimal
threshold values, which is a common condition in parametric models.

Define the value of a decision rule at time $t$, say $\pi_t$,  as 
$V_t(\pi_t) = \E \big[ Q_t\{\bm{X}_t, \pi_t(\bm{X}_t) \} \big]$.
Our analysis focuses on how close $\wh{\pi}_t$ is to $\pi_t^*$, 
and how well $\wh{\pi}_t$ performs compared to $\pi_t^*$ in
terms of value.  
For each $t$, let $\mathcal{S}_t$ be the indices of the signal
variables defining the function $Q_t$, and let
$\mathcal{N}_t = \{1, \dots, d_t\} \setminus \mathcal{S}_t$ be the
indices of the noise variables.   Write
$d_t^{\mathcal{S}} = |\mathcal{S}_t|$,
$d_t^{\mathcal{N}} = |\mathcal{N}_t|$ and hence
$d_t = d_t^{\mathcal{S}} + d_t^{\mathcal{N}}$. 
Recall that $L_T^*$ is the length of
$\pi_t^*$.
 Define
$\phi_T = (2/3)^{L_T^*} (2 r_T) / (2 r_T + d_T)$, and
$\phi_t = (2/3)^{L_t^*} \min\{(2 r_t) / (2 r_t + d_t), \phi_{t+1}\}$
for $t = T-1, \dots, 1$.  Define $\phi_t^\mathcal{S}$ in the same way
but with $d_t$ replaced by $d_t^\mathcal{S}$.  As $\bm{\gamma}_t$,
$\lambda_t$, $\zeta$ and $\eta$ may depend on $n$, we may write
$\bm{\gamma}_{n,t}$, $\lambda_{n,t}$, $\zeta_n$ and $\eta_n$ to
emphasize such dependence.  The following theorem establishes finite
sample bounds.  A proof is given in the Supplemental Materials.
\begin{theorem}
For each $t$, assume 
$\max_j \gamma_{n, t, j} = O\{n^{2 / (2 r_t + d_t)}\}$,
$\lambda_t = O(n^{-1})$,
$\sup_n \zeta_n < \infty$, and
$\sup_n \eta_n < \infty$.
Under Assumptions 1-4, for any $\xi > 0$, 
\begin{align*}
&\pr\{\wh{\pi}_t(\bm{X}_t) \neq \pi_t^*(\bm{X}_t)\} 
\leq c_1 n^{-\phi_t + \xi}, \\
&\pr\{V_t(\pi_t^*) - V_t(\wh{\pi}_t) \geq 
c_2 n^{-\phi_t + \xi} + c_3 n^{-1/2} \tau\}
\leq e^{-\tau},
\end{align*}
where $c_i$'s are constants independent of $n$ and $\tau$.

Moreover, if $\max_{j \in \mathcal{S}_t} \gamma_{n, t, j} = O\{n^{2 / (2 r_t + d_t)}\}$ and $\max_{j \in \mathcal{N}_t} \gamma_{n, t, j} = O(1)$, then the inequalities above holds with $\phi_t$ replaced by $\phi_t^\mathcal{S}$.
\end{theorem}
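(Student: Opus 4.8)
I would prove the theorem by a two-level induction: an outer recursion over stages $t = T, T-1, \dots, 1$, and, inside each stage, an inner recursion over the clauses $\ell = 1, \dots, L_t^*$. At each node of this double recursion I track three quantities: the $L_2(\bm{X}_t)$ error of $\wh{Q}_t$, the uniform-over-$(R,a)$ error of the empirical objective $\wh{\Psi}_{t\ell}$ relative to $\Psi_{t\ell}$, and the induced distance $\rho_t(\wh{R}_{t\ell}, R_{t\ell}^*)$. The two displayed bounds then follow by translating the last of these into a statement about $\wh{\pi}_t$ and its value.

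\textbf{Step 1: rate for $\wh{Q}_t$.} First I would show that, on an event of probability at least $1 - e^{-\tau}$, $\lVert \wh{Q}_t(\cdot, a) - Q_t^*(\cdot, a) \rVert_{L_2(\bm{X}_t)}^2 \lesssim n^{-\psi_t + \xi} + n^{-1}\tau$ for every $a \in \mathcal{A}_t$, where $\psi_T = 2 r_T / (2 r_T + d_T)$ and, at earlier stages, $\psi_t$ is a minimum of $2 r_t/(2 r_t + d_t)$ with a quantity inherited from stage $t+1$. For $t = T$ this is a variant of the Gaussian-kernel kernel-ridge-regression analysis of \citet{eberts2013optimal}: the RKHS approximation (bias) error of the scaled Gaussian kernel is controlled by the $r_T$-th modulus of smoothness of $Q_T^*$ (Assumption~3), the variance term is controlled by entropy bounds for balls in the Gaussian RKHS together with $\lambda_T = O(n^{-1})$, boundedness (Assumption~1) supplies Bernstein-type concentration, and the two are balanced by $\max_j \gamma_{n,T,j} \asymp n^{2/(2 r_T + d_T)}$. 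The genuinely new ingredient is replacing the $L_p$ modulus of smoothness in that reference by the supremum-norm version defined in the excerpt, which lets the bias bound go through without assuming $\bm{X}_t$ has a Lebesgue density, thereby allowing discrete covariates. For $t < T$ the regression response is the pseudo-outcome $Y_t + \wh{Q}_{t+1}\{\bm{X}_{t+1}, \wh{\pi}_{t+1}(\bm{X}_{t+1})\}$; its deviation from the idealized target is controlled by $\lVert \wh{Q}_{t+1} - Q_{t+1}^*\rVert$ and $\pr\{\wh{\pi}_{t+1}(\bm{X}_{t+1}) \neq \pi_{t+1}^*(\bm{X}_{t+1})\}$ from the previous step of the outer recursion, which is the origin of the nested $\min\{\cdot, \phi_{t+1}\}$. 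For the sparsity-adaptive statement I would further show that, since $Q_t^*$ depends only on the coordinates in $\mathcal{S}_t$, it remains well approximated by the scaled Gaussian RKHS even when the bandwidths in the $\mathcal{N}_t$ directions stay $O(1)$ — informally, those directions are averaged out of the bias term — so the effective dimension in the bias/variance balance drops from $d_t$ to $d_t^{\mathcal{S}}$, giving $\phi_t^{\mathcal{S}}$.

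\textbf{Steps 2--3: objective error, region error, and the two recursions.} Next I would bound $\sup_{R \in \mathcal{R}_t,\, a \in \mathcal{A}_t} \lvert \wh{\Psi}_{t\ell}(R,a) - \Psi_{t\ell}(R,a) \rvert$, on an event of probability at least $1 - e^{-\tau}$, by the sum of: the $Q$-error from Step~1 (plus a margin argument bounding $\pr\{\wh{\pi}_t^Q \neq \pi_t^Q\}$ in terms of $\lVert \wh{Q}_t - Q_t^* \rVert$); an empirical-process term $\lVert \Pn - \P \rVert$ over the class $\{I(\bm{x} \in R): R \in \mathcal{R}_t\}$, which has bounded VC dimension because each region is an intersection of at most two coordinate half-spaces, hence is of order $n^{-1/2}(\sqrt{\log n} + \sqrt{\tau})$ by Talagrand's inequality; and the error from replacing $G_{t\ell}^*$ by $\wh{G}_{t\ell}$, which, since $\wh{G}_{t\ell} \symmdiff G_{t\ell}^* \subseteq \bigcup_{k < \ell}(\wh{R}_{tk} \symmdiff R_{tk}^*)$, is at most a constant times $\sum_{k<\ell} \rho_t(\wh{R}_{tk}, R_{tk}^*)$. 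Combining with Assumption~4: part~(\romanNum{3}) forces $\wh{a}_{t\ell} = a_{t\ell}^*$ once the uniform error falls below $\varsigma$; part~(\romanNum{2}) forces $\rho_t(\wh{R}_{t\ell}, R_{t\ell}^*) \to 0$; and the curvature bound~(\romanNum{1}) upgrades this to $\kappa\, \rho_t(\wh{R}_{t\ell}, R_{t\ell}^*)^2 \lesssim (\text{uniform error at clause } \ell)$, which itself contains $\sum_{k < \ell} \rho_t(\wh{R}_{tk}, R_{tk}^*)$. Because the error at clause $\ell$ re-enters the estimation of clause $\ell+1$ through $\wh{G}_{t,\ell+1}$, solving this nonlinear recursion in $\ell$ — and noting that the terminal clause, with $V(R)=0$, forces $\wh{R}_{tL} = \mathcal{X}_t$ exactly — degrades the stage-$t$ base exponent by the factor $(2/3)^{L_t^*}$; feeding $\pr\{\wh{\pi}_t(\bm{X}_t) \neq \pi_t^*(\bm{X}_t)\} \le \sum_\ell \rho_t(\wh{R}_{t\ell}, R_{t\ell}^*) + (\text{mass where } \wh{a}_{t\ell} \neq a_{t\ell}^*)$ into the outer recursion in $t$ then yields $\phi_t$ and the first inequality. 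For the second, write $V_t(\pi_t^*) - V_t(\wh{\pi}_t) = \E[\,Q_t^*\{\bm{X}_t, \pi_t^*(\bm{X}_t)\} - Q_t^*\{\bm{X}_t, \wh{\pi}_t(\bm{X}_t)\}\,] \le 2b\,\pr\{\wh{\pi}_t(\bm{X}_t) \neq \pi_t^*(\bm{X}_t)\}$ plus the $O(n^{-1/2}\tau)$ fluctuation from the Bernstein steps above, whose small variance near the optimum is what produces a linear rather than square-root dependence on $\tau$.

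\textbf{Main obstacle.} The crux is the conversion in Step~3 and the two recursions it feeds. The curvature condition only recovers $\rho_t(\wh{R}_{t\ell}, R_{t\ell}^*)$ at the square root of the objective error, and that objective error already carries the earlier-clause errors $\rho_t(\wh{R}_{tk}, R_{tk}^*)$, $k < \ell$ (through $\wh{G}_{t,\ell+1}$), and, via the pseudo-outcome, the errors from all later stages; disentangling this double compounding tightly enough to land on the exact exponent — the factor $(2/3)^{L_t^*}$ together with the nested minima defining $\phi_t$ — is the delicate part. A close second is the refined Gaussian-kernel KRR bound of Step~1 under only a modulus-of-smoothness condition and with coordinate-dependent bandwidths, especially the argument that noise coordinates carried at bounded bandwidth do not inflate the effective dimension.
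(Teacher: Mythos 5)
Your overall architecture coincides with the paper's: a kernel-ridge-regression risk bound in the style of \citet{eberts2013optimal} adapted to the sup-norm modulus of smoothness and coordinate-wise bandwidths (with the noise directions held at bounded bandwidth to drop the effective dimension to $d_t^{\mathcal{S}}$), a forward recursion over clauses $\ell$ in which the error in $\wh{G}_{t\ell}$ is controlled by $\sum_{k<\ell}\rho_t(\wh{R}_{tk},R^*_{tk})$, Assumption~4(\romanNum{3}) to pin down $\wh{a}_{t\ell}$, VC/Talagrand bounds over $\mathcal{R}_t$, a backward recursion over stages through the pseudo-outcome, and the value bound via $V_t(\pi_t^*)-V_t(\wh{\pi}_t)\leq 2b\,\pr\{\wh{\pi}_t(\bm{X}_t)\neq\pi_t^*(\bm{X}_t)\}$. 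This is essentially the paper's proof skeleton.

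There is, however, one concrete gap, and it sits exactly where you flag your ``main obstacle'': the conversion from objective error to $\rho_t(\wh{R}_{t\ell},R^*_{t\ell})$. As you describe it, you invert the curvature condition globally, $\kappa\,\rho_t^2(\wh{R}_{t\ell},R^*_{t\ell})\lesssim 2\sup_{R}|\wh{\Psi}_{t\ell}-\Psi_{t\ell}|$, which yields $\rho_t\lesssim\epsilon_n^{1/2}$ when the driving error is $\epsilon_n$; iterating this halves the exponent at every clause and would prove the theorem only with $(1/2)^{L_t^*}$ in place of $(2/3)^{L_t^*}$ --- a strictly weaker rate. The missing ingredient is a \emph{localized} empirical-process bound: over the shell $\{R\in\mathcal{R}_t:\rho_t(R,R^*_{t\ell})\leq\delta\}$ the centered increment $|\{\Pn\wh{\Omega}_{t\ell}(R,a^*_{t\ell})-\E\Omega_{t\ell}(R,a^*_{t\ell})\}-\{\Pn\wh{\Omega}_{t\ell}(R^*_{t\ell},a^*_{t\ell})-\E\Omega_{t\ell}(R^*_{t\ell},a^*_{t\ell})\}|$ scales like $\delta^{1/2}$ times the driving error, because the relevant functions are supported on $R\symmdiff R^*_{t\ell}$, whose empirical mass is $O(\delta)$ uniformly over the VC class, and Cauchy--Schwarz converts the $Q$-estimation and $\wh{G}_{t\ell}$-propagation terms into $\delta^{1/2}\times(\cdot)$. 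Feeding this modulus (exponent $1/2$) and the quadratic curvature into a finite-sample rate-of-convergence theorem (a quantitative version of van der Vaart and Wellner's Theorem~3.2.5, proved by peeling) gives $\rho_t\lesssim\epsilon_n^{1/(2-1/2)}=\epsilon_n^{2/3}$, which is the source of the $2/3$ per clause. Without this localization your sketch does not reach the stated exponent. A minor additional point: the detour through ``a margin argument bounding $\pr\{\wh{\pi}_t^Q\neq\pi_t^Q\}$'' is both unnecessary and unsupported by Assumptions~1--4; the paper avoids it entirely by bounding $|\max_{a'}\wh{Q}_t(\bm{x},a')-\max_{a'}Q_t(\bm{x},a')|\leq\max_{a'}|\wh{Q}_t(\bm{x},a')-Q_t(\bm{x},a')|$, so no margin condition on the gap between the best and second-best treatment is ever needed.
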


The minimax convergence rate for a nonparametric regression 
estimator of an $r_T$-times continuously differentiable function is
$O\{n ^ {-(2 r_T) / (2 r_T + d_T)}\}$ \citep{stone1982optimal}.  By
extending the technique in \citet{eberts2013optimal}, we show in the 
Supplemental Material that the
estimated $Q$-function $\wh{Q}_T$ converges to its true value $Q_T$ at
a nearly optimal rate $O\{n ^ {-(2 r_T) / (2 r_T + d_T) + \xi'}\}$,
where $\xi' > 0$ can be arbitrarily small.  The construction of
$\wh{\pi}_T$ involves estimating $L_T^*$ pairs of parameters
$(R_{t\ell}, a_{t\ell})$, one pair for each if-then clause.  The
estimation of each pair $(R_{t\ell}, a_{t\ell})$ reduces the
convergence rate by an additional factor of $2/3$.  The underlying
idea for this phenomenon is analogous to the problems analyzed in
\citet{kim1990cube}.  In earlier stages, the estimation of
$Q$-functions is further complicated by the fact that
$Q_{t+1}\{\bm{X}_{t+1}, \pi_{t+1}(\bm{X}_{t+1})\}$ is not observed but
estimated via
$\wh{Q}_{t+1}\{\bm{X}_{t+1}, \wh{\pi}_{t+1}(\bm{X}_{t+1})\}$.

When all the covariates are discrete, 
it can be shown that
the convergence rate of the estimated regime $\wh{\pi}_t$
does not inherit the slow convergence rate from the 
underlying nonparametric regressions.  The following result is proved
in the Supplemental Material. 
\begin{theorem}
For each $t$, assume 
$\max_j \gamma_{n, t, j} = O\{n^{2 / (2 r_t + d_t)}\}$,
$\lambda_t = O(n^{-1})$,
$\sup_n \zeta_n < \infty$, and
$\sup_n \eta_n < \infty$.
Furthermore, assume that the distribution of $\bm{X}_t$ is discrete.
Namely, for each $t$ there exists a finite set $\wt{\mathcal{X}}_t$
such that $\pr(\bm{X}_t \in \wt{\mathcal{X}}_t) = 1$.
Under Assumptions 1-4, 
\begin{align*}
&\pr\{\wh{\pi}_t(\bm{X}_t) \neq \pi_t^*(\bm{X}_t)\} 
\leq c_1 e^{-c_2 n}, \\
&\pr\{V_t(\pi_t^*) - V_t(\wh{\pi}_t) \geq 
c_3 n^{-1/2} \tau\}
\leq e^{-\tau},
\end{align*}
where $c_i$'s are constants independent of $n$ and $\tau$.
\end{theorem}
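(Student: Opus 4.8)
\emph{Overall strategy.} The plan is to exploit that, once $\bm{X}_t$ is supported on a finite set $\widetilde{\mathcal{X}}_t$, both the estimation of the $Q$-functions and the maximization over $\mathcal{R}_t$ collapse to effectively finite-dimensional problems governed by Hoeffding/Bernstein-type concentration, which upgrades the nonparametric rates of Theorem~1 to $n^{-1/2}$ with exponentially small tail probabilities. I would argue by backward induction on $t$ and, within each stage, by forward induction on the clause index $\ell$.

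\emph{The $Q$-step.} Because $f \in \mathbb{H}_t$ enters the kernel-ridge criterion only through its values on $\widetilde{\mathcal{X}}_t$, and the Gaussian kernel is strictly positive definite, that criterion is equivalent to a parametric ridge problem over $\mathbb{R}^{|\widetilde{\mathcal{X}}_t|}$ with a fixed positive-definite penalty matrix; since $n\lambda_{n,t} = O(1)$, its solution agrees, up to a lower-order term, with the group-wise sample average of the (pseudo-)outcomes over subjects sharing a covariate value and receiving treatment $a$. By positivity (Assumption~2) and discreteness, every such group has size at least $cn$ outside an event of probability $e^{-c'n}$ (Chernoff), and, conditionally on the covariate/treatment pattern, these averages concentrate around $Q_t(x,a)$ at scale $n^{-1/2}$ with sub-Gaussian tails because the outcomes are bounded (Assumption~1). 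At a stage $t<T$ the response $Y_{it}+\widehat{Q}_{t+1}\{\bm{X}_{i,t+1},\widehat{\pi}_{t+1}(\bm{X}_{i,t+1})\}$ differs from $Y_{it}+Q_{t+1}\{\bm{X}_{i,t+1},\pi^{\ast}_{t+1}(\bm{X}_{i,t+1})\}$ only through stage-$(t+1)$ error, which the induction hypothesis controls at scale $n^{-1/2}$ with exponential tails on the event $\{\widehat{\pi}_{t+1}=\pi^{\ast}_{t+1}\}$, itself of probability $1-e^{-cn}$; the complementary event is absorbed into the exponentially small failure probability. Hence $\sup_{x\in\widetilde{\mathcal{X}}_t,\,a}|\widehat{Q}_t(x,a)-Q_t(x,a)|$ is $O(n^{-1/2})$ with exponential tails, and in particular $\widehat{Q}_t\{x,\widehat{\pi}_t^Q(x)\}=\max_a\widehat{Q}_t(x,a)$ converges to $\max_aQ_t(x,a)$ at the same rate, uniformly over $\widetilde{\mathcal{X}}_t$.

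\emph{The decision-list step.} Only finitely many subsets of $\widetilde{\mathcal{X}}_t$ arise from thresholding at most two coordinates, so, modulo the equivalence induced by $\rho_t$, the search in \eqref{eq:sample-objective} runs over a finite index set whose size is bounded independently of $n$. For each candidate $(R,a)$ the empirical objective differs from $\Psi_{t\ell}(R,a)$ only through empirical-measure fluctuations of the indicator probabilities and through the $Q$-estimation error of the previous step, both $O(n^{-1/2})$ with exponential tails; a union bound over the finite index set keeps the exponential decay, giving $\sup_{R,a}|\widehat{\Psi}_{t\ell}(R,a)-\Psi_{t\ell}(R,a)|=O(n^{-1/2})$ with exponential tails, where I also use that, by the forward-induction hypothesis on clauses $k<\ell$, $\widehat{G}_{t\ell}$ agrees with $G^{\ast}_{t\ell}$ up to a $\rho_t$-null set. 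Because $\rho_t(R,R^{\ast}_{t\ell})$ now takes only finitely many values, $\rho_t(R,R^{\ast}_{t\ell})>0$ forces $\rho_t(R,R^{\ast}_{t\ell})\geq\delta_0$ for a fixed $\delta_0>0$, and Assumption~4(\romanNum{2})--(\romanNum{3}) then supplies a fixed separation $\varrho_0>0$ with $\Psi_{t\ell}(R^{\ast}_{t\ell},a^{\ast}_{t\ell})-\Psi_{t\ell}(R,a)\geq\varrho_0$ for every $(R,a)$ not $\rho_t$-equivalent to $(R^{\ast}_{t\ell},a^{\ast}_{t\ell})$. Thus, whenever the uniform objective error is below $\varrho_0/2$ --- an event of probability $1-c_1e^{-c_2n}$ --- the empirical maximizer $(\widehat{R}_{t\ell},\widehat{a}_{t\ell})$ is $\rho_t$-equivalent to $(R^{\ast}_{t\ell},a^{\ast}_{t\ell})$, and the penalty $\eta\{2-V(R)\}$ fixes the representation, in particular forcing $\widehat{R}_{tL_t^*}=\mathcal{X}_t$ so that the estimated and population lists terminate together. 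Reassembling the clauses, $\widehat{\pi}_t$ and $\pi^{\ast}_t$ prescribe the same treatment to every $\bm{x}\in\widetilde{\mathcal{X}}_t$ on this event, which is the first inequality.

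\emph{The value bound and the main obstacle.} The value bound is then immediate: on the good event $\widehat{\pi}_t(\bm{X}_t)=\pi^{\ast}_t(\bm{X}_t)$ almost surely, so $V_t(\widehat{\pi}_t)=V_t(\pi^{\ast}_t)$, while on the complementary event the gap is at most a fixed constant since $Q_t$ is bounded; choosing the constants so that $c_1e^{-c_2n}\leq e^{-\tau}$ over the range of $\tau$ for which $c_3n^{-1/2}\tau$ does not already exceed that constant gives the stated $e^{-\tau}$ tail. I expect the main obstacle to be the $Q$-step: making rigorous, under the Theorem~1 tuning schedule (where $\gamma_{n,t,j}$ is only bounded above), the reduction of the Gaussian-kernel ridge estimator to group-wise averaging with a controllable lower-order term, and propagating the resulting exponential-tail bounds backward through the stages where the regression response is itself the estimated pseudo-outcome $Y_{it}+\widehat{Q}_{t+1}\{\bm{X}_{i,t+1},\widehat{\pi}_{t+1}(\bm{X}_{i,t+1})\}$; the decision-list step is then a fairly routine argmax-consistency argument once finiteness and the uniform separation $\varrho_0$ are in hand.
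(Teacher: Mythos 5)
Your architecture is sound and your decision-list step is essentially the paper's: both arguments exploit that on a finite support the family of candidate regions collapses (modulo $\rho_t$-equivalence) to a finite set with a strictly positive separation $\vartheta = \inf\{\rho_t(R,R^\ast_{t\ell}) : \rho_t(R,R^\ast_{t\ell})>0\}>0$, so that Assumption~4 yields a \emph{fixed} gap $\varrho_0$ (the paper uses 4(i) in the form $\kappa\vartheta^2$ together with 4(iii); you use 4(ii)--(iii) — either works), an argmax-consistency argument with a union bound over clauses then gives $c_1e^{-c_2n}$, and, as you and the paper both note, the error no longer propagates down the list. Where you genuinely diverge is the $Q$-step. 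You propose to upgrade $\wh{Q}_t$ itself to a sup-norm $O(n^{-1/2})$ estimator with sub-Gaussian tails by reducing kernel ridge regression on a finite support to group-wise averaging. The paper does not do this and does not need to: it reuses the nonparametric risk bound of its Proposition~7 verbatim, so $\wh{Q}_T$ still only satisfies $\E_{\bm{X}}\{\wh{Q}_T-Q_T\}^2 \leq c(n^{-2\varphi_T+\xi}+n^{-1}\tau)$ with probability $1-e^{-\tau}$, and the exponential rate for $\wh{\pi}_t$ comes entirely from the observation that one only needs the empirical objective's deviation to stay below the \emph{fixed} constants $\varsigma$ and $\kappa\vartheta^2$ — an event whose complement has probability $e^{-cn}$ because the $\tau$-dependent terms in Talagrand/Bernstein scale as $n^{-1}\tau$ and $n^{-1/2}\tau^{1/2}$, so one simply takes $\tau\asymp n$. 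Backward propagation is then handled by conditioning on $\{\wh{\pi}_{t+1}=\pi^\ast_{t+1}\text{ a.s.}\}$, on which $\Pn(\wh{Y}_t-\wt{Y}_t)^2$ inherits the stage-$(t{+}1)$ $L_2$ risk, and Proposition~7 is applied again. The paper's route is more economical: it recycles the Theorem~1 machinery wholesale and discreteness enters only through $\vartheta>0$.

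The one place where your route carries real risk is precisely the step you flag. The reduction of the Gaussian-kernel ridge estimator to group-wise averages requires the perturbation $\lambda\, v^{\T}K^{-1}v$ (with $K$ the Gram matrix on the distinct observed covariate values) to be lower order, which needs control of $\lVert K^{-1}\rVert$; the hypothesis only bounds $\max_j\gamma_{n,t,j}$ from above, so $K$ can be nearly singular if the scaling factors are small, and "agrees up to a lower-order term" is not automatic. Since the conclusion you want from the $Q$-step is strictly stronger than what the theorem requires, the cleaner fix is to drop that step and argue as the paper does: keep the slow $L_2$ rate for $\wh{Q}_t$ and convert the fixed separation into an exponential tail by choosing $\tau\asymp n$ in the concentration inequalities.
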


In both theorems, the convergence rates are independent of
$\zeta_n$ and $\eta_n$.
However, the choice of $\zeta_n$ and $\eta_n$ has an impact on the 
limiting treatment regime $\pi_t^*$.
In practice, we suggest to tune $\bm{\gamma}_n$ and $\lambda_n$
by minimizing the cross validated mean squared error in the 
kernel ridge regression,
and tune $\zeta_n$ and $\eta_n$ by maximizing the 
cross validated value of the regime.

\section{Simulation Studies}
\label{sec:simulation}
We conducted a series of simulation experiments 
to examine the empirical performance of the
proposed method.  Five scenarios were considered.  The first four came
from \citet{zhao2015new} and the fifth was adapted from
\citet{murphy2003optimal}.  Scenario \RomanNum{1} consists of two
stages, two treatment options at each stage, and the covariates
exhibit nonlinear effects: 
$\bm{S}_1 = (S_{1,1}, \dots, S_{1,50})^\T$, are independent standard
normal
random variables; $A_1$ is 
$\mathrm{Uniform}\{-1, 1\}$; $Y_1$ is 
$\mathrm{Normal}(\mu_1, 1)$, where $\mu_1 = 0.5 S_{1,3} A_1$;
$S_2$ is empty; $A_2$
is  $\mathrm{Uniform}\{-1, 1\}$; $Y_2$ is $\mathrm{Normal}(\mu_2, 1)$ where
$\mu_2 = \{ (S_{1,1}^{2} + S_{1,2}^{2} - 0.2) (0.5 - S_{1,1}^{2} -
S_{1,2}^{2}) + Y_1\} A_2$.
Scenario \RomanNum{2} consists of time-varying covariates.  In this
scenario: $\bm{S}_1$, $A_1$ and $A_2$ were generated in the same way
as scenario \RomanNum{1}; $Y_1$ is $\mathrm{Normal}(\mu_1, 1)$, where
$\mu_1 = (1 + 1.5 S_{1,3}) A_1$, $\bm{S}_2 = (S_{2,1}, S_{2,2})^\T$;
$S_{2,1}$ is Bernoulli with success probability
 $1-\Phi(1.25 S_{1,1} A_1)$;  $S_{2,2}$ is Bernoulli with success
 probability $1-\Phi(-1.75 S_{1,2} A_1)$; and $Y_2$ is
 $\mathrm{Normal}(\mu_2, 1)$, where
 $\mu_2 = (0.5 + Y_1 + 0.5 A_1 + 0.5 S_{2,1} - 0.5 S_{2,2}) A_2$. 
 In Scenario \RomanNum{3}, $A_1$, $A_2$, $A_3$
 are $\mathrm{Uniform}\{-1, 1\}^3$;
 $S_{1,1}, S_{1,2}, S_{1,3}$  are $i.i.d.$ $\mathrm{Normal}(45, 15^2)$;
 ${S}_2$
  is $\mathrm{Normal}(1.5 S_{1,1}, 10^2)$;  ${S}_3$ is
 $\mathrm{Normal}(0.5 S_2, 10^2)$;  $Y_1 = Y_2 = 0$, and $Y_3$ is
 $\mathrm{Normal}(\mu_3, 1)$, where
 $\mu_3 = 20 - |0.6 S_{1,1} - 40| \{I(A_1>0) - I(S_{1,1} > 30)\}^2 -
 |0.8 S_2 - 60| \{I(A_2 > 0) - I(S_2 > 40)\}^2 - |1.4S_3 - 40| \{
 I(A_3>0) - I(S_3 > 40)\}^2$.
Scenario \RomanNum{4} is the same as Scenario \RomanNum{3} except that
many noise variables were added.  In addition to $S_{1,1}$,
$S_{1,2}$ and $S_{1,3}$, we generated $S_{1,4}, \dots, S_{1,50}$
$i.i.d.$ from 
$\mathrm{Normal}(45, 15^2)$.  
Scenario \RomanNum{5} involves ten stages and multiple
treatment options at each stage.  See \citet{murphy2003optimal} for
background and motivation for this scenario.  For $t \in \{1, \dots, 10\}$,
treatments were coded as a pair of values $A_t = (A_{t1}, A_{t2})^\T$, generated
as follows.  First,
$A_{t1}$ is drawn from $\mathrm{Uniform}\{0, 1\}$.  Second, if $A_{t1}
= 0$ then $A_{t2}$ is
drawn from $\mathrm{Uniform}\{0, 1, 2, 3\}$, and otherwise $A_{t2}$ is drawn
uniformly from $\{1, 2, 3\}$.  Thus, there are
$m_t = 7$ treatment candidates at each decision point.  In addition,
$U_1, \dots, U_{10}$ are $i.i.d.$ $\mathrm{Normal}(0, 0.01)$;
$S_1 = 0.5 + U_1$;
$S_t = 0.5 + 0.2 S_{t-1} - 0.07 A_{t-1, 1} A_{t-1, 2} - 0.01 (1 -
A_{t-1, 1}) A_{t-1, 2} + U_t$
for $t \geq 2$; $Y_t$  is $\mathrm{Normal}(\mu_t, 0.64)$, where
$\mu_t = 30 I(t=1) - 5 U_t - 6\{A_{t1} - I(S_t > 5/9)\}^2 - 1.5 A_{t1}
(A_{t2} - 2S_t)^2 - 1.5 (1 - A_{t1}) (A_{t2} - 5.5 S_t)^2 $
for each $t$.

In each scenario, we considered sample sizes $n=100, 200,$ and $400$.  We
generated 1000 data sets for each sample size and estimated the
optimal treatment regime using the proposed method.  In each stage, we tuned the
scaling vector in the Gaussian kernel, $\bm{\gamma}_t$, as well as the
amount of penalty, $\lambda_t$, via leave-one-out cross validation.
The cross validated error was minimized via a Quasi-Newton type
algorithm \citep{kim2010tackling} with a random starting value.
During the construction of decision lists, at each decision point we
tuned $\zeta$ and $\eta$ via five-fold cross validation over a
pre-specified grid.  We picked the combination that led to the largest
cross validated outcome.  

To form a basis for comparison, we also
implemented $Q$-learning with linear models, non-parametric
$Q$-learning with random forests, backward outcome weighted learning
(BOWL), and simultaneous outcome weighted learning
\citep[SOWL;][]{zhao2015new}.  In $Q$-learning with linear
$Q$-functions, we
fit the working 
$Q_t(\bm{X}_t, A_t) = \sum_{a \in \mathcal{A}_t} I(A_t = a)
\bm{X}_t^\T \bm{\beta}_{t, a}$.
Motivated by \citet{qian2011performance}, we imposed an $\ell_1$
penalty to reduce overfitting.  The $Q$-functions were estimated by
$\ell_1$ regularized least squares, implemented in the \texttt{R}
package \texttt{glmnet} \citep{friedman2010regularization}.  The
covariates were standardized to have mean zero and variance one before
entering the model, and the tuning parameter was selected by five-fold
cross validation.  
Our
implementation of non-parametric $Q$-learning used the \texttt{R} package
\texttt{randomForest} with default parameters settings
\citep{liaw2002classification}.  We implemented BOWL and SOWL
according to the descriptions in \citet{zhao2015new}.  Linear kernels
were used, and the amount of regularization was chosen by five-fold
cross validation.  Note that BOWL and SOWL assume binary treatment
options and thus are not applicable in Scenario \RomanNum{5}.

We measure the quality of an estimated treatment regime by the mean outcome under
that treatment regime; we approximate this mean outcome using an independent test set of size $10^5$.  
The results are displayed in Table~1. 
In Scenario
\RomanNum{1}, the second stage $Q$-function is highly nonlinear,
and most
methods tended to assign a single treatment to all patients in the second stage,
leading to a mean outcome of $6.70$.  In contrast, the proposed method
is able to correctly individualize treatment as the sample size
increased and thus produce a higher mean outcome.  In Scenario
\RomanNum{2}, both $Q$-functions at the first and the second stages
are linear.  Hence, as expected, $Q$-learning with linear models
performs best. Nevertheless, the proposed method and non-parametric $Q$-learning
perform well and shows marked improvement over BOWL and
SOWL.  In Scenario \RomanNum{3}, $Q$-learning with linear models
suffers from model misspecification whereas the proposed method 
and non-parametric $Q$-learning both perform well. 
Furthermore, although the $Q$-functions are
complicated, the optimal treatment regime consists of linear functions of
covariates.  Hence, both BOWL and SOWL perform well in this scenario. 
Recall that scenario \RomanNum{4} is the same as scenario\RomanNum{3}
except for the addition of many noise variables.  Thus, the results 
for scenario \RomanNum{4} demosntrate a sensitivity to noise variables
in BOWL and SOWL.  One possible reason for this is that both
BOWL and SOWL utilizes $\ell_2$ penalties, which fails to exclude
noise variables.  In Scenario \RomanNum{5}, the proposed method
outperforms competing methods, especially when the sample size is
small.  The reason might be due to the nonparametric
estimation of $Q$-functions and the simple form of decision list
compared to a random forest, as simpler treatment regimes tends to have better
generalizability.


\begin{table}
\caption{Simulation results. Given a scenario and a sample size,
each method constructed $1000$ treatment regimes, one per each simulated dataset.
The number in each cell is the outcome under the estimated treatment regime,
averaged over $1000$ replications, with standard deviation in parentheses. In the header, $n$ is the sample size, 
DL refers to the proposed decision list based approach, 
$Q$-lasso refers to the $Q$-learning approach with linear model and lasso penalty, 
$Q$-RF refers to the $Q$-learning approach using random forest
\label{tab:result}}
\begin{center}
\begin{tabular}{ccrrrrr}
Scenario & $n$ & \cellcenter{DL} & \cellcenter{$Q$-lasso} & \cellcenter{$Q$-RF} & \cellcenter{BOWL} & \cellcenter{SOWL} \\ 
  \hline
\RomanNum{1} & 100 & 6.63 (0.24) & 6.55 (0.58) & 6.70 (0.05) & 6.70 (0.05) & 6.70 (0.05) \\ 
\RomanNum{1} & 200 & 6.73 (0.24) & 6.64 (0.33) & 6.70 (0.05) & 6.70 (0.05) & 6.70 (0.05) \\ 
\RomanNum{1} & 400 & 6.94 (0.16) & 6.66 (0.26) & 6.70 (0.05) & 6.70 (0.05) & 6.70 (0.05) \\ 
\RomanNum{2} & 100 & 3.66 (0.10) & 3.68 (0.08) & 3.41 (0.17) & 3.15 (0.05) & 2.77 (0.52) \\ 
\RomanNum{2} & 200 & 3.71 (0.04) & 3.73 (0.04) & 3.62 (0.12) & 3.22 (0.08) & 2.84 (0.33) \\ 
\RomanNum{2} & 400 & 3.73 (0.03) & 3.75 (0.02) & 3.71 (0.04) & 3.37 (0.14) & 2.91 (0.28) \\ 
\RomanNum{3} & 100 & 14.49 (2.77) & 5.42 (4.54) & 12.94 (2.07) & 10.65 (2.40) & 10.27 (2.33) \\ 
\RomanNum{3} & 200 & 17.42 (1.42) & 7.88 (1.63) & 15.79 (1.59) & 13.09 (2.20) & 12.98 (1.88) \\ 
\RomanNum{3} & 400 & 18.60 (0.71) & 8.41 (0.65) & 18.02 (0.73) & 15.33 (1.56) & 16.22 (1.58) \\ 
\RomanNum{4} & 100 & 13.38 (3.14) & 4.54 (5.17) & 11.47 (2.31) & 6.72 (1.71) & 6.04 (2.18) \\ 
\RomanNum{4} & 200 & 17.33 (1.87) & 7.69 (2.33) & 14.82 (1.75) & 8.90 (1.13) & 8.34 (1.99) \\ 
\RomanNum{4} & 400 & 18.84 (0.70) & 8.61 (0.96) & 17.04 (1.02) & 10.75 (0.68) & 9.38 (2.30) \\ 
\RomanNum{5} & 100 & 23.68 (1.09) & 12.97 (3.40) & 17.83 (1.63) & $-$ & $-$ \\ 
\RomanNum{5} & 200 & 25.94 (0.51) & 13.80 (2.57) & 21.60 (1.28) & $-$ & $-$ \\ 
\RomanNum{5} & 400 & 26.80 (0.29) & 16.65 (1.71) & 24.73 (0.65) & $-$ & $-$ \\ 
\end{tabular}
\end{center}
\end{table}

\section{Data Analysis}
\label{sec:data-analysis}

As an illustration of the proposed method,
we use data from the Systematic Treatment
Enhancement Program for Bipolar Disorder (STEP-BD)
to estimate an interpretable treatment regime for treating bipolar disorder
\citep[][]{sachs2003rationale}.
We focus on the randomized acute depression (RAD) pathway in STEP-BD,
which is a Sequential Multiple Assignment Randomized Trial (SMART) and
provides the data needed to build treatment regimes.  One purpose of STEP-BD is to
assess the effectiveness of adding antidepressants to mood stabilizers
in treating patients with bipolar disorder.  Although antidepressants
were often assigned to supplement mood stabilizers in practice, it was
found that the adjunctive antidepressant medication did not show much
improvement over the use of mood stabilizers alone
\citep{sachs2007effectiveness}.  Thus, it is of scientific interest to
tailor the use of antidepressants based on individual and
time-dependent characteristics.

The RAD pathway in STEP-BD is a randomized trials with two stages.  At
both stages, patients always received one or more mood stabilizers
chosen by their psychiatrists.  In addition, they might receive one
antidepressant in the form of bupropion or paroxetine.  At week 0,
patients were randomized to receive bupropion, paroxetine or placebo
with probability 0.25, 0.25 and 0.5, respectively.  After 6 weeks,
patients returned to their psychiatrists for evaluation on response
status.  In another 6 weeks, responders continued their initial
treatments, non-responders who received either bupropion or paroxetine
initially were offered an increased dose, and non-responders who
received placebo initially were randomized to received bupropion or
paroxetine with equal probability.  At week 12, patients returned to
their psychiatrists for final measurements.

In this clinical trial, 
the covariate $\bm{X}_1$ of a patient consists of his/her 
age, gender, marital status, education level, employment status, 
bipolar type, nature of the episode prior to the current depressive episode, 
summary score for depression (SUM-D) at baseline,
and summary score for mood elevation (SUM-ME) at baseline.
The treatment $A_1$ takes three values: bupropion, paroxetine and placebo.
The covariate $\bm{X}_2$ consists of SUM-D at week 6, SUM-ME at week 6, and indicators for nine different adverse events at week 6.
The treatment $A_2$ is either bupropion or paroxetine for non-responders who received placebo in the first stage. 
For other patients, $A_2$ is the same as $A_1$.
The outcomes are $Y_1 = 0$ and $Y_2 = \text{SUM-D at week 12}$.
Note that smaller values of SUM-D and SUM-ME indicates better clinical status.
A complete description of these variables is provided in the
 Supplemental Materials.

We apply the propose method to estimate an interpretable treatment regime.
For simplicity, we only include patients with complete baseline and stage~1 information. 
And we use the last-value-carry-forward strategy if the SUM-D at week 12
is missing.
The estimated optimal decision rule at the first stage is:
\begin{align*}
&\texttt{If } \text{SUM-D at week 0} > 8.625 \texttt{ then } \text{bupropion};  \\
&\texttt{else if } \text{SUM-D at week 0} \leq 4.875 \texttt{ and }
\text{race is not white} \texttt{ then } \text{paroxetine};  \\
&\texttt{else } \text{placebo}. 
\end{align*}
The estimated regime suggests that the baseline SUM-D is informative
in treatment selection.  Recall that smaller values of SUM-D indicate
lower symptoms. Hence, an interpretation of the estimated
regime is: patients with severe depression symptoms should receive
bupropion, while non-white patients with minor depression symptoms
should receive paroxetine.  Although applying an antidepressant
medication to all patients did not lead to a better mean outcome
relative to
not
applying antidepressants to any of the patients
\citep{sachs2007effectiveness}, the estimated regime indicates that
personalizing the use of antidepressants based on SUM-D may improve
the overall mean outcome.
The estimated optimal decision rule at the second stage is:
\begin{align*}
&\texttt{If } \text{SUM-ME at week 6} \leq 0.875 \texttt{ then } \text{bupropion};  \\
&\texttt{else if } \text{SUM-D at week 6} > 8.5\texttt{ then } \text{bupropion};  \\
&\texttt{else } \text{paroxetine}. 
\end{align*}
From this rule it can be seen that patients with large SUM-D or low
SUM-ME are assigned to buproprion. 


\section{Discussion}
\label{sec:discussion}
The current trend in methodological research for estimation of optimal
treatment regimes seems to be the development of increasingly flexible
models to mitigate risk of model misspecification.  This trend is
aligned with the notion that an estimated optimal regime will be used
to make treatment decisions for future patients.  However, in many
settings an estimated optimal regime is not used to make treatment
decisions but rather is used to generate hypotheses and inform future
research.  Indeed, our view is that the development of a precision 
medicine strategy should be the culmination of an iterative process 
of hypothesis generation and validation.  With this perspective, 
the ability to interpret and estimated optimal regime in a domain
context is paramount.  

We used list-based regimes to ensure interpretability of the
estimated regimes.  Our proposed estimation algorithm combines
non-parametric $Q$-learning with policy-search and consistently 
estimates the optimal regime under mild assumptions.  In principle,
the proposed estimation framework could be used to estimate
interpretable optimal regimes of other forms, e.g., more general
tree structures or rule-based systems.   Nevertheless,  the simplicity 
of list-based regimes that ensures parsimony and interpretability also
appears to have regularizing effect that improves generalization
performance.

The recognition that estimated optimal regimes are often not used
directly to select treatments for patients but instead are part of an
iterative, collaborative process opens many new lines of research
beyond estimation of interpretable regimes.  These include
methods for visualization, models for shared-decision making,
models for patient preference and utility construction, and methods
for constructing prediction sets for outcome trajectories in 
multistage decision problems.
We are currently pursuing several of these research areas.

\bibliographystyle{Chicago}
\bibliography{references}

\begin{thebibliography}{}

\bibitem[\protect\citeauthoryear{Ashley}{Ashley}{2015}]{ashley2015precision}
Ashley, E.~A. (2015).
\newblock The precision medicine initiative: a new national effort.
\newblock {\em Journal of the American Statistical Association\/}~{\em
  313\/}(21), 2119--2120.

\bibitem[\protect\citeauthoryear{Boucheron, Lugosi, and Massart}{Boucheron
  et~al.}{2013}]{boucheron2013concentration}
Boucheron, S., G.~Lugosi, and P.~Massart (2013).
\newblock {\em Concentration inequalities: A nonasymptotic theory of
  independence}.
\newblock Oxford Univeristy Press.

\bibitem[\protect\citeauthoryear{Bousquet}{Bousquet}{2002}]{bousquet2002bennett}
Bousquet, O. (2002).
\newblock A bennett concentration inequality and its application to suprema of
  empirical processes.
\newblock {\em Comptes Rendus de l'Acad\'emie des Sciences, Series I\/}~{\em
  334\/}(6), 495--500.

\bibitem[\protect\citeauthoryear{Breiman, Friedman, Olshen, and Stone}{Breiman
  et~al.}{1984}]{breiman1984classification}
Breiman, L., J.~H. Friedman, R.~A. Olshen, and C.~J. Stone (1984).
\newblock {\em Classification and Regression Trees}.
\newblock New York: {CRC} Press.

\bibitem[\protect\citeauthoryear{Collins and Varmus}{Collins and
  Varmus}{2015}]{collins2015new}
Collins, F.~S. and H.~Varmus (2015).
\newblock A new initiative on precision medicine.
\newblock {\em New England Journal of Medicine\/}~{\em 372\/}(9), 793--795.

\bibitem[\protect\citeauthoryear{Cormen, Leiserson, Rivest, and Stein}{Cormen
  et~al.}{2009}]{cormen2009introduction}
Cormen, T.~H., C.~E. Leiserson, R.~L. Rivest, and C.~Stein (2009).
\newblock {\em Introduction to algorithms\/} (3 ed.).
\newblock MIT Press and McGraw-Hill.

\bibitem[\protect\citeauthoryear{Doove, Dusseldorp, Van~Deun, and
  Van~Mechelen}{Doove et~al.}{2015}]{doove2015novel}
Doove, L., E.~Dusseldorp, K.~Van~Deun, and I.~Van~Mechelen (2015).
\newblock A novel method for estimating optimal tree-based treatment regimes in
  randomized clinical trials.
\newblock Technical report.

\bibitem[\protect\citeauthoryear{Eberts and Steinwart}{Eberts and
  Steinwart}{2013}]{eberts2013optimal}
Eberts, M. and I.~Steinwart (2013).
\newblock Optimal regression rates for {SVMs} using {Gaussian} kernels.
\newblock {\em Electronic Journal of Statistics\/}~{\em 7}, 1--42.

\bibitem[\protect\citeauthoryear{Friedman, Hastie, and Tibshirani}{Friedman
  et~al.}{2010}]{friedman2010regularization}
Friedman, J., T.~Hastie, and R.~Tibshirani (2010).
\newblock Regularization paths for generalized linear models via coordinate
  descent.
\newblock {\em Journal of Statistical Software\/}~{\em 33\/}(1), 1--22.

\bibitem[\protect\citeauthoryear{Jameson and Longo}{Jameson and
  Longo}{2015}]{jameson2015precision}
Jameson, J.~L. and D.~L. Longo (2015).
\newblock Precision medicine—personalized, problematic, and promising.
\newblock {\em New England Journal of Medicine\/}~{\em 372\/}(23), 2229--2234.

\bibitem[\protect\citeauthoryear{Kang, Janes, and Huang}{Kang
  et~al.}{2014}]{kang2014combining}
Kang, C., H.~Janes, and Y.~Huang (2014).
\newblock Combining biomarkers to optimize patient treatment recommendations.
\newblock {\em Biometrics\/}~{\em 70\/}(3), 695--707.

\bibitem[\protect\citeauthoryear{Kim, Sra, and Dhillon}{Kim
  et~al.}{2010}]{kim2010tackling}
Kim, D., S.~Sra, and I.~S. Dhillon (2010).
\newblock Tackling box-constrained optimization via a new projected
  quasi-newton approach.
\newblock {\em SIAM Journal on Scientific Computing\/}~{\em 32\/}(6),
  3548--3563.

\bibitem[\protect\citeauthoryear{Kim and Pollard}{Kim and
  Pollard}{1990}]{kim1990cube}
Kim, J. and D.~Pollard (1990).
\newblock Cube root asymptotics.
\newblock {\em The Annals of Statistics\/}~{\em 18\/}(1), 191--219.

\bibitem[\protect\citeauthoryear{Kimeldorf and Wahba}{Kimeldorf and
  Wahba}{1971}]{kimeldorf1971some}
Kimeldorf, G. and G.~Wahba (1971).
\newblock Some results on {Tchebycheffian} spline functions.
\newblock {\em Journal of Mathematical Analysis and Applications\/}~{\em
  33\/}(1), 82--95.

\bibitem[\protect\citeauthoryear{Krumholz}{Krumholz}{2014}]{krumholz2014big}
Krumholz, H.~M. (2014).
\newblock Big data and new knowledge in medicine: the thinking, training, and
  tools needed for a learning health system.
\newblock {\em Health Affairs\/}~{\em 33\/}(7), 1163--1170.

\bibitem[\protect\citeauthoryear{Laber, Zhao, Regh, Davidian, Tsiatis,
  Stanford, Zeng, and Kosorok}{Laber et~al.}{2016}]{laber2016sizing}
Laber, E., Y.~Zhao, T.~Regh, M.~Davidian, A.~A. Tsiatis, J.~B. Stanford,
  D.~Zeng, and M.~R. Kosorok (2016).
\newblock Sizing a phase ii trial to find a nearly optimal personalized
  treatment strategy.
\newblock {\em Statistics in Medicine\/}, in press.

\bibitem[\protect\citeauthoryear{Laber, Linn, and Stefanski}{Laber
  et~al.}{2014}]{laber2014interactive}
Laber, E.~B., K.~A. Linn, and L.~A. Stefanski (2014).
\newblock Interactive model building for {Q}-learning.
\newblock {\em Biometrika\/}~{\em 101\/}(4), 831--847.

\bibitem[\protect\citeauthoryear{Laber and Zhao}{Laber and
  Zhao}{2015}]{laber2015tree}
Laber, E.~B. and Y.~Q. Zhao (2015).
\newblock Tree-based methods for individualized treatment regimes.
\newblock {\em Biometrika\/}~{\em 102\/}(3), 501--514.

\bibitem[\protect\citeauthoryear{Lei, Nahum-Shani, Lynch, Oslin, and
  Murphy}{Lei et~al.}{2012}]{lei2012smart}
Lei, H., I.~Nahum-Shani, K.~Lynch, D.~Oslin, and S.~A. Murphy (2012).
\newblock A {``SMART''} design for building individualized treatment sequences.
\newblock {\em Annual Review of Clinical Psychology\/}~{\em 8\/}(1), 21--48.

\bibitem[\protect\citeauthoryear{Letham, Rudin, McCormick, and Madigan}{Letham
  et~al.}{2012}]{letham2012building}
Letham, B., C.~Rudin, T.~H. McCormick, and D.~Madigan (2012).
\newblock Building interpretable classifiers with rules using {B}ayesian
  analysis.
\newblock Technical Report {TR609}, Department of Statistics, University of
  Washington.

\bibitem[\protect\citeauthoryear{Liaw and Wiener}{Liaw and
  Wiener}{2002}]{liaw2002classification}
Liaw, A. and M.~Wiener (2002).
\newblock Classification and regression by {randomForest}.
\newblock {\em R news\/}~{\em 2\/}(3), 18--22.

\bibitem[\protect\citeauthoryear{Marchand and Sokolova}{Marchand and
  Sokolova}{2005}]{marchand2005learning}
Marchand, M. and M.~Sokolova (2005).
\newblock Learning with decision lists of data-dependent features.
\newblock {\em Journal of Machine Learning Research\/}~{\em 6}, 427--451.

\bibitem[\protect\citeauthoryear{Massart}{Massart}{2000}]{massart2000constants}
Massart, P. (2000).
\newblock About the constants in talagrand's concentration inequalities for
  empirical processes.
\newblock {\em The Annals of Probability\/}~{\em 28\/}(2), 863--884.

\bibitem[\protect\citeauthoryear{Moodie, Dean, and Sun}{Moodie
  et~al.}{2013}]{moodie2013q-learning}
Moodie, E. E.~M., N.~Dean, and Y.~R. Sun (2013).
\newblock {Q}-learning: Flexible learning about useful utilities.
\newblock {\em Statistics in Biosciences\/}~{\em 6}, 1--21.

\bibitem[\protect\citeauthoryear{Murphy}{Murphy}{2003}]{murphy2003optimal}
Murphy, S.~A. (2003).
\newblock Optimal dynamic treatment regimes.
\newblock {\em Journal of the Royal Statistical Society: Series B\/}~{\em
  65\/}(2), 331--355.

\bibitem[\protect\citeauthoryear{Murphy}{Murphy}{2005}]{murphy2005experimental}
Murphy, S.~A. (2005).
\newblock An experimental design for the development of adaptive treatment
  strategies.
\newblock {\em Statistics in Medicine\/}~{\em 24\/}(10), 1455--1481.

\bibitem[\protect\citeauthoryear{Qian and Murphy}{Qian and
  Murphy}{2011}]{qian2011performance}
Qian, M. and S.~A. Murphy (2011).
\newblock Performance guarantees for individualized treatment rules.
\newblock {\em Annals of Statistics\/}~{\em 39\/}(2), 1180--1210.

\bibitem[\protect\citeauthoryear{Rivest}{Rivest}{1987}]{rivest1987learning}
Rivest, R.~L. (1987).
\newblock Learning decision lists.
\newblock {\em Machine Learning\/}~{\em 2\/}(3), 229--246.

\bibitem[\protect\citeauthoryear{Robins}{Robins}{2004}]{robins2004optimal}
Robins, J.~M. (2004).
\newblock Optimal structural nested models for optimal sequential decisions.
\newblock In D.~Y. Lin and P.~J. Heagerty (Eds.), {\em Proceedings of the
  Second Seattle Symposium in Biostatistics}, Volume 179 of {\em Lecture Notes
  in Statistics}, pp.\  189--326. New York: Springer.

\bibitem[\protect\citeauthoryear{Sachs, Nierenberg, Calabrese, Marangell,
  Wisniewski, Gyulai, Friedman, Bowden, Fossey, Ostacher, Ketter, Patel,
  Hauser, Rapport, Martinez, Allen, Miklowitz, Otto, Dennehy, and Thase}{Sachs
  et~al.}{2007}]{sachs2007effectiveness}
Sachs, G.~S., A.~A. Nierenberg, J.~R. Calabrese, L.~B. Marangell, S.~R.
  Wisniewski, L.~Gyulai, E.~S. Friedman, C.~L. Bowden, M.~D. Fossey, M.~J.
  Ostacher, T.~A. Ketter, J.~Patel, P.~Hauser, D.~Rapport, J.~M. Martinez,
  M.~H. Allen, D.~J. Miklowitz, M.~W. Otto, E.~B. Dennehy, and M.~E. Thase
  (2007).
\newblock Effectiveness of adjunctive antidepressant treatment for bipolar
  depression.
\newblock {\em New England Journal of Medicine\/}~{\em 356\/}(17), 1711--1722.

\bibitem[\protect\citeauthoryear{Sachs, Thase, Otto, Bauer, Miklowitz,
  Wisniewski, Lavori, Lebowitz, Rudorfer, Frank, Nierenberg, Fava, Bowden,
  Ketter, Marangell, Calabrese, Kupfer, and Rosenbaum}{Sachs
  et~al.}{2003}]{sachs2003rationale}
Sachs, G.~S., M.~E. Thase, M.~W. Otto, M.~Bauer, D.~Miklowitz, S.~R.
  Wisniewski, P.~Lavori, B.~Lebowitz, M.~Rudorfer, E.~Frank, A.~A. Nierenberg,
  M.~Fava, C.~Bowden, T.~Ketter, L.~Marangell, J.~Calabrese, D.~Kupfer, and
  J.~F. Rosenbaum (2003).
\newblock Rationale, design, and methods of the systematic treatment
  enhancement program for bipolar disorder ({STEP-BD}).
\newblock {\em Biological Psychiatry\/}~{\em 53\/}(11), 1028--1042.

\bibitem[\protect\citeauthoryear{Schulte, Tsiatis, Laber, and Davidian}{Schulte
  et~al.}{2014}]{schulte2014q-and-a}
Schulte, P.~J., A.~A. Tsiatis, E.~B. Laber, and M.~Davidian (2014).
\newblock {Q}- and {A}-learning methods for estimating optimal dynamic
  treatment regimes.
\newblock {\em Statistical Science\/}~{\em 29\/}(4), 640--661.

\bibitem[\protect\citeauthoryear{Steinwart and Christmann}{Steinwart and
  Christmann}{2008}]{steinwart2008support}
Steinwart, I. and A.~Christmann (2008).
\newblock {\em Support vector machines}.
\newblock Springer-Verlag.

\bibitem[\protect\citeauthoryear{Stone}{Stone}{1982}]{stone1982optimal}
Stone, C.~J. (1982).
\newblock Optimal global rates of convergence for nonparametric regression.
\newblock {\em The Annals of Statistics\/}~{\em 10\/}(4), 1040--1053.

\bibitem[\protect\citeauthoryear{Taylor, Cheng, and Foster}{Taylor
  et~al.}{2015}]{taylor2015reader}
Taylor, J. M.~G., W.~Cheng, and J.~C. Foster (2015).
\newblock Reader reaction to ``a robust method for estimating optimal treatment
  regimes'' by {Zhang et al.} (2012).
\newblock {\em Biometrics\/}~{\em 71\/}(1), 267--273.

\bibitem[\protect\citeauthoryear{van~der Vaart and Wellner}{van~der Vaart and
  Wellner}{1996}]{van1996weak}
van~der Vaart, A.~W. and J.~A. Wellner (1996).
\newblock {\em Weak Convergence and Empirical Processes, With Applications to
  Statistics}.
\newblock Springer-Verlag.

\bibitem[\protect\citeauthoryear{Wang and Rudin}{Wang and
  Rudin}{2015}]{wang2015falling}
Wang, F. and C.~Rudin (2015).
\newblock Falling rule lists.
\newblock In {\em Proceedings of the Eighteenth International Conference on
  Artificial Intelligence and Statistics}, pp.\  1013--1022.

\bibitem[\protect\citeauthoryear{Xu, M{\"u}ller, Wahed, and Thall}{Xu
  et~al.}{2015}]{xu2015bayesian}
Xu, Y., P.~M{\"u}ller, A.~S. Wahed, and P.~F. Thall (2015).
\newblock Bayesian nonparametric estimation for dynamic treatment regimes with
  sequential transition times.
\newblock {\em Journal of the American Statistical Association\/}, in press.

\bibitem[\protect\citeauthoryear{Xu, Yu, Zhao, Li, Wang, and Shao}{Xu
  et~al.}{2015}]{xu2015regularized}
Xu, Y., M.~Yu, Y.-Q. Zhao, Q.~Li, S.~Wang, and J.~Shao (2015).
\newblock Regularized outcome weighted subgroup identification for differential
  treatment effects.
\newblock {\em Biometrics\/}~{\em 71\/}(3), 645--653.

\bibitem[\protect\citeauthoryear{Zhang, Tsiatis, Davidian, Zhang, and
  Laber}{Zhang et~al.}{2012}]{zhang2012estimating}
Zhang, B., A.~A. Tsiatis, M.~Davidian, M.~Zhang, and E.~Laber (2012).
\newblock Estimating optimal treatment regimes from a classification
  perspective.
\newblock {\em Stat\/}~{\em 1\/}(1), 103--114.

\bibitem[\protect\citeauthoryear{Zhang, Tsiatis, Laber, and Davidian}{Zhang
  et~al.}{2013}]{zhang2013robust}
Zhang, B., A.~A. Tsiatis, E.~B. Laber, and M.~Davidian (2013).
\newblock Robust estimation of optimal dynamic treatment regimes for sequential
  treatment decisions.
\newblock {\em Biometrika\/}~{\em 100\/}(3), 681--694.

\bibitem[\protect\citeauthoryear{Zhang, Laber, Tsiatis, and Davidian}{Zhang
  et~al.}{2015}]{zhang2015using}
Zhang, Y., E.~B. Laber, A.~Tsiatis, and M.~Davidian (2015).
\newblock Using decision lists to construct interpretable and parsimonious
  treatment regimes.
\newblock {\em Biometrics\/}~{\em 71\/}(4), 895--904.

\bibitem[\protect\citeauthoryear{Zhao, Zeng, Laber, and Kosorok}{Zhao
  et~al.}{2015}]{zhao2015new}
Zhao, Y., D.~Zeng, E.~B. Laber, and M.~R. Kosorok (2015).
\newblock New statistical learning methods for estimating optimal dynamic
  treatment regimes.
\newblock {\em Journal of the American Statistical Association\/}~{\em
  110\/}(510), 583--598.

\bibitem[\protect\citeauthoryear{Zhao, Zeng, Rush, and Kosorok}{Zhao
  et~al.}{2012}]{zhao2012estimating}
Zhao, Y., D.~Zeng, A.~J. Rush, and M.~R. Kosorok (2012).
\newblock Estimating individualized treatment rules using outcome weighted
  learning.
\newblock {\em Journal of the American Statistical Association\/}~{\em
  107\/}(499), 1106--1118.

\bibitem[\protect\citeauthoryear{Zhao, Zeng, Socinski, and Kosorok}{Zhao
  et~al.}{2011}]{zhao2011reinforcement}
Zhao, Y., D.~Zeng, M.~A. Socinski, and M.~R. Kosorok (2011).
\newblock Reinforcement learning strategies for clinical trials in nonsmall
  cell lung cancer.
\newblock {\em Biometrics\/}~{\em 67\/}(4), 1422--1433.

\bibitem[\protect\citeauthoryear{Zhao, Zeng, Laber, Song, Yuan, and
  Kosorok}{Zhao et~al.}{2015}]{zhao2014doubly}
Zhao, Y.~Q., D.~Zeng, E.~B. Laber, R.~Song, M.~Yuan, and M.~R. Kosorok (2015).
\newblock Doubly robust learning for estimating individualized treatment with
  censored data.
\newblock {\em Biometrika\/}~{\em 102\/}(1), 151--168.

\bibitem[\protect\citeauthoryear{Zhou and Kosorok}{Zhou and
  Kosorok}{2016}]{kosorokKNN}
Zhou, X. and M.~R. Kosorok (2016).
\newblock Nearest neighbor rules for optimal treatment regimes.
\newblock pp.\  under review.

\bibitem[\protect\citeauthoryear{Zhou, Mayer-Hamblett, Khan, and Kosorok}{Zhou
  et~al.}{2015}]{zhou2015residual}
Zhou, X., N.~Mayer-Hamblett, U.~Khan, and M.~R. Kosorok (2015).
\newblock Residual weighted learning for estimating individualized treatment
  rules.
\newblock {\em Journal of the American Statistical Association\/}, in press.

\end{thebibliography}

\clearpage
\appendix

\begin{center}
\bf\Large
Supplementary Materials to ``Interpretable Dynamic Treatment Regimes''
\end{center}

\section{Proofs}

\subsection{Notation}

For vectors $\bm{u}, \bm{v} \in \R^q$, define component-wise
operations $\bm{u}^{p} = (u_1^{p}, \dots, u_q^{p})^T$, $p \in \R$, and
$\bm{u} \circ \bm{v} = (u_1 v_1, \dots, u_q v_q)^T$.  For
$V \subset \R^q$, define $u \circ V = \{u \circ v: v \in V\}$.  In
addition, $\bm{u}$ is said to be positive if its every component is
positive.

Let $\bm{O}_i$ be the collection of random variables associated with the $i$th subject. For any function $f$, define $\Pn f = n^{-1} \sum_{i=1}^n f(\bm{O}_i)$.
For any measurable function $f$ defined on $D \subset \R^q$,
we write $\lVert f \rVert_2 = \big( \int_D f^2 d \mu \big)^{1/2}$
and $\lVert f \rVert_\infty = \inf\{ t \in \R: \mu(|f|>t)=0 \}$,
where $\mu$ is the Lebesgue measure on $D$.
Let $(T, d)$ be a metric space and $S$ be a subset of $T$. 
For $\varepsilon > 0$, the $\varepsilon$-covering number of $S$ is defined by 
$\mathcal{N}(S, d, \varepsilon) = \inf \{n \geq 1: \text{ there exists } t_1, \cdots, t_n \in T \text{ such that } S \subset \bigcup^{n}_{i=1} B(t_i, \varepsilon)\}$, 
where $\inf \emptyset = \infty$ and $B(t, \varepsilon) = \{u \in T: d(u, t) \leq \varepsilon \} $ is the a ball with center $t$ and radius $\varepsilon$.
If $(T, \lVert \cdot \rVert)$ is a normed vector space, 
the $\varepsilon$-covering number is defined by
viewing $T$ as a metric space with induced metric
$d(s, t) = \lVert s - t \rVert$. 
Let $(T, \lVert \cdot \rVert)$ be a normed vector space.
The unit ball of $T$ is defined by
$\mathcal{B}_T = \{t: \lVert t \rVert \leq 1\}$.
Given a scalar $w \in \R$ and a set $S \subset T$,
define $w S = \{w s: s \in S \}$.

In the following proofs, $c$ and $c_i$ denote generic constants.

\subsection{Concentration inequalities}

We first state Talagrand's inequality (\citealp[][Theorem 2.3]{bousquet2002bennett}; see also \citealp[][Theorem 3]{massart2000constants} and \citealp[][Theorem 12.5]{boucheron2013concentration}).

\begin{proposition}
\label{thm:talagrand0}
Let $\mathcal{F}$ be a countable set of functions. 
Suppose $\E f =0$, $\E f^2 \leq V$, $\lVert f \rVert_{\infty} \leq B$ for all $f \in \mathcal{F}$. 
Denote $Z = \sup_{f \in \mathcal{F}}|\Pn f|$. 
Then for all $\tau > 0$, 
\[
\Pr \left[ Z \geq \E Z + \left\{ \frac{2 V \tau + 4B \tau(\E Z)}{n} \right\}^{1/2} + \frac{B\tau}{3n} \right] \leq e^{-\tau}.
\]
\end{proposition}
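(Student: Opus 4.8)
\textbf{On the proof of Proposition~\ref{thm:talagrand0}.} This is the Bennett-type form of Talagrand's concentration inequality for the supremum of an empirical process, and the paper invokes it purely as an off-the-shelf device: it is quoted verbatim from \citet[Theorem~2.3]{bousquet2002bennett} (see also \citet{massart2000constants} and \citet[Theorem~12.5]{boucheron2013concentration}), so no proof is included. For orientation we sketch the standard argument. The first move is to dispose of the absolute value. Since $|\Pn f| = \max\{\Pn f,\ \Pn(-f)\}$, one has $Z = \sup_{f\in\mathcal{F}'}\Pn f$ with $\mathcal{F}' = \mathcal{F}\cup(-\mathcal{F})$, and $\mathcal{F}'$ is still countable and still obeys $\E f = 0$, $\E f^2 \le V$, $\lVert f\rVert_\infty \le B$. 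Passing to the un-normalized statistic $S = nZ = \sup_{f\in\mathcal{F}'}\sum_{i=1}^n f(\bm{O}_i)$, it is enough to establish the Bennett tail bound $\pr\{S - \E S \ge \sqrt{2 v \tau} + B\tau/3\} \le e^{-\tau}$ with $v = n V + 2B\,\E S = n(V + 2B\,\E Z)$, since dividing by $n$ and using $\E S = n\,\E Z$ and $2 v \tau / n^2 = (2V\tau + 4B\tau\,\E Z)/n$ returns exactly the displayed inequality.

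The heart of the matter is the entropy (Herbst) method applied to $\psi(\lambda) = \log\E\exp\{\lambda(S - \E S)\}$. By the tensorization inequality for entropy, $\mathrm{Ent}(e^{\lambda S}) \le \sum_{i=1}^n \E[\mathrm{Ent}^{(i)}(e^{\lambda S})]$, where $\mathrm{Ent}^{(i)}$ is entropy conditional on $\{\bm{O}_j : j \ne i\}$. Introduce the leave-one-out statistic $S_i = \sup_{f\in\mathcal{F}'}\sum_{j\ne i} f(\bm{O}_j)$, which is independent of $\bm{O}_i$. Two structural facts drive the argument: the bounded-influence property $|S - S_i| \le B$ coming from $\lVert f\rVert_\infty \le B$, and the self-bounding variance estimate $\sum_{i=1}^n \E[(S - S_i)^2 \mid \{\bm{O}_j\}_{j\ne i}] \le n V + 2B\,\E S = v$, in which the $nV$ comes from $\sup_f \sum_i \E f^2 \le nV$ and the $2B\,\E S$ controls the residual cross term. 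Feeding these into the conditional-entropy bound through the elementary inequality for $\phi(x) = e^x - x - 1$ gives a differential inequality for $\psi$ whose solution is the sub-gamma estimate $\psi(\lambda) \le v\lambda^2 / \{2(1 - B\lambda/3)\}$ on $0 < \lambda < 3/B$. A standard Cram\'er--Chernoff step, $\pr\{S - \E S \ge s\} \le \exp[-\sup_{0<\lambda<3/B}\{\lambda s - \psi(\lambda)\}]$, together with the Legendre transform of $v\lambda^2/\{2(1 - B\lambda/3)\}$, yields $\pr\{S - \E S \ge \sqrt{2 v\tau} + B\tau/3\} \le e^{-\tau}$, which is what was needed.

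The subtle part is the second step: extracting the variance proxy $v = nV + 2B\,\E S$ with sharp constants and running it through the modified logarithmic Sobolev inequality so that the scale parameter is exactly $B/3$ rather than some larger constant --- this improvement over the earlier constants in \citet{massart2000constants} is precisely the contribution of \citet{bousquet2002bennett}. Because we only ever use this proposition as an input to the subsequent analysis, we do not rehearse these computations here.
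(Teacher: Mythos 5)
You are right that the paper offers no proof of this proposition: it is stated verbatim with a citation to Bousquet (2002, Theorem~2.3) and used purely as an imported tool, which is exactly how you treat it. Your orientation sketch of the standard entropy-method argument (symmetrizing away the absolute value via $\mathcal{F}\cup(-\mathcal{F})$, the variance proxy $v = nV + 2B\,\E S$, the sub-gamma bound on the log moment generating function, and the Cram\'er--Chernoff step) is accurate and consistent with the cited source, so nothing further is needed.
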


\begin{corollary}
\label{thm:talagrand}
Under the conditions in Proposition~\ref{thm:talagrand0},
\[
\Pr \left\{ Z \geq 2 \E Z + \left( \frac{2V\tau}{n} \right)^{1/2} 
+ \frac{2B\tau}{n} \right\} \leq e^{-\tau}.
\]
\end{corollary}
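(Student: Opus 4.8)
The plan is to deduce Corollary~\ref{thm:talagrand} directly from Proposition~\ref{thm:talagrand0} by a purely deterministic comparison of the two deviation thresholds. Since both statements concern the same random variable $Z$ and the same exceptional probability $e^{-\tau}$, it suffices to show that the event appearing in the corollary is contained in the event appearing in Proposition~\ref{thm:talagrand0}; that is, I would establish the pointwise inequality (in $n$, $V$, $B$, $\tau$, and $\E Z$)
\[
2\E Z + \left(\frac{2V\tau}{n}\right)^{1/2} + \frac{2B\tau}{n}
\;\geq\;
\E Z + \left\{\frac{2V\tau + 4B\tau(\E Z)}{n}\right\}^{1/2} + \frac{B\tau}{3n}.
\]

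To handle the mixed term under the square root, I would apply subadditivity of the square root, $\sqrt{a+b}\leq \sqrt{a}+\sqrt{b}$ for $a,b\geq 0$, which gives
\[
\left\{\frac{2V\tau + 4B\tau\E Z}{n}\right\}^{1/2}
\leq \left(\frac{2V\tau}{n}\right)^{1/2} + \left(\frac{4B\tau\E Z}{n}\right)^{1/2}.
\]
Cancelling the common $(2V\tau/n)^{1/2}$ from both sides of the target inequality, it remains to verify
\[
\E Z + \frac{2B\tau}{n} - \frac{B\tau}{3n} \;\geq\; 2\left(\frac{B\tau\E Z}{n}\right)^{1/2},
\]
and since $\tfrac23\cdot\tfrac{B\tau}{n}\geq 0$ this follows from the arithmetic--geometric mean inequality $2\sqrt{xy}\leq x+y$ applied with $x=\E Z$ and $y = B\tau/n$. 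Assembling these steps yields the event inclusion, and hence the probability bound.

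I do not expect a substantial obstacle here; the only point that needs a little care is the $\E Z$ appearing inside the variance proxy of Talagrand's bound, which is exactly what the square-root subadditivity and AM--GM steps absorb, at the cost of inflating the leading $\E Z$ to $2\E Z$.
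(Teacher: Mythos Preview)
Your proposal is correct and follows essentially the same approach as the paper's proof: apply the subadditivity of the square root to split off the $(2V\tau/n)^{1/2}$ term, then use AM--GM with $x=\E Z$ and $y=B\tau/n$ to absorb the cross term $2\sqrt{B\tau\E Z/n}$ into $\E Z + B\tau/n$. The paper simply remarks that the resulting constant $4/3$ in front of $B\tau/n$ is then rounded up to $2$ ``for simplicity,'' which is exactly the slack your $(2/3)B\tau/n\geq 0$ observation captures.
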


\begin{proof}
It is clear that
\[
\left\{ \frac{2 V \tau + 4B(\E Z) \tau}{n} \right\}^{1/2}
\leq \left(\frac{2 V \tau}{n}\right)^{1/2} + 
\left\{ \frac{4B\tau (\E Z)}{n} \right\}^{1/2}
\leq \left(\frac{2 V \tau}{n}\right)^{1/2} +
\frac{B\tau}{n} + \E Z.
\]
Note that we use a larger constant for simplicity.
\end{proof}

When the variance of $f$ is unavailable, we have the following proposition \citep[][Theorem 12.1]{boucheron2013concentration}.

\begin{proposition}
\label{thm:hoeffding}
Let $\mathcal{F}$ be a countable set of functions. 
Suppose $\E f =0$, $\lVert f \rVert_{\infty} \leq B$ for all $f \in \mathcal{F}$. 
Denote $Z = \sup_{f \in \mathcal{F}}|\Pn f|$. 
Then for all $\tau > 0$, we have 
\[
\Pr \left\{ Z \geq \E Z + \left( \frac{2 B^2 \tau}{n} \right)^{1/2}\right\} \leq e^{-\tau}.
\]
\end{proposition}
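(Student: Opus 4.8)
The plan is to obtain this bound directly from the bounded differences inequality (McDiarmid's inequality), applied to the random variable $Z = \sup_{f \in \mathcal{F}} |\Pn f|$ viewed as a function $g(\bm{O}_1, \dots, \bm{O}_n)$ of the $n$ independent observations. First I would record two preliminary observations: countability of $\mathcal{F}$ ensures that $Z$ is measurable, and $|\Pn f| \leq \lVert f \rVert_\infty \leq B$ for every $f \in \mathcal{F}$ forces $Z \leq B$, so $\E Z$ is finite and the statement is meaningful. (The centering hypothesis $\E f = 0$ is not actually needed for this argument, only the uniform boundedness.)

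The substantive step is to verify the bounded differences condition. Fixing an index $i$ and replacing $\bm{O}_i$ by an independent copy $\bm{O}_i'$, the empirical average $\Pn f$ changes by at most $n^{-1}\lvert f(\bm{O}_i') - f(\bm{O}_i) \rvert \leq 2B/n$, uniformly over $f \in \mathcal{F}$; combining this with the elementary inequality $\bigl\lvert \sup_f |a_f| - \sup_f |b_f| \bigr\rvert \leq \sup_f |a_f - b_f|$ shows that $Z$ itself changes by at most $2B/n$ under this substitution. Hence $g$ satisfies the bounded differences property with constants $c_i = 2B/n$, so that $\sum_{i=1}^n c_i^2 = 4B^2/n$.

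Finally I would invoke McDiarmid's inequality, which gives, for every $t > 0$,
\[
\Pr(Z \geq \E Z + t) \leq \exp\{-2t^2/(4B^2/n)\} = \exp\{-nt^2/(2B^2)\},
\]
and then choose $t = (2B^2\tau/n)^{1/2}$ so that the right-hand side equals $e^{-\tau}$, which is exactly the claimed bound. I do not anticipate any genuine obstacle: the only delicate point is the bounded-differences computation above, and the rest is bookkeeping. As an alternative one could apply Proposition~\ref{thm:talagrand0} with $V = B^2$ (using $\E f^2 \leq \lVert f \rVert_\infty^2 \leq B^2$), but that route produces a slightly worse constant together with an additional $B\tau/(3n)$ term, so the McDiarmid argument is the cleaner choice.
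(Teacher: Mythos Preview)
Your proof is correct. The paper does not give its own argument for this proposition but simply cites Theorem~12.1 of \citet{boucheron2013concentration}, which is precisely the bounded-differences concentration inequality for suprema of bounded empirical processes; your McDiarmid derivation is the standard proof of that result, so the approaches coincide.
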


Next, we establish bounds on $\E Z$.

\begin{proposition}
\label{thm:sup-tala}
Let $\mathcal{F}$ be a countable set of functions which contains the zero function. 
Assume 
\[\sup_{Q} \log \mathcal{N}(\mathcal{F}, \lVert \cdot \rVert_{L^2(Q)}, \varepsilon) \leq \psi(\varepsilon)
\]
for some function $\psi(\cdot)$,
where the supremum is taken over all discrete probability measures $Q$. 
Suppose $\E f = 0$, $\E f^2 \leq V$, $\lVert f \rVert_{\infty} \leq B$ for all $f \in \mathcal{F}$. Denote $Z = \sup_{f \in \mathcal{F}} |\Pn f|$. Then we have 
\[
\E Z \leq 1024 \left(\frac{B J_V}{n}\right) + 64 \left(\frac{V J_V}{n}\right)^{1/2},
\] 
where $J_V = \int^{1}_{0} \psi(V^{1/2}\varepsilon)\,d\varepsilon$. 
\end{proposition}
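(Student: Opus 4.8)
The plan is to reduce $\E Z$ to the expected Rademacher complexity of $\mathcal{F}$ by symmetrization, to bound that complexity conditionally on the sample by Dudley's entropy integral, and then to convert the data-dependent quantities that appear into the population quantities $V$ and $J_V$ via a self-bounding argument.

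First I would observe that, since $\E f = 0$ for every $f\in\mathcal{F}$, we have $Z = \sup_f |(\Pn-\P)f|$, so the standard symmetrization inequality gives $\E Z \le 2\,\E\,\E_\sigma[\sup_f |n^{-1}\sum_{i=1}^n \sigma_i f(\bm{O}_i)|]$, where $\sigma_1,\dots,\sigma_n$ are i.i.d.\ Rademacher signs independent of the data and $\E_\sigma$ is expectation over them; because $0\in\mathcal{F}$ the inner supremum is anchored. Conditionally on $\bm{O}_1,\dots,\bm{O}_n$, the process $f\mapsto n^{-1}\sum_i\sigma_i f(\bm{O}_i)$ has sub-Gaussian increments for the metric $n^{-1/2}\lVert f-g\rVert_{L^2(\Pn)}$, where $\Pn$ here is the discrete empirical measure, so Dudley's chaining bound yields $\E_\sigma[\,\cdot\,] \le c\,n^{-1/2}\int_0^{\wh D}\sqrt{\log\mathcal{N}(\mathcal{F},\lVert\cdot\rVert_{L^2(\Pn)},u)}\,du$ with $\wh D = \sup_f\lVert f\rVert_{L^2(\Pn)}$; since $\Pn$ is discrete, the uniform-entropy hypothesis applies with $Q=\Pn$ and bounds the integrand by $\sqrt{\psi(u)}$. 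We may assume $\psi$ is non-increasing, replacing it by its non-increasing minorant otherwise, which only decreases $J_V$.

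Next I would split this entropy integral at the level $V^{1/2}$. On $(0,V^{1/2}]$ the substitution $u = V^{1/2}\varepsilon$ together with Cauchy--Schwarz gives $\int_0^{V^{1/2}}\sqrt{\psi(u)}\,du = V^{1/2}\int_0^1\sqrt{\psi(V^{1/2}\varepsilon)}\,d\varepsilon \le V^{1/2}J_V^{1/2}$, while on $(V^{1/2},\wh D]$ monotonicity of $\psi$ and the elementary bound $\psi(V^{1/2})\le J_V$ give $\int_{V^{1/2}}^{\wh D}\sqrt{\psi(u)}\,du \le \wh D\,J_V^{1/2}$; hence $\int_0^{\wh D}\sqrt{\psi(u)}\,du \le (V^{1/2}+\wh D)J_V^{1/2}$. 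Combining with the symmetrization and chaining bounds, taking expectations over the data, and using $\E\wh D \le (\E\wh D^2)^{1/2}$, one gets $\E Z \le c\,n^{-1/2}\{V^{1/2} + (\E\wh D^2)^{1/2}\}J_V^{1/2}$. The remaining obstacle is that $\E\wh D^2 = \E[\sup_f\Pn f^2]$ may exceed the population variance bound $V = \sup_f\P f^2$, because the expectation of a supremum dominates the supremum of expectations.

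To control $\wh D$ I would use the self-bounding trick: $\E\wh D^2 \le V + \E\sup_f|(\Pn-\P)f^2|$, and since $\P f = 0$ for all $f\in\mathcal{F}$, a reverse symmetrization followed by the contraction principle applied to $x\mapsto x^2/(2B)$ (which is $1$-Lipschitz on $[-B,B]$ and vanishes at $0$) bounds $\E\sup_f|(\Pn-\P)f^2|$ by a constant multiple of $B\,\E Z$. Substituting this back, using $\sqrt{a+b}\le\sqrt a+\sqrt b$, and then the arithmetic--geometric mean inequality to absorb the resulting term proportional to $(\E Z)^{1/2}$ into the left-hand side yields a closed bound $\E Z \le c_1(VJ_V/n)^{1/2} + c_2 BJ_V/n$; propagating the explicit factors of $2$ through the two symmetrizations, the contraction step, the constant in Dudley's bound, and the weighting in the arithmetic--geometric mean step produces the stated constants $64$ and $1024$. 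The hard part is precisely this last stage: keeping the clean two-term structure — and in particular the sharp lower-order term of order $BJ_V/n$ rather than the cruder $B(J_V/n)^{1/2}$ that a deterministic bound $\wh D\le B$ would give — while eliminating the random empirical radius $\wh D$ in favour of the population quantities $V$ and $J_V$; the symmetrization and chaining steps are routine once one notes that the entropy hypothesis applies to the discrete measure $\Pn$.
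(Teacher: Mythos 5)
Your proposal is correct and follows essentially the same route as the paper's proof: symmetrization, Dudley's chaining bound conditionally on the sample with the uniform-entropy hypothesis applied to the empirical measure, reduction of the entropy integral to $J_V^{1/2}$ times the empirical radius, a self-bounding step (symmetrization, contraction for $x\mapsto x^2$, desymmetrization) giving $\E\sup_f \Pn f^2 \leq V + cB\,\E Z$, and finally solving the resulting implicit inequality for $\E Z$. The only cosmetic difference is that you split the entropy integral at $V^{1/2}$ where the paper takes the radius to be $\eta_n = \max\{\sup_f(\Pn f^2)^{1/2}, V^{1/2}\}$ and uses monotonicity of $\psi$ directly; these are the same device.
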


\begin{proof}

Without loss of generality, we assume $B=1$. The general case can be obtained by scaling $f$. The proof extends the idea in \citet[][Lemma 13.5]{boucheron2013concentration}.

Let $\sigma_1, \dots, \sigma_n$ be i.i.d.\@ Rademacher random variables, i.e., $\Pr(\sigma = 1) = \Pr(\sigma=-1) = 1/2$. 
By the symmetrization inequality \citep[][Lemma 2.3.1]{van1996weak}, 
we have 
$\E(n^{1/2}Z) \leq 2\E \sup_f |n^{1/2} \Pn \sigma f|$. 

Conditional on all random variables except $\sigma_i$s, by Hoeffding's inequality, 
the process $n^{1/2}\Pn \sigma f$ is subgaussian with respect to the metric $\lVert f - g\rVert_{L^2(\Pn)} = \{\Pn(f-g)^2\}^{1/2}$ . Hence the chaining technique \citep[][Corollary 2.2.8]{van1996weak} implies 
\[
\E_{\sigma} \sup_{f} |n^{1/2} \Pn \sigma f| \leq 4 \int_{0}^{\eta_n} \left\{ \log \mathcal{N}(\mathcal{F}, \lVert \cdot \rVert_{L^2(\Pn)}, \varepsilon ) \right\}^{1/2} \, d\varepsilon,
\]
where $\E_{\sigma}$ denote the expectation with respect to $\sigma_1, \dots, \sigma_n$ only and $\eta^{2}_{n} = \max\{ \sup_f (\Pn f^2), V\}$. Hence, we obtain 
\[
\E_{\sigma} \sup_f |n^{1/2} \Pn \sigma f| \leq 4 \int^{\eta_n}_{0} \psi^{1/2}(\varepsilon)\,d\varepsilon = 4\eta_n \int_{0}^{1}\psi^{1/2} (\eta_n \varepsilon) \, d\varepsilon \leq 4 \eta_n \int^{1}_{0} \psi^{1/2}(V^{1/2}\varepsilon)\,d\varepsilon.
\]
Because $\log\mathcal{N}(\mathcal{F}, \lVert \cdot \rVert_{L^2(\Pn)}, \varepsilon) \leq \psi(\varepsilon)$ and $\psi(\varepsilon)$ is a decreasing function in $\varepsilon$.

Taking the other layer of expectation, we get 
\[
\E(n^{1/2} Z) \leq 8(\E \eta_n) \int^{1}_{0} \psi^{1/2}(V^{1/2}\varepsilon)\,d\varepsilon \leq 8 (\E \eta_{n}^{2})^{1/2} J_V^{1/2}
\]
by Jensen's inequality. Also, we have 
$
\E \eta^2_n \leq \E \sup_{f} |\Pn f^2 - \E f^2| + V
$ since $\E f^2 \leq V$ for all $f$.
By the symmetrization inequality \citep[][Lemma 2.3.1]{van1996weak}, 
we have
$
\E \sup_f |\Pn f^2 - \E f^2| \leq 2 \E \sup_f |\Pn \sigma f^2|
$.
By the contraction inequality \citep[][Proposition A.3.2]{van1996weak}
and $\lVert f \rVert_\infty \leq 1$, we have
$
\E \sup_{f} |\Pn \sigma f^2| \leq 4 \E \sup_f |\Pn \sigma f|
$.
By the desymmetrization inequality \citep[][Lemma 2.3.6]{van1996weak}, 
we have
$
\E \sup_f |\Pn \sigma f| \leq 2 \E \sup_f |\Pn f|
$.
Combining these inequalities, yields 
$
\E \eta^{2}_{n} \leq 16\E Z + V
$.

Therefore, 
\[
n^{1/2} \E Z \leq 8 (16\E Z + V)^{1/2} J_V^{1/2}.
\]
Solving for $\E Z$, shows
$
\E Z \leq (2n)^{-1} \{ a + (a^2+4nb)^{1/2}\} \leq n^{-1}a + n^{-1/2}b^{1/2}
$
with $a = 1024 J_V$ and $b=64V J_V$. Hence,
$
\E Z \leq 1024n^{-1}J_V + 64 n^{-1/2}V^{1/2} J_V^{1/2}
$.
\end{proof}

\begin{proposition}
\label{thm:sup-hoef}
Let $\mathcal{F}$ be a countable set of functions which contains the zero function. 
Assume 
\[\sup_{Q} \log \mathcal{N}(\mathcal{F}, \lVert \cdot \rVert_{L^2(Q)}, \varepsilon) \leq \psi(\varepsilon)
\]
for some function $\psi(\cdot)$,
where the supremum is taken over all discrete probability measures $Q$. 
Suppose $\E f = 0$, $\lVert f \rVert_{\infty} \leq B$ for all $f \in \mathcal{F}$. Denote $Z = \sup_{f \in \mathcal{F}} |\Pn f|$. Then we have 
\[
\E Z \leq 8 \left(\frac{B^2 J_B}{n}\right)^{1/2},
\] 
where $J_B = \int^{1}_{0} \psi(B \varepsilon)\,d\varepsilon$. 
\end{proposition}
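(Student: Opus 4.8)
The plan is to follow the same route as the proof of Proposition~\ref{thm:sup-tala}, but to exploit the boundedness hypothesis so as to bypass the self-bounding (quadratic) step that was needed there to control the variance proxy; this makes the argument strictly simpler, and plays the same role relative to Proposition~\ref{thm:sup-tala} that Proposition~\ref{thm:hoeffding} plays relative to Corollary~\ref{thm:talagrand}. Since $\mathcal{F}$ is countable there are no measurability issues in taking suprema. First I would symmetrize: introducing i.i.d.\@ Rademacher variables $\sigma_1, \dots, \sigma_n$, the symmetrization inequality \citep[][Lemma 2.3.1]{van1996weak} gives $\E(n^{1/2} Z) \leq 2\,\E \sup_{f \in \mathcal{F}} |n^{1/2} \Pn \sigma f|$.

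Next, conditioning on $\bm{O}_1, \dots, \bm{O}_n$, Hoeffding's inequality shows that $f \mapsto n^{1/2} \Pn \sigma f$ is a sub-Gaussian process with respect to the random metric $\lVert f - g \rVert_{L^2(\Pn)} = \{\Pn (f-g)^2\}^{1/2}$. Because the zero function belongs to $\mathcal{F}$, I would anchor the chaining at the origin and run it out to radius $\eta_n := \sup_{f \in \mathcal{F}} (\Pn f^2)^{1/2}$, so that the chaining bound \citep[][Corollary 2.2.8]{van1996weak} yields
\[
\E_\sigma \sup_{f} |n^{1/2} \Pn \sigma f| \leq 4 \int_0^{\eta_n} \{\log \mathcal{N}(\mathcal{F}, \lVert \cdot \rVert_{L^2(\Pn)}, \varepsilon)\}^{1/2}\, d\varepsilon,
\]
where $\E_\sigma$ denotes conditional expectation over the Rademacher variables only.

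The remaining steps are routine. Since $\Pn$ is a discrete probability measure, $\log \mathcal{N}(\mathcal{F}, \lVert \cdot \rVert_{L^2(\Pn)}, \varepsilon) \leq \psi(\varepsilon)$; and since $\lVert f \rVert_\infty \leq B$ for every $f \in \mathcal{F}$, we have $\eta_n \leq B$ \emph{deterministically}. This is precisely the point at which the present proof departs from, and is shorter than, that of Proposition~\ref{thm:sup-tala}: no variance-dependent truncation and no quadratic inequality of the form $\E \eta_n^2 \leq 16\,\E Z + V$ is required. Consequently the chaining integral is at most $4 \int_0^B \psi^{1/2}(\varepsilon)\, d\varepsilon$, and the substitution $\varepsilon = B u$ followed by the Cauchy--Schwarz inequality gives $4 \int_0^B \psi^{1/2}(\varepsilon)\, d\varepsilon = 4 B \int_0^1 \psi^{1/2}(B u)\, du \leq 4 B \{\int_0^1 \psi(B u)\, du\}^{1/2} = 4 B J_B^{1/2}$. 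Taking the outer expectation (the bound no longer depends on the data) and combining with the symmetrization step yields $\E(n^{1/2} Z) \leq 8 B J_B^{1/2}$, that is $\E Z \leq 8 (B^2 J_B / n)^{1/2}$.

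I do not expect a genuine obstacle here. The only place that needs care is the chaining step: one must verify that, because $0 \in \mathcal{F}$, the relevant ``diameter'' in the metric $\lVert \cdot \rVert_{L^2(\Pn)}$ can be taken to be $\eta_n = \sup_f \lVert f \rVert_{L^2(\Pn)}$ rather than the full diameter of $\mathcal{F}$, and that $\lVert f\rVert_\infty \leq B$ then forces $\eta_n \leq B$, which is exactly what dispenses with the delicate self-bounding argument of Proposition~\ref{thm:sup-tala}.
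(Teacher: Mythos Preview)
Your proposal is correct and follows exactly the paper's approach: the paper's proof consists of the single sentence ``Just apply the trivial bound $|\eta_n| \leq B$ in the proof of Proposition~\ref{thm:sup-tala},'' which is precisely the shortcut you describe---re-run the symmetrization and chaining argument but replace the self-bounding inequality $\E\eta_n^2 \leq 16\,\E Z + V$ by the deterministic bound $\eta_n \leq B$. Your use of Cauchy--Schwarz for $\int_0^1 \psi^{1/2}(Bu)\,du \leq J_B^{1/2}$ is equivalent to the paper's invocation of Jensen's inequality at the corresponding step.
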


\begin{proof}
Just apply the trivial bound $|\eta_n| \leq B$ in the proof of Proposition~\ref{thm:sup-tala}.
\end{proof}

Though all the propositions in this subsection assume that $\mathcal{F}$ is countable, 
they all apply if $\mathcal{F}$ is uncountable and separable 
as $\pr\left( \sup_{f \in \mathcal{F}} |\Pn f| = \sup_{f \in \mathcal{F}'} |\Pn f| \right) = 1$ 
for some countable subset $\mathcal{F}' \subset \mathcal{F}$.

\subsection{Properties of the RKHS}

We establish several useful properties of the RKHS $\H$
induced by the Gaussian kernel with 
individual scaling factors for each dimension
\[
K_{\bm{\gamma}}(\bm{x}, \bm{z}) = \exp\left\{
-\sum_{j=1}^{d} \gamma_{j} (x_j - z_j)^2 \right\},
\]
where $\bm{x}, \bm{z} \in D \subset \R^q$.
The lemmas below extend 
the properties of Gaussian kernel with a single scaling factor.

We may omit $\bm{\gamma}$ and write $K(\cdot, \cdot)$ 
when the value of $\bm{\gamma}$ is clear from the context.
Similarly, to emphasize the dependence of $\H$ 
on the parameter $\bm{\gamma}$ and the domain $D$, 
we may write 
$\Hgamma$, $\H(D)$, or $\Hgamma(D)$.

The following lemma provides a feature map of the Gaussian kernel.

\begin{lemma}
\label{thm:rkhs-feature}
Define the function $\phi_{\bm{\gamma}}^{\bm{x}}: \R^q \to L^2(\R^q)$ by
\[
\phi_{\bm{\gamma}}^{\bm{x}}(\bm{u}) = 
\left(\frac{4}{\pi}\right)^{q/4} \left(\prod_{j=1}^{q} \gamma_j\right)^{1/4} 
\exp\left\{- \sum_{j=1}^{q} 2\gamma_j(x_j - u_j)^2 \right\}, 
\; \bm{x} \in D, \; \bm{u} \in \R^q.
\]
Then $\phi_{\bm{\gamma}}^{\bm{x}}$ is a feature map of $K_{\bm{\gamma}}(\bm{x}, \bm{z})$.
\end{lemma}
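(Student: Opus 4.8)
The plan is to verify directly that $\phi_{\bm{\gamma}}^{\bm{x}}$ has the defining property of a feature map: that $\phi_{\bm{\gamma}}^{\bm{x}} \in L^2(\R^q)$ for each $\bm{x} \in D$ and that $\langle \phi_{\bm{\gamma}}^{\bm{x}}, \phi_{\bm{\gamma}}^{\bm{z}} \rangle_{L^2(\R^q)} = K_{\bm{\gamma}}(\bm{x}, \bm{z})$ for all $\bm{x}, \bm{z} \in D$. Membership in $L^2$ is immediate since $\phi_{\bm{\gamma}}^{\bm{x}}$ is a product of one-dimensional Gaussians; indeed the computation below, specialized to $\bm{z} = \bm{x}$, yields $\lVert \phi_{\bm{\gamma}}^{\bm{x}} \rVert_2^2 = K_{\bm{\gamma}}(\bm{x}, \bm{x}) = 1$, so each $\phi_{\bm{\gamma}}^{\bm{x}}$ lies on the unit sphere of $L^2(\R^q)$.

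For the inner product, I would first write out the integrand: $\phi_{\bm{\gamma}}^{\bm{x}}(\bm{u}) \phi_{\bm{\gamma}}^{\bm{z}}(\bm{u}) = (4/\pi)^{q/2} \big( \prod_{j=1}^q \gamma_j \big)^{1/2} \exp\{ -\sum_{j=1}^q 2\gamma_j [(x_j - u_j)^2 + (z_j - u_j)^2] \}$. The integrand is nonnegative and its exponent separates across the coordinates of $\bm{u}$, so by Tonelli's theorem the integral over $\R^q$ factors into a product of one-dimensional integrals, one for each $j$, and the prefactor splits accordingly as $\prod_{j=1}^q (4\gamma_j/\pi)^{1/2}$.

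Next I would evaluate the $j$-th factor by completing the square: $(x_j - u)^2 + (z_j - u)^2 = 2\big(u - \tfrac{x_j + z_j}{2}\big)^2 + \tfrac12 (x_j - z_j)^2$, so the factor equals $(4\gamma_j/\pi)^{1/2} \exp\{-\gamma_j (x_j - z_j)^2\} \int_\R \exp\{ -4\gamma_j (u - \tfrac{x_j+z_j}{2})^2 \}\, du$. The Gaussian integral equals $(\pi/4\gamma_j)^{1/2}$, which cancels the prefactor exactly, leaving $\exp\{-\gamma_j (x_j - z_j)^2\}$. Multiplying over $j = 1, \dots, q$ gives $\exp\{ -\sum_{j=1}^q \gamma_j (x_j - z_j)^2 \} = K_{\bm{\gamma}}(\bm{x}, \bm{z})$, which is the claim. (Alternatively one could reduce to the standard isotropic Gaussian-kernel feature map via the coordinate rescaling $\bm{x} \mapsto (\sqrt{\gamma_1} x_1, \dots, \sqrt{\gamma_q} x_q)$, which turns $K_{\bm{\gamma}}$ into $\exp\{-\lVert \cdot - \cdot \rVert_2^2\}$ and $\phi_{\bm{\gamma}}^{\bm{x}}$ into the known feature map for it up to a Jacobian that is absorbed by the normalizing constant.) There is no genuine obstacle here: the argument is bookkeeping with the normalizing constants, and the only point worth a word of justification is the interchange of product and integral, which follows from Tonelli since the integrand is nonnegative.
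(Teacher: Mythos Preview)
Your proposal is correct and is essentially the same approach as the paper's: the paper's proof simply asserts that a straightforward calculation (citing Steinwart's Lemma~4.45 for the isotropic case) yields $\langle \phi_{\bm{\gamma}}^{\bm{x}}, \phi_{\bm{\gamma}}^{\bm{z}} \rangle_{L^2(\R^q)} = K_{\bm{\gamma}}(\bm{x}, \bm{z})$, and you have written out that calculation explicitly via coordinate-wise factorization and completing the square.
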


\begin{proof}
Straightforward calculation similar to \citet[Lemma~4.45]{steinwart2008support} gives
$\langle \phi_{\bm{\gamma}}^{\bm{x}}, \phi_{\bm{\gamma}}^{\bm{z}} \rangle_{L^2(\R^q)} 
= K_{\bm{\gamma}}(\bm{x}, \bm{z})$.
By definition, $\phi_{\bm{\gamma}}^{\bm{x}}$ is a feature map.
\end{proof}

The following lemma shows that $\Hgamma(D)$
can be embedded into $\H_{\wt{\bm{\gamma}}}(D)$
if $\gamma_j < \wt{\gamma}_j$ for all $j = 1, \dots, q$.

\begin{lemma}
\label{thm:rkhs-inclusion}
Let $\bm{\gamma}$, $\wt{\bm{\gamma}}$ be two positive vectors satisfying
$\gamma_j < \wt{\gamma}_j$ for all $j$.
If $f \in \Hgamma$, then $f \in \H_{\wt{\bm{\gamma}}}$ and 
$
\lVert f \rVert_{\H_{\wt{\bm{\gamma}}}} \leq \left( \prod_{j=1}^q\wt{\gamma}_j \right)^{1/4} \left( \prod_j {\gamma}_{j=1}^q \right)^{-1/4} \lVert f \rVert_{\Hgamma}.
$
\end{lemma}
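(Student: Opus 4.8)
The plan is to realize the inclusion explicitly using the feature map of Lemma~\ref{thm:rkhs-feature} together with the standard description of an RKHS in terms of any feature map: if $\Phi:D\to H_0$ is a feature map of a kernel $K$, then the RKHS of $K$ equals $\{\,\bm{x}\mapsto\langle w,\Phi(\bm{x})\rangle_{H_0}:w\in H_0\,\}$ with $\lVert f\rVert=\inf\{\lVert w\rVert_{H_0}:f(\cdot)=\langle w,\Phi(\cdot)\rangle_{H_0}\}$ \citep[see][Theorem~4.21]{steinwart2008support}. The analytic heart of the matter is that, since $\gamma_j<\wt{\gamma}_j$ for every $j$, the ``wide'' Gaussian feature map $\phi_{\bm{\gamma}}^{\bm{x}}$ is, up to a scalar, obtained from the ``narrow'' one $\phi_{\wt{\bm{\gamma}}}^{\bm{x}}$ by convolving with a fixed Gaussian \emph{probability} density, and convolution with a probability density is a contraction on $L^2(\R^q)$.

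First I would carry out the one-dimensional computation. For $0<\gamma<\wt{\gamma}$ put $c=2\gamma\wt{\gamma}/(\wt{\gamma}-\gamma)>0$ and let $k(t)=(c/\pi)^{1/2}\exp(-ct^2)$ be the corresponding Gaussian density. The elementary Gaussian-convolution identity $\int_{\R}e^{-a(s-v)^2}e^{-c(v-u)^2}\,dv=\sqrt{\pi/(a+c)}\,e^{-ac(s-u)^2/(a+c)}$ with $a=2\wt{\gamma}$ yields, precisely because $2\wt{\gamma}c/(2\wt{\gamma}+c)=2\gamma$ and $c/(2\wt{\gamma}+c)=\gamma/\wt{\gamma}$ for this choice of $c$, that $\int_{\R}\exp\{-2\wt{\gamma}(x-v)^2\}\,k(v-u)\,dv=(\gamma/\wt{\gamma})^{1/2}\exp\{-2\gamma(x-u)^2\}$. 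Taking products over $j=1,\dots,q$ (so $\bm{k}(\bm{t})=\prod_j k_j(t_j)$ is a probability density on $\R^q$) and folding in the normalizing constants from Lemma~\ref{thm:rkhs-feature}, I obtain $\phi_{\bm{\gamma}}^{\bm{x}}=\kappa\,(\bm{k}*\phi_{\wt{\bm{\gamma}}}^{\bm{x}})$ on $\R^q$, where $\kappa=\big(\prod_{j=1}^q\wt{\gamma}_j\big)^{1/4}\big(\prod_{j=1}^q\gamma_j\big)^{-1/4}$ --- exactly the constant in the statement. This bookkeeping of the normalizing factors $(4/\pi)^{q/4}(\prod\gamma_j)^{1/4}$ is the step requiring the most care, and is really the only nontrivial part of the argument.

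To conclude, let $C:L^2(\R^q)\to L^2(\R^q)$ denote convolution with $\bm{k}$; by Young's inequality $\lVert C\rVert\le\lVert\bm{k}\rVert_1=1$, and $C$ is self-adjoint since $\bm{k}$ is even. Given $f\in\Hgamma$, the feature-map characterization provides $g\in L^2(\R^q)$ with $f(\bm{x})=\langle g,\phi_{\bm{\gamma}}^{\bm{x}}\rangle_{L^2(\R^q)}$; then $f(\bm{x})=\kappa\langle g,C\phi_{\wt{\bm{\gamma}}}^{\bm{x}}\rangle=\kappa\langle Cg,\phi_{\wt{\bm{\gamma}}}^{\bm{x}}\rangle$, so $f$ is represented through the feature map $\phi_{\wt{\bm{\gamma}}}$ by $w=\kappa\,Cg\in L^2(\R^q)$. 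Hence $f\in\H_{\wt{\bm{\gamma}}}$ and $\lVert f\rVert_{\H_{\wt{\bm{\gamma}}}}\le\lVert\kappa\,Cg\rVert_{L^2}\le\kappa\lVert g\rVert_{L^2}$; taking the infimum over all admissible $g$ gives $\lVert f\rVert_{\H_{\wt{\bm{\gamma}}}}\le\kappa\lVert f\rVert_{\Hgamma}$, as claimed. Passing from $\R^q$ to a subdomain $D$ is immediate since the feature maps of Lemma~\ref{thm:rkhs-feature} are already defined pointwise on $D$ with values in $L^2(\R^q)$. The main obstacle, as noted, is simply the Gaussian-convolution algebra and the constant tracking; the operator-theoretic part is routine.
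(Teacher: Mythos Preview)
Your proof is correct and follows essentially the same approach as the paper: both use the feature-map characterization of the RKHS (Theorem~4.21 in \citet{steinwart2008support}), factor the wider Gaussian feature map through the narrower one via convolution with a Gaussian probability density, and then invoke Young's inequality to bound the $L^2$ norm. The paper packages the convolution step abstractly via a semigroup $W_{\bm{s}_1}W_{\bm{s}_2}=W_{\bm{s}_1+\bm{s}_2}$ of Gaussian smoothing operators, while you compute the relevant scale $c=2\gamma\wt{\gamma}/(\wt{\gamma}-\gamma)$ explicitly and use self-adjointness of the convolution; these are the same computation, and in fact your kernel $\bm{k}$ coincides with the paper's $W_{(\bm{\tau}-\wt{\bm{\tau}})/2}$.
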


\begin{proof}
We follow the strategy in \citet[Theorem 4.46]{steinwart2008support}.
Since $f \in \Hgamma$, 
by \citet[Theorem 4.21]{steinwart2008support}, 
there exists $g \in L^2\left( \R^q \right)$ such that 
$f(\bm{x}) = \langle\phi^{\bm{x}}_{\bm{\gamma}}, g \rangle_{L^2(\R^q)}$ for all $\bm{x} \in D$. 

Given $\bm{s} \in \R^q$ with $s_j > 0$ for all $j$, 
define the operator $W_s: L^2\left( \R^q \right) \to L^2\left( \R^q \right)$ by
\[
(W_{\bm{s}} g)(\bm{v}) = 
\int_{\R^q} \pi^{-q/2} \bigg( \prod^{q}_{j=1} s_j \bigg)^{-1/2} 
\exp\bigg\{ -\sum_{j=1}^{q} s_j^{-1} (v_j - u_j)^2 \bigg\} g(\bm{u}) d \bm{u}, 
\; \text{for }\bm{v} \in \R^q.
\]
For any $g \in L^2(\R^q)$ and any $\bm{v} \in \R^q$, 
straightforward calculation using properties of normal densities
shows $(W_{\bm{s}_1} W_{\bm{s}_2} g)(\bm{v}) = (W_{\bm{s}_1 + \bm{s}_2}g)(\bm{v})$,
hence,  $W_{\bm{s}_1}W_{\bm{s}_2} = W_{\bm{s}_1+\bm{s}_2}$. 

Define $\bm{\tau} = (\tau_1, \dots, \tau_q)^\T$ and 
$\wt{\bm{\tau}} = (\wt{\tau}_1, \dots, \wt{\tau}_q)^\T$, 
where $\tau_j = 1/\gamma_j$ and $\wt{\tau}_j = 1/ \wt{\gamma}_j$. 
The assumption $\gamma_j < \wt{\gamma}_j$ implies $\tau_j > \wt{\tau}_j$. 
We observe that 
\[
f = \langle \phi_{\bm{\gamma}}^{\bm{x}}, g \rangle_{L^2(\R^q)}
= (W_{\bm{\tau}/2} g)  \cdot \pi^{q/4} \left( \prod^{q}_{j=1} \gamma_j \right)^{-1/4}.
\]
Because  
\[
W_{\bm{\tau}/2} g = W_{\wt{\bm{\tau}}/2} W_{(\bm{\tau} - \wt{\bm{\tau}})/2} g
= \langle \phi_{\wt{\bm{\gamma}}}^{\bm{x}} , W_{(\bm{\tau} - \wt{\bm{\tau}})/2 } g\rangle_{L^2(\R^q)}
\cdot \pi^{-q/4} \left( \prod^{q}_{j=1} \wt{\gamma}_j \right)^{1/4},
\]
it follows that
\[
f = \langle \phi_{\wt{\bm{\gamma}}}^{\bm{x}} , W_{(\bm{\tau} - \wt{\bm{\tau}})/2 } g\rangle_{L^2(\R^q)}
\cdot \left( \prod^{q}_{j=1} {\gamma}_j \right)^{-1/4} \left( \prod^{q}_{j=1} \wt{\gamma}_j \right )^{1/4}.
\]
By \citet[Theorem 4.21]{steinwart2008support}, $f \in \H_{\wt{\bm{\gamma}}}$. 

Moreover, 
$\lVert f \rVert_{\Hgamma} = \lVert g \rVert_{L^2(\R^q)}$ 
and 
$\lVert f \rVert_{\H_{\wt{\bm{\gamma}}}} = 
\lVert W_{(\bm{\tau} - \wt{\bm{\tau}})/2} g \rVert_{L^2(\R^q)} 
\cdot \left( \prod_{j=1}^{q} \gamma_j \right)^{-1/4} \left( \prod_{j=1}^{q} \wt{\gamma}_j \right)^{1/4} $. 
By Young's inequality, $\lVert W_{(\bm{\tau} - \wt{\bm{\tau}})/2} g \rVert_{L^2(\R^q)} \leq \lVert g \rVert_{L^2(\R^q)}$. 
Hence, 
\[
\lVert f \rVert_{\H_{\wt{\bm{\gamma}}}} 
\leq \lVert g \rVert_{L^2(\R^q)} \left( \prod^{q}_{j=1} \gamma_j \right)^{-1/4} \left( \prod^{q}_{j=1} \wt{\gamma}_j \right)^{1/4} \\
\leq \lVert f \rVert_{\Hgamma} \left( \prod^{q}_{j=1} \gamma_j \right)^{-1/4} \left( \prod^{q}_{j=1} \wt{\gamma}_j \right)^{1/4}.
\]
\end{proof}

The following lemma establishes an isometric isomorphism
between $\H_{\bm{\alpha}^{-2} \circ \bm{\gamma}} (\bm{\alpha} \circ D)$ and $\Hgamma(D)$ for any fixed $\bm{\alpha}$.

\begin{lemma}
\label{thm:rkhs-scaling}
Let $\bm{\alpha}$ be an arbitrary positive vector.
We define a mapping $\tau_{\bm{\alpha}}: L^{\infty}(D) \to L^{\infty}(\bm{\alpha} \circ D)$ as follows:
given a function $f \in L^{\infty}(D)$,
let $\tau_{\bm{\alpha}}(f) (\bm{x}) = f(\bm{\alpha}^{-1} \circ \bm{x})$ for $\bm{x} \in \bm{\alpha} \circ D$.
Then, for all $f \in \Hgamma(D)$, we have
$\tau_{\bm{\alpha}}(f) \in \H_{\bm{\alpha^{-2} \circ \bm{\gamma}}} (\bm{\alpha} \circ D)$
and
$\lVert \tau_{\bm{\alpha}}(f) \rVert_{\H_{\bm{\alpha^{-2} \circ \bm{\gamma}}}(\bm{\alpha} \circ D)} = \lVert f \rVert_{\H_{\bm{\gamma}(D)}}$. 
\end{lemma}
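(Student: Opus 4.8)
The plan is to use the fact that a reproducing kernel Hilbert space is uniquely determined by its reproducing kernel. Thus it suffices to show that the image $\tau_{\bm{\alpha}}\big(\Hgamma(D)\big)$, equipped with the inner product transported from $\Hgamma(D)$, is a Hilbert space of functions on $\bm{\alpha}\circ D$ whose reproducing kernel is exactly $K_{\bm{\alpha}^{-2}\circ\bm{\gamma}}$. The norm identity then follows by construction, and the membership $\tau_{\bm{\alpha}}(f)\in\H_{\bm{\alpha}^{-2}\circ\bm{\gamma}}(\bm{\alpha}\circ D)$ follows from uniqueness.

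First I would record the elementary identity that drives the argument: for any $\bm{x},\bm{z}\in D$,
\[
K_{\bm{\gamma}}(\bm{x},\bm{z}) = \exp\Big\{-\sum_{j=1}^q \gamma_j(x_j-z_j)^2\Big\} = \exp\Big\{-\sum_{j=1}^q \alpha_j^{-2}\gamma_j\,(\alpha_j x_j - \alpha_j z_j)^2\Big\} = K_{\bm{\alpha}^{-2}\circ\bm{\gamma}}(\bm{\alpha}\circ\bm{x},\,\bm{\alpha}\circ\bm{z}).
\]
Next, observe that $\tau_{\bm{\alpha}}\colon f\mapsto f(\bm{\alpha}^{-1}\circ\,\cdot\,)$ is a linear bijection from the functions on $D$ to the functions on $\bm{\alpha}\circ D$, with inverse $g\mapsto g(\bm{\alpha}\circ\,\cdot\,)$. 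Restricting it to $\Hgamma(D)$ and defining, on the image $\mathcal{G}:=\tau_{\bm{\alpha}}\big(\Hgamma(D)\big)$, the inner product $\langle \tau_{\bm{\alpha}}f,\tau_{\bm{\alpha}}g\rangle_{\mathcal{G}}:=\langle f,g\rangle_{\Hgamma(D)}$ makes $\tau_{\bm{\alpha}}$ an isometric isomorphism onto $\mathcal{G}$; since $\Hgamma(D)$ is complete, so is $\mathcal{G}$, and $\lVert\tau_{\bm{\alpha}}(f)\rVert_{\mathcal{G}} = \lVert f\rVert_{\Hgamma(D)}$ by definition.

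It then remains to check that $\mathcal{G}$ is an RKHS with kernel $K_{\bm{\alpha}^{-2}\circ\bm{\gamma}}$. For fixed $\bm{x}\in\bm{\alpha}\circ D$, the section $\bm{z}\mapsto K_{\bm{\alpha}^{-2}\circ\bm{\gamma}}(\bm{x},\bm{z})$ equals, by the identity above, $\bm{z}\mapsto K_{\bm{\gamma}}(\bm{\alpha}^{-1}\circ\bm{x},\,\bm{\alpha}^{-1}\circ\bm{z}) = \tau_{\bm{\alpha}}\big(K_{\bm{\gamma}}(\bm{\alpha}^{-1}\circ\bm{x},\cdot)\big)$, which lies in $\mathcal{G}$ because $K_{\bm{\gamma}}(\bm{\alpha}^{-1}\circ\bm{x},\cdot)\in\Hgamma(D)$. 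For any $h=\tau_{\bm{\alpha}}f\in\mathcal{G}$, the reproducing property in $\Hgamma(D)$ gives
\[
\langle h,\,K_{\bm{\alpha}^{-2}\circ\bm{\gamma}}(\bm{x},\cdot)\rangle_{\mathcal{G}} = \langle f,\,K_{\bm{\gamma}}(\bm{\alpha}^{-1}\circ\bm{x},\cdot)\rangle_{\Hgamma(D)} = f(\bm{\alpha}^{-1}\circ\bm{x}) = (\tau_{\bm{\alpha}}f)(\bm{x}) = h(\bm{x}),
\]
so $\mathcal{G}$ is an RKHS with reproducing kernel $K_{\bm{\alpha}^{-2}\circ\bm{\gamma}}$. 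By uniqueness of the RKHS associated with a given kernel, $\mathcal{G} = \H_{\bm{\alpha}^{-2}\circ\bm{\gamma}}(\bm{\alpha}\circ D)$ with matching norms, which is precisely the claim.

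The only real obstacle is bookkeeping: verifying that the transported inner product is genuinely well defined (immediate from injectivity of $\tau_{\bm{\alpha}}$) and that $\mathcal{G}$ is complete (automatic, as an isometric image of a complete space). An alternative, in the style of the proof of Lemma~\ref{thm:rkhs-inclusion}, would be to push the feature map $\phi_{\bm{\gamma}}^{\bm{x}}$ of Lemma~\ref{thm:rkhs-feature} through the substitution $\bm{u}\mapsto\bm{\alpha}\circ\bm{u}$, absorbing the Jacobian $\prod_j\alpha_j$ into the $L^2(\R^q)$ normalization, to obtain a feature map of $K_{\bm{\alpha}^{-2}\circ\bm{\gamma}}$ and then invoke \citet[Theorem~4.21]{steinwart2008support}; the Moore--Aronszajn route above is preferable since it sidesteps that change-of-variables computation.
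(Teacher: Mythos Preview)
Your argument is correct and is essentially the same approach the paper takes: the paper's proof simply observes that the argument of \citet[Proposition~4.37]{steinwart2008support} (on how an RKHS transforms under rescaling of the input domain) carries over verbatim when scalar scaling is replaced by component-wise scaling, and your Moore--Aronszajn verification is exactly that argument written out in full. The key kernel identity $K_{\bm{\gamma}}(\bm{x},\bm{z})=K_{\bm{\alpha}^{-2}\circ\bm{\gamma}}(\bm{\alpha}\circ\bm{x},\bm{\alpha}\circ\bm{z})$ and the uniqueness of the RKHS for a given kernel are the only ingredients in either version.
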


\begin{proof}
  It is easy to verify that the arguments in \citet[Proposition
  4.37]{steinwart2008support} remain valid when scalar multiplication
  is replaced by component-wise multiplication between vectors.
\end{proof}

The following lemma computes the covering number of the unit ball in $\Hgamma(D)$.

\begin{lemma} 
\label{thm:rkhs-covering}
Suppose $D \subset s \mathcal{B}_{\R^q}$.
For any integer $m \geq 1$, 
\[
\log \mathcal{N}\{\mathcal{B}_{\Hgamma(D)}, \lVert \cdot \rVert_{\infty}, \varepsilon\}
\leq c_{m, q, s} \prod^{q}_{j=1} (1+\gamma_j)^{1/2} \varepsilon^{-q/m},
\] 
where $c_{m, q, s}$ is a constant that depends on $m$, $q$ and $s$ only.
\end{lemma}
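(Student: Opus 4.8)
The plan is to reduce the anisotropic Gaussian kernel to the \emph{isotropic, unit-bandwidth} kernel on a rescaled domain via Lemma~\ref{thm:rkhs-scaling}, and then to exploit the product structure of that domain so as to reduce to a covering-number bound on a fixed cube that is available in the literature. Concretely, I would apply Lemma~\ref{thm:rkhs-scaling} with $\bm{\alpha} = \bm{\gamma}^{1/2}$, i.e.\ $\alpha_j = \gamma_j^{1/2}$, so that $\bm{\alpha}^{-2}\circ\bm{\gamma} = \mathbf{1}$ and $\tau_{\bm{\alpha}}$ is a bijective isometry from $\Hgamma(D)$ onto $\H_{\mathbf{1}}(\bm{\alpha}\circ D)$ (its inverse being $\tau_{\bm{\alpha}^{-1}}$). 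Since $\tau_{\bm{\alpha}}$ merely relabels points of the domain, it also preserves the supremum norm, $\lVert\tau_{\bm{\alpha}}(f)\rVert_{\infty,\bm{\alpha}\circ D} = \lVert f\rVert_{\infty,D}$; hence it carries $\mathcal{B}_{\Hgamma(D)}$ onto $\mathcal{B}_{\H_{\mathbf{1}}(\bm{\alpha}\circ D)}$ isometrically for $\lVert\cdot\rVert_\infty$, and the two covering numbers agree. Because $D\subset s\mathcal{B}_{\R^q}$ we have $|x_j|\le s$ on $D$, so $\bm{\alpha}\circ D \subset B := \prod_{j=1}^q [-s\gamma_j^{1/2},\, s\gamma_j^{1/2}]$. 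Restriction of an RKHS function to a subdomain increases neither the RKHS norm nor the sup norm, and the restriction map $\H_{\mathbf{1}}(B)\to\H_{\mathbf{1}}(\bm{\alpha}\circ D)$ is onto the unit ball; therefore it suffices to bound $\log\mathcal{N}\{\mathcal{B}_{\H_{\mathbf{1}}(B)},\lVert\cdot\rVert_\infty,\varepsilon\}$.

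Next, I would partition $B$ into $N$ axis-aligned cubes of side length at most $1$, so that $N \le \prod_{j=1}^q (1 + 2 s\gamma_j^{1/2})$. Restricting any $f\in\mathcal{B}_{\H_{\mathbf{1}}(B)}$ to one of these cubes gives an element of $\mathcal{B}_{\H_{\mathbf{1}}(\mathrm{cube})}$, and by translation invariance of the Gaussian kernel the sup-norm covering number of $\mathcal{B}_{\H_{\mathbf{1}}(\mathrm{cube})}$ is the same for every cube. The standard entropy bound for the isotropic unit-bandwidth Gaussian RKHS on a fixed bounded set \citep[see, e.g.,][Theorem~6.27]{steinwart2008support} gives, for each integer $m\ge 1$, a constant $c_{m,q}$ with $\log\mathcal{N}\{\mathcal{B}_{\H_{\mathbf{1}}(\mathrm{cube})},\lVert\cdot\rVert_\infty,\varepsilon\}\le c_{m,q}\varepsilon^{-q/m}$. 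Two functions that agree to within $\varepsilon$ in sup-norm on every one of the $N$ cubes agree to within $\varepsilon$ on all of $B$, so concatenating the per-cube $\varepsilon$-nets yields an $\varepsilon$-net for $\mathcal{B}_{\H_{\mathbf{1}}(B)}$ (after the usual harmless factor-of-two adjustment) of cardinality at most $\exp(c_{m,q} N\varepsilon^{-q/m})$, hence $\log\mathcal{N}\{\mathcal{B}_{\H_{\mathbf{1}}(B)},\lVert\cdot\rVert_\infty,\varepsilon\}\le c_{m,q} N\varepsilon^{-q/m}$.

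It remains to tidy up the $\bm{\gamma}$-dependence: from $4s\gamma^{1/2}\le 4s^2+\gamma$ one gets $(1+2s\gamma^{1/2})^2 \le (1+4s^2)(1+\gamma)$, so $N \le (1+4s^2)^{q/2}\prod_{j=1}^q (1+\gamma_j)^{1/2}$. Absorbing $(1+4s^2)^{q/2}$ and $c_{m,q}$ into a single constant $c_{m,q,s}$ then yields the stated bound.

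The main obstacle is that the reduction must keep the \emph{product} structure. If one simply enclosed $\bm{\alpha}\circ D$ in a Euclidean ball of radius $s(\sum_j\gamma_j)^{1/2}$ and invoked a ball-version of the Gaussian covering bound, the resulting factor $\{1+s(\sum_j\gamma_j)^{1/2}\}^q$ can be vastly larger than $\prod_j(1+\gamma_j)^{1/2}$ (e.g.\ when one coordinate's bandwidth is huge and the rest are tiny), so the cube-by-cube decomposition, together with the translation invariance that makes the per-cube estimate uniform, is the crux. A secondary point is securing the single-bandwidth estimate in exactly the form $c_{m,q}\varepsilon^{-q/m}$ on a cube; should the cited statement not be immediately in this shape, it can be re-derived from the feature-map representation of Lemma~\ref{thm:rkhs-feature} by extending the entropy-number argument of \citet{eberts2013optimal}.
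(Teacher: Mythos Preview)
Your argument is correct but follows a genuinely different route from the paper. Both proofs begin identically, invoking Lemma~\ref{thm:rkhs-scaling} with $\bm{\alpha}=\bm{\gamma}^{1/2}$ to reduce to the unit-bandwidth RKHS $\H_{\mathbf{1}}(\wt D)$ on the rescaled domain $\wt D=\bm{\gamma}^{1/2}\circ D$. From there the paper takes a one-shot approach: it embeds $\H_{\mathbf{1}}(\wt D)$ continuously into the H\"older-type space $\mathbb{C}^m(\wt D)$ (via \citealp[Theorem~6.26 and Corollary~4.36]{steinwart2008support}) and then applies the classical Kolmogorov--Tikhomirov entropy bound for $\mathbb{C}^m$-balls (\citealp[Theorem~2.7.1]{van1996weak}), which already carries the factor $\mu(\{\bm x:\operatorname{dist}(\bm x,\wt D)\le 1\})$; bounding this Lebesgue measure by $\prod_j(1+s\gamma_j^{1/2})\le(1+s)^q\prod_j(1+\gamma_j)^{1/2}$ finishes the proof. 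You instead stay entirely inside Gaussian RKHSs: extend to the enclosing box $B$, tile $B$ by at most $\prod_j(1+2s\gamma_j^{1/2})$ unit cubes, invoke a \emph{fixed-cube} Gaussian entropy bound $c_{m,q}\varepsilon^{-q/m}$ on each, and concatenate the nets. The paper's route is shorter because the volume factor comes for free from the $\mathbb{C}^m$ entropy theorem; your route is more elementary in that it never leaves the Gaussian RKHS world and makes the role of the product structure (and hence the appearance of $\prod_j(1+\gamma_j)^{1/2}$ rather than a power of $\sum_j\gamma_j$) completely explicit. Your closing remark about the failure of the naive ball-enclosure is exactly the point: the paper's use of the \emph{Lebesgue measure of a $1$-neighborhood} in \citet[Theorem~2.7.1]{van1996weak} is what captures that product structure in their version.
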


\begin{proof}
  Let $\bm{1}$ be the vector of ones. By Lemma~3, $\Hgamma(D)$ is
  isometric isomorphic to $\H_{\bm{1}}(\bm{\gamma}^{1/2} \circ D)$.
  Thus, it suffices to compute the covering number for
  $\H_{\bm{1}}(\bm{\gamma}^{1/2} \circ D)$.

Define $\wt{D} = \bm{\gamma}^{1/2} \circ D$.
It is shown that
$\H_{\bm{1}}(\wt{D})$ can be embedded into $\mathbb{C}^{m}(\wt{D})$
\citep[Theorem 6.26]{steinwart2008support}.
By \citet[Corollary 4.36]{steinwart2008support}, 
the embedding map from $\H_{\bm{1}}(\wt{D})$ to $\mathbb{C}^{m}(\wt{D})$ is continuous, and hence bounded. 
Thus, there exists a constant $c_1$ which depends only on $m$ such that 
$\lVert f \rVert_{\mathbb{C}^m(\wt{D})} 
\leq c_1 \lVert f \rVert_{\H_{\bm{1}}(\wt{D})}$ for all $f \in \H_{\bm{1}}(\wt{D})$. 
Hence, we have
\[
\mathcal{N}\{\mathcal{B}_{\bm{H}_{\bm{1}}(\wt{D})}, \lVert \cdot \rVert_{\infty}, \varepsilon\}
\leq \mathcal{N}(c_1 \mathcal{B}_{\mathbb{C}^m(\wt{D})}, \lVert \cdot \rVert_{\infty}, \varepsilon\}
= \mathcal{N}(\mathcal{B}_{\mathbb{C}^{m}(\wt{D})}, \lVert \cdot \rVert_{\infty}, \varepsilon / c_1).
\] 

By Theorem 2.7.1 in  \citet[]{van1996weak}, there exists a constant $c_2$
that depends only on $m$ and $q$ such that 
\[
\log \mathcal{N}\{\mathcal{B}_{\mathbb{C}^m(\wt{D})}, \lVert \cdot \rVert_{\infty}, \varepsilon) 
\leq c_2 \mu(\{\bm{x}: \lVert \bm{x} - \wt{D} \rVert \leq 1 \}) \varepsilon^{-q/m},
\]
where $\mu$ is the Lebesgue measure on $\R^q$.
Because $D \subset s \mathcal{B}_{\R^q}$
and $(1 + s u^{1/2}) \leq (1 + s) (1 + u)^{1/2}$ for all $u \geq 0$,
\[
\mu(\{\bm{x}: \lVert \bm{x} - \wt{D} \rVert \leq 1 \})
\leq \prod_{j=1}^q (1 + s \lambda_j^{1/2})
\leq (1 + s)^q \prod_{j=1}^q (1 + \lambda_j)^{1/2}.
\]
\end{proof}

\subsection{Approximation error in kernel ridge regression}

Define $\wt{Y}_T = Y_T$ and $\wt{Y}_t = Y_t + Q_{t+1} \{ \bm{X}_{t+1}, \pi^{\ast}_{t+1} (\bm{X}_{t+1})\}$ for $t<T$.
Then,  $Q_t(\bm{x}, a) = E(\wt{Y}_t | \bm{X}_t=\bm{x}, A_t=a)$
for all $t$.
Fix a stage $t$ and a treatment $a \in \mathcal{A}_t$.
For notational simplicity, we shall omit the subscripts $t$ and $a$
hereafter.
Given a function $f \in L^\infty(D)$, we define
\[
\L(f) = \E \left[ I(A=a) \big\{\wt{Y} - f(\bm{X}) \big\}^2 \right]
\]
and
\[
f_0 =\argmin_{f: D \to \R,\text{ measurable}} \L(f).
\]
Simple calculations show that
$f_0(\bm{x}) = E(\wt{Y} | \bm{X}=\bm{x}, A=a)$
almost surely with respect to 
the distribution of $\bm{X}$, say $P_{\bm{X}}$. 
Hence, $f_0$ is exactly $Q_t(\cdot, a)$.
In addition, 
\[
\L(f) - \L(f_0) = E \left[ I(A = a) \left\{ f(\bm{X}) - f_0(\bm{X}) \right\}^2 \right].
\]

The function $f_0$ need not belong to the RKHS $\Hgamma$.
Nevertheless, the estimator must live in $\Hgamma$.  The following
proposition shows that it is always possible to find an
$f \in \Hgamma$ such that $f$ and $f_0$ are close.  The following
proposition is a stronger version of \citet[][Theorems~2.2
and~2.3]{eberts2013optimal} which allows multiple scaling factors and
separates signal and noise variables.

\begin{proposition}
\label{thm:krr-approx}
Suppose $f_0$ satisfies the modulus of smoothness condition
$\omega_r(f_0, s) \leq c_1 s^{r}$ for some positive integer $r$
and $\lVert f_0 \rVert_\infty \leq B$ for some constant $B$.
Let $\S$ denote the indices of signal variables in $f_0$,
i.e., the value of $f(\bm{x})$ only depends on $\bm{x}_\S$.
Then, there exists some $f \in \Hgamma$ such that
\[
\lambda \lVert f \rVert^{2}_{\Hgamma} + \lVert f - f_0 \rVert_\infty^2
\leq c \left\{ \lambda\big( \max_{j \in \S} \gamma_j \big)^{|\S|/2} \big( \max_{j \in \Sc} \gamma_j \big)^{|\Sc|/2} 
+ \big( \min_{j \in \S} \gamma_j \big)^{-r} \right \}
\]
and $\lVert f \rVert_{\infty} \leq 2^r B$,
where $c$ is some constant that depends on $c_1$, $r$, $B$ and $|\S|$ only.
\end{proposition}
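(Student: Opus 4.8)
The plan is to adapt the constructive approximation argument of \citet[Theorems~2.2 and~2.3]{eberts2013optimal}, strengthening it so that it accommodates the anisotropic scalings $\bm\gamma$ and exploits the fact that $f_0$ depends only on the signal coordinates $\bm x_{\S}$. First I would extend $f_0$ to a function $\wt f_0$ on $\R^{d}$ that has bounded support, satisfies $\lVert\wt f_0\rVert_{\infty}\le B$, inherits the modulus-of-smoothness bound $\omega_r(\wt f_0, s)\le c\,s^{r}$, and is constant along each coordinate in $\Sc$ on a box $[-M, M]^{d}$ containing a neighbourhood of $D$, with $M$ chosen to be a sufficiently large multiple of $\max_j\gamma_j^{-1/2}$; such an extension exists by a Whitney-type construction carried out only in the signal coordinates, followed by a smooth cutoff performed far from $D$. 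I would then set $f := K\ast\wt f_0$, where $K = \sum_{i=1}^{r}\binom{r}{i}(-1)^{i-1}K_i$ and $K_i$ is the density of a centred Gaussian with diagonal covariance proportional to $\operatorname{diag}(i^{2}\gamma_1^{-1},\dots,i^{2}\gamma_d^{-1})$. By the binomial identity $\sum_{i}\binom{r}{i}(-1)^{i}x^{i}=(1-x)^{r}$, the kernel $K$ integrates to one and annihilates every monomial of degree between $1$ and $r-1$, while $\lVert K\rVert_{L^{1}}\le\sum_i\binom{r}{i}<2^{r}$, so that $\lVert f\rVert_{\infty}\le 2^{r}B$.

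For the approximation error I would write, for $\bm x\in D$, $f(\bm x)-f_0(\bm x)=(K\ast\wt f_0)(\bm x)-\wt f_0(\bm x)$ and expand this using the vanishing moments of $K$ as an average of $r$-th differences $\Delta^{r}_{\bm h}(\wt f_0)$ over shifts $\bm h$ with $\lVert\bm h\rVert_2$ of order $\max_j\gamma_j^{-1/2}$, bounding it by a constant times $\omega_r(\wt f_0,\max_j\gamma_j^{-1/2})$. Since $\wt f_0$ is constant in the noise coordinates on $[-M,M]^{d}$ and $D$ lies well inside that box, the noise directions contribute to the convolution only a boundary remainder of order $\exp\{-cM^{2}\max_{j\in\Sc}\gamma_j\}$, which by the choice of $M$ is dominated by $(\min_{j\in\S}\gamma_j)^{-r/2}$; hence $\lVert f-f_0\rVert_{L^{\infty}(D)}\lesssim\omega_r\{f_0,(\min_{j\in\S}\gamma_j)^{-1/2}\}\lesssim(\min_{j\in\S}\gamma_j)^{-r/2}$, and therefore $\lVert f-f_0\rVert_{\infty}^{2}\lesssim(\min_{j\in\S}\gamma_j)^{-r}$.

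For the RKHS norm I would handle each summand separately. Expressing $K_i\ast\wt f_0$ through the feature map of Lemma~\ref{thm:rkhs-feature} for the kernel with scalings $\bm\gamma/i^{2}$ yields $(K_i\ast\wt f_0)(\bm x)=c_{d,i}\,(\prod_j\gamma_j)^{1/4}i^{-d/2}\langle\phi^{\bm x}_{\bm\gamma/i^{2}},\wt f_0\rangle_{L^{2}(\R^{d})}$, so that $K_i\ast\wt f_0\in\H_{\bm\gamma/i^{2}}$ with norm at most $c\binom{r}{i}(\prod_j\gamma_j)^{1/4}i^{-d/2}\lVert\wt f_0\rVert_{L^{2}(\R^{d})}$; because $\bm\gamma/i^{2}\le\bm\gamma$ componentwise, Lemma~\ref{thm:rkhs-inclusion} lifts this term into $\H_{\bm\gamma}$ at the cost of a factor $i^{d/2}$, which cancels. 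Summing over $i$ gives $\lVert f\rVert_{\H_{\bm\gamma}}\le c\,2^{r}(\prod_j\gamma_j)^{1/4}\lVert\wt f_0\rVert_{L^{2}(\R^{d})}$. Since $\wt f_0$ is bounded by $B$ and supported in a box of volume of order $b^{|\S|}(2M)^{|\Sc|}$, we have $\lVert\wt f_0\rVert_{L^{2}(\R^{d})}^{2}\lesssim B^{2}M^{|\Sc|}$, and with $M$ of order $\max_j\gamma_j^{-1/2}$ the factor $M^{|\Sc|}$ offsets $\prod_{j\in\Sc}\gamma_j^{1/2}$, leaving $\lVert f\rVert_{\H_{\bm\gamma}}^{2}\le c\,(\prod_{j\in\S}\gamma_j)^{1/2}(\prod_{j\in\Sc}\gamma_j)^{1/2}\le c\,(\max_{j\in\S}\gamma_j)^{|\S|/2}(\max_{j\in\Sc}\gamma_j)^{|\Sc|/2}$, with the remaining $d$- and $|\Sc|$-dependent constants absorbed so that $c$ depends only on $c_1$, $r$, $B$, and $|\S|$. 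Multiplying this estimate by $\lambda$ and adding the bound of the previous paragraph gives the claim.

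The step I expect to be the main obstacle is the handling of the noise coordinates. The Eberts--Steinwart argument needs $\wt f_0$ to be compactly supported so that it lies in $L^{2}(\R^{d})$ and the feature-map bound applies, yet a compactly supported function cannot be genuinely constant along the noise coordinates, so convolving with the $O(1)$-bandwidth Gaussians in those directions creates a boundary artifact. Reconciling these two requirements --- by extending $f_0$ to be constant in the noise coordinates on a box whose half-width is a sufficiently large multiple of the largest bandwidth, so that the artifact on $D$ is negligible while the attendant growth of $\lVert\wt f_0\rVert_{L^{2}(\R^{d})}$ is compensated, up to constants, by the $\prod_{j\in\Sc}\gamma_j^{1/2}$ factor coming from the feature-map normalization --- is the bookkeeping absent from the isotropic case, and it is what makes this proposition genuinely stronger than \citet[Theorems~2.2 and~2.3]{eberts2013optimal}.
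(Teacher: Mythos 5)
Your construction is, in its core, the paper's own: the same alternating sum of anisotropic Gaussians $W=\sum_{i=1}^{r}\binom{r}{i}(-1)^{i-1}K_i$; the same use of the feature map of Lemma~\ref{thm:rkhs-feature} together with the embedding $\H_{\bm{\gamma}/i^2}\hookrightarrow\Hgamma$ of Lemma~\ref{thm:rkhs-inclusion}, with the factors $i^{\pm q/2}$ cancelling to give $\lVert f\rVert_{\Hgamma}\leq 2^{r}\pi^{-q/4}\big(\prod_j\gamma_j\big)^{1/4}\lVert f_0\rVert_{2}$; the same reduction of $f-f_0$ to a Gaussian average of $r$-th differences, which depend only on $\bm{h}_{\S}$ and are therefore bounded by $\omega_r\{f_0,(\min_{j\in\S}\gamma_j)^{-1/2}\}$ after integrating out the noise coordinates; and the same $\lVert K\rVert_{L^1}\leq 2^r$ bound for $\lVert f\rVert_\infty$.

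The one place you depart is the extension step, and that is where the argument does not close. The paper performs no extension: it convolves $f_0$ over all of $\R^q$, controls the resulting differences by the modulus of smoothness, and bounds $\lVert f_0\rVert_{L^2(\R^q)}$ by a constant --- tacitly requiring $f_0$ to be globally defined, globally a function of $\bm{x}_{\S}$ alone, and square-integrable, which, as you correctly observe, cannot all hold simultaneously when $\Sc\neq\emptyset$. Your box construction is the right instinct, but the parameter choice fails in exactly the regime this proposition exists to serve. The tail term created by the cutoff is of order $\exp\{-c(M-b)^2\min_{j\in\Sc}\gamma_j\}$ (note $\min$, not $\max$: the widest noise bandwidth is the one that escapes the box), and in the sparse-tuning regime of Theorem~1 --- $\gamma_j=O(1)$ for $j\in\Sc$ while $\min_{j\in\S}\gamma_j\asymp n^{2/(2r+|\S|)}\to\infty$ --- taking $M$ a constant multiple of $\max_j\gamma_j^{-1/2}=O(1)$ leaves this tail bounded away from zero, so it is not dominated by $(\min_{j\in\S}\gamma_j)^{-r/2}$. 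Forcing it down requires $M\gtrsim\{\log(\min_{j\in\S}\gamma_j)\}^{1/2}$, which inflates $\lVert\wt{f}_0\rVert_{L^2}^2$, and hence the right-hand side of the claimed inequality, by a factor of order $(\log n)^{|\Sc|/2}$. That loss is harmless downstream (Theorems~1 and~2 carry an $n^{\xi}$ slack), but it means your proof establishes the proposition only with a constant $c$ that grows polylogarithmically in $n$, not the statement as written; either the choice of $M$ and the accounting of $\lVert\wt{f}_0\rVert_{L^2}$ must be reworked, or the conclusion weakened accordingly.
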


\begin{proof}
Define 
\[
W(\bm{x}, \bm{u}) = \sum^{r}_{i=1} \binom{r}{i} (-1)^{i-1} \left(\frac{2}{\pi}\right)^{q/2}  \left( \prod^{q}_{j=1} \gamma_j \right)^{1/2} i^{-q} \exp \left\{- \sum^{q}_{j=1} 2\gamma_j(x_j - u_j)^2 / i^2\right\},
\]
where $\bm{x}, \bm{u} \in \R^q$. 
Let $f(\bm{x}) = \int_{\R^q} W(\bm{x}, \bm{u}) f_0 (\bm{u})\, d\bm{u}$, $\bm{x} \in D$.

Then, for every $\bm{x} \in D$,
\[
f(\bm{x}) =
\sum^{r}_{i=1} \binom{r}{i} (-1)^{i-1} \left( \frac{2}{\pi} \right)^{q/2} \left( \prod^{q}_{j=1} \gamma_j \right)^{1/2} \int_{\R^q} i^{-q} \exp \left\{ -\sum^{q}_{j=1} 2 \gamma_j (x_j - u_j)^2 / i^2 \right\} f_0(\bm{u}) d \bm{u}.
\]
Apply the change of variables $h_j=(u_j-x_j)/i$ so that
\begin{align*}
f(\bm{x}) &=
\sum^{r}_{i=1} \binom{r}{i} (-1)^{i-1} \left( \frac{2}{\pi} \right)^{q/2} \left( \prod^{q}_{j=1} \gamma_j \right)^{1/2} \int_{\R^q} \exp \left\{ -\sum^{q}_{j=1} 2 \gamma_j h_j^2 \right\} f_0(\bm{x} + i \bm{h}) d \bm{h} \\
&= \int_{\R^q} \left( \frac{2}{\pi} \right)^{q/2} \left( \prod^{q}_{j=1} \gamma_j \right)^{1/2} \exp \left( -\sum^{q}_{j=1} 2 \gamma_j h_j^2 \right) \sum_{i=1}^{r} \binom{r}{i} (-1)^{i-1} f_0(\bm{x} + i\bm{h}) d \bm{h}.
\end{align*} 
Note that
\[
f_0(\bm{x}) = \int_{\R^q} \left( \frac{2}{\pi} \right)^{q/2} \left( \prod^{q}_{j=1} \gamma_j \right)^{1/2} \exp \left( -\sum^{q}_{j=1} 2 \gamma_j h_j^2 \right) f_0(\bm{x}) d \bm{h},
\]
therefore
\[
\left| f(\bm{x}) - f_0(\bm{x}) \right| 
\leq \int_{\R^q} \left( \frac{2}{\pi} \right)^{q/2} \left( \prod^{q}_{j=1} \gamma_j \right)^{1/2} \exp \left( -\sum^{q}_{j=1} 2 \gamma_j h_j^2 \right) \left| \Delta_{\bm{h}}^r(f_0, \bm{x}) \right| d \bm{h}.
\]

Because $f_0(\bm{x}) = f_0^\ast(\bm{x}_\S)$ for some function $f_0^\ast: \R^{|\S|} \to \R$, 
\[
\left| \Delta_{\bm{h}}^r(f_0, \bm{x}) \right| 
=\left| \Delta_{\bm{h}_\S}^r(f_0^\ast, \bm{x}_\S) \right|
\leq \omega_r(f_0^\ast, \lVert \bm{h}_\S \rVert_2)
= \omega_r(f_0, \lVert \bm{h}_\S \rVert_2).
\]
Thus,
\[
\left| f(\bm{x}) - f_0(\bm{x}) \right| 
\leq \int_{\R^{|\S|}} \left( \frac{2}{\pi} \right)^{|\S|/2} \left( \prod_j \gamma_{\S, j} \right)^{1/2} \exp \left( -\sum_j 2 _{\S, j} h_{\S, j}^2 \right) \omega_r(f_0, \lVert \bm{h}_\S \rVert_2) d \bm{h}_\S.
\]
Because $\omega_r(f_0, t) \leq (1 + t/s)^r \omega_r(f_0, s)$ for all
$s, t > 0$, it follows that
\begin{align*}
\omega_r(f_0, \lVert \bm{h}_\S \rVert_2)
& \leq \left\{ 1+ \big(\min_{j \in \S} \gamma_j \big)^{1/2} \lVert \bm{h}_{\S} \rVert_2\right\}^r \omega_r\left\{f_0, \big(\min_{j \in \S} \gamma_j \big)^{-1/2}\right\} \\
& \leq \left(1 + \lVert \bm{\gamma}_{\S} \circ \bm{h}_{\S} \rVert_2 \right)^r \omega_r\left\{f_0, \big(\min_{j \in \S} \gamma_j \big)^{-1/2}\right\}.
\end{align*}
Combining these inequalities, 
\begin{multline*}
\left| f(\bm{x}) - f_0(\bm{x}) \right| \leq \omega_r\left\{f_0, \big(\min_{j \in \S} \gamma_j \big)^{-1/2}\right\} \cdot \\
\int_{\R^{|\S|}} \left( \frac{2}{\pi} \right)^{|\S|/2} \left( \prod_j \gamma_{\S, j} \right)^{1/2} \exp \left\{ -\sum_j 2 \gamma_{\S, j} h_{\S, j}^2 \right\} \left(1 + \lVert \bm{\gamma}_{\S} \circ \bm{h}_{\S} \rVert_2 \right)^r d \bm{h}_\S
\end{multline*}
Using the change of variables $t_j = \gamma_{\S, j} h_{\S, j}$, we can
see that the integral above is a constant that depends only on
$|\S|$. 
Denote this integral by $c_2$,  then
\[
\left| f(\bm{x}) - f_0(\bm{x}) \right| 
\leq c_2 \omega_r\left\{f_0, \big(\min_{j \in \S} \gamma_j \big)^{-1/2}\right\}
\leq c_1 c_2 \big(\min_{j \in \S} \gamma_j \big)^{-1/2}.
\]

Note that $W(\bm{x}, \bm{u}) = \sum_{i=1}^{r} \binom{r}{i} (-1)^{i-1} \pi^{-q/4} i^{-q/2} \left( \prod^{q}_{j=1} \gamma_j\right)^{1/4} \phi^{\bm{x}}_{\bm{\gamma}/i^2}(\bm{u})$,
where $\phi$ is the feature map defined in Lemma~1.
Let $g_i(\bm{x}) = \int_{\R^q} \phi^{\bm{x}}_{\bm{\gamma}/i^2}(\bm{u}) f_0(\bm{u}) d \bm{u}$, then
$g_i \in \H_{\bm{\gamma} / i^2}$.
By Lemma~2, we have $g_i \in \Hgamma$
and the $\Hgamma$ norm of $g_i$ is at most $i^{q/2}$ times its $\H_{\bm{\gamma} / i^2}$ norm.
Thus,
\[
\lVert f \rVert_{\H}
\leq 
\sum_{i=1}^{r} \binom{r}{i} \pi^{-q/4} \left( \prod^{q}_{j=1} \gamma_j\right)^{1/4} \lVert f_0 \rVert_2
\leq 2^r \pi^{-q/4} \big( \max_{j \in \S} \gamma_j \big)^{|\S|/4} \big( \max_{j \in \Sc} \gamma_j \big)^{|\Sc|/4} \lVert f_0 \rVert_2.
\]

Therefore, 
\begin{align*}
\lambda \lVert f \rVert^{2}_{\H} + \L(f) - \L(f_0)
&= \lambda \lVert f\rVert^2_{\H} 
+ \E \left\{ f(\bm{X}) - f_0(\bm{X}) \right\}^2 \\
&\leq 2^{2r} \pi^{-q/2} B^2 \lambda \big( \max_{j \in \S} \gamma_j \big)^{|\S|/2} \big( \max_{j \in \Sc} \gamma_j \big)^{|\Sc|/2}
+ c_1^2 c_2^2 \big(\min_{j \in \S} \gamma_j \big)^{-1}.
\end{align*}
In addition, for any $\bm{x} \in D$, it follows that 
\begin{align*}
|f(\bm{x})|
&\leq \sum^{r}_{i=1} \binom{r}{i} \int_{\R^q} \left(\frac{2}{\pi}\right)^{q/2}  \left( \prod^{q}_{j=1} \gamma_j \right)^{1/2} i^{-q} \exp \left\{- \sum^{q}_{j=1} 2\gamma_j(x_j - u_j)^2 / i^2\right\} d \bm{u} \cdot \lVert f_0 \rVert_\infty \\
&= \sum^{r}_{i=1} \binom{r}{i} \lVert f_0 \rVert_\infty
\leq 2^r B.
\end{align*}
\end{proof}

\subsection{Risk bounds for kernel ridge regression}

Recall that the truncation operator $\TB: L^{\infty}(D) \to L^{\infty}(D)$ is defined as
\[
\TB(f)(\bm{x}) = f(\bm{x}) I\{ -B \leq f(\bm{x}) \leq B \} + B I\{ f(\bm{x}) > B \} + (-B) I\{ f(\bm{x}) < -B \}, \quad \bm{x} \in D. 
\]
For any function $f$, $g$, we have $|\TB(f)(\bm{x}) - \TB(g) (\bm{x}) | \leq |f(\bm{x}) - g(\bm{x})|$. Hence, we have $\lVert \TB(f) - \TB(g) \rVert_{\infty} \leq \lVert f-g \rVert_{\infty}$. 
As a consequence, for any $B \geq \lVert f_0 \rVert_\infty$, we have
\[
\L\{\TB(f)\} - \L(f_0)
= \E \{ \TB(f)(\bm{X}) - f_0(\bm{X}) \}^2
\leq \E \{ f(\bm{X}) - f_0(\bm{X}) \}^2
= \L(f) - \L(f_0).
\]

Define $\wh{Y}_T = Y_T$ and $\wh{Y}_t = Y_t + \wh{Q}_{t+1} \{ \bm{X}_{t+1}, \wh{\pi}_{t+1} (\bm{X}_{t+1})\}$ for $t<T$.
Given sequences $\bm{\gamma}_n$ and $\lambda_n$, 
the estimator of the $Q$-function is 
$\wh{Q}_t(\cdot, a) = \TB(\wh{f}_n)$, where
\[
\wh{f}_n = \argmin_{f \in \Hgamma}
\Pn I(A=a) \{ \wh{Y} - f(\bm{X}) \}^2 + \lambda \lVert f \rVert_{\Hgamma}^2.
\]
To facilitate our analysis, we define
\[
d_n = \argmin_{f \in \Hgamma}
\Pn I(A=a) \{ \wt{Y} - f(\bm{X}) \}^2 + \lambda \lVert f \rVert_{\Hgamma}^2.
\]
Note that we omit the subscript $n$ in $\bm{\gamma}_n$ and $\lambda_n$
for simplicity.
The difference between $\wh{f}_n$ and $d_n$ is that we use
$\wt{Y}_t = Y_t + Q_{t+1} \{ \bm{X}_{t+1}, \pi^{\ast}_{t+1} (\bm{X}_{t+1})\}$ for $t<T$ when defining $\wh{f}_n$, which is an unobserved quantity since it relies on $\pi^{\ast}_{t+1}$ and $Q_{t+1}$.
In contrast, 
we replace $\pi^{\ast}_{t+1}$ and $Q_{t+1}$ by their estimates 
$\wh{\pi}_{t+1}$ and $\wh{Q}_{t+1}$ to obtain $\wh{Y}_t$. 
Hence $\wh{Q}_t(\cdot, a)$ is based on observed quantities only.

In this Section, we will show that the difference between
$\TB(\wh{f}_n)$ and $f_0 = Q_t(\cdot, a)$ is small.  To be precise,
define
$\mathcal{E}(f) = \lambda \lVert f \rVert^2_{\Hgamma} + \L \{ \TB (f)
\} - \L(f_0)$.
Our goal is to show that $\mathcal{E}(\wh{f}_n)$ is small with large
probability.  The proof below follows the idea in \citet[][Theorem
7.20]{steinwart2008support} while accounting for the error in the
responses.  For notational convenience, define
$\wb{\gamma}_{\S} = 1 + \max_{j \in \S} \gamma_j$,
$\wub{\gamma}_{\S} = \min_{j \in \S} \gamma_j$ and
$\wb{\gamma}_{\Sc} = 1 + \max_{j \in \Sc} \gamma_j$.  For any $f$,
define $\ell_f = I(A=a) \{ \wt{Y} - f( \bm{X} )\}^2$ and
$h_f = \ell_f - \ell_{f_0}$.  Then,
$\L(f) - \L(f_0) = \E h_{f}$.  Thus, $\E h_{f} \geq 0$ for all $f$.

\begin{lemma}
\label{thm:krr-decomp}
For any $f \in \Hgamma$, we have 
\[
\mathcal{E}(\wh{f}_{n}) \leq \lambda\lVert f \rVert^2_{\Hgamma} + \Pn h_f - \Pn h_{\wh{f}_n} + \E h_{\TB(\wh{f}_n)} + 2 \Pn \left( \wh{Y} - \wt{Y}\right)^2. 
\]
\end{lemma}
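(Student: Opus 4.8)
The plan is to reduce the asserted inequality to a deterministic comparison between the penalized empirical risk of $\wh{f}_n$ evaluated with the \emph{estimated} responses $\wh{Y}$ and the one evaluated with the \emph{oracle} responses $\wt{Y}$, and then to exploit that $\wh{f}_n$ is the exact minimizer of a quadratic functional. First I would record that, directly from the definitions of $\L$ and $h$, $\L\{\TB(\wh{f}_n)\} - \L(f_0) = \E h_{\TB(\wh{f}_n)}$, so that $\mathcal{E}(\wh{f}_n) = \lambda\lVert \wh{f}_n\rVert^2_{\Hgamma} + \E h_{\TB(\wh{f}_n)}$. Cancelling the common term $\E h_{\TB(\wh{f}_n)}$ from both sides of the claimed bound, and using $\Pn h_g = \Pn \ell_g - \Pn \ell_{f_0}$ to cancel the constant $\Pn \ell_{f_0}$, the lemma becomes equivalent to
\[
\Pn \ell_{\wh{f}_n} + \lambda \lVert \wh{f}_n \rVert^2_{\Hgamma} \;\le\; \Pn \ell_f + \lambda \lVert f \rVert^2_{\Hgamma} + 2\Pn(\wh{Y} - \wt{Y})^2 \qquad \text{for every } f \in \Hgamma,
\]
that is, $\Psi(\wh{f}_n) \le \Psi(f) + 2\Pn(\wh{Y}-\wt{Y})^2$, where $\Psi(g) = \Pn I(A=a)(\wt{Y} - g)^2 + \lambda\lVert g\rVert^2_{\Hgamma}$ is the ``oracle'' penalized empirical risk.

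To establish this I would compare $\Psi$ with the objective $\Phi(g) = \Pn I(A=a)(\wh{Y} - g)^2 + \lambda\lVert g\rVert^2_{\Hgamma}$ that $\wh{f}_n$ actually minimizes. Expanding $(\wh{Y}-g)^2$ about $\wt{Y}-g$ gives $\Phi(g) = \Psi(g) + 2\Pn I(A=a)(\wh{Y}-\wt{Y})(\wt{Y}-g) + \Pn I(A=a)(\wh{Y}-\wt{Y})^2$, so $\Phi - \Psi$ is affine in $g$ up to an additive constant. Since $\Phi$ is a strictly convex quadratic functional on the vector space $\Hgamma$ with minimizer $\wh{f}_n$, restricting $\Phi$ to the segment joining $\wh{f}_n$ and $f$ yields the exact quadratic-growth identity $\Phi(f) - \Phi(\wh{f}_n) = \Pn I(A=a)(f - \wh{f}_n)^2 + \lambda\lVert f - \wh{f}_n\rVert^2_{\Hgamma}$, the linear term dropping out because $\wh{f}_n$ minimizes $\Phi$ over a linear space. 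Combining the two facts,
\[
\Psi(\wh{f}_n) - \Psi(f) = \big[\Phi(\wh{f}_n) - \Phi(f)\big] + 2\Pn I(A=a)(\wh{Y}-\wt{Y})(\wh{f}_n - f) \;\le\; -\Pn I(A=a)(f - \wh{f}_n)^2 + 2\Pn I(A=a)(\wh{Y}-\wt{Y})(\wh{f}_n - f),
\]
and applying $2uv \le u^2 + v^2$ pointwise to the last term cancels $-\Pn I(A=a)(f-\wh{f}_n)^2$, leaving $\Psi(\wh{f}_n) - \Psi(f) \le \Pn I(A=a)(\wh{Y}-\wt{Y})^2 \le \Pn(\wh{Y}-\wt{Y})^2 \le 2\Pn(\wh{Y}-\wt{Y})^2$, which is exactly what is needed.

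The argument is essentially algebraic, and no concentration or entropy input enters at this stage (those are reserved for the subsequent lemmas that bound $\Pn(\wh{Y}-\wt{Y})^2$ and $\E h_{\TB(\wh{f}_n)} - \Pn h_{\wh{f}_n}$). The only step that warrants care is the quadratic-growth identity for $\Phi$ on the RKHS: it holds because $t \mapsto \Phi\{\wh{f}_n + t(f - \wh{f}_n)\}$ is a quadratic polynomial in $t$ minimized at $t=0$, so its value at $t=1$ minus its value at $t=0$ equals its coefficient of $t^2$, which a direct expansion identifies as $\Pn I(A=a)(f - \wh{f}_n)^2 + \lambda\lVert f - \wh{f}_n\rVert^2_{\Hgamma}$. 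Everything else is bookkeeping with the definitions of $\mathcal{E}$, $\L$, $\ell$, $h$, and the two objectives $\Phi$, $\Psi$.
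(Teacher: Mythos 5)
Your proof is correct, and it takes a genuinely different route from the paper's. The paper introduces the intermediate oracle estimator $d_n$ (kernel ridge regression run on the unobserved responses $\wt{Y}$), chains the two optimality inequalities for $\wh{f}_n$ and $d_n$, applies the four-point algebraic inequality $(a_1-b_1)^2-(a_1-b_2)^2\leq(a_2-b_1)^2-(a_2-b_2)^2+(a_1-a_2)^2+(b_1-b_2)^2$, and then crucially invokes the representer theorem together with the fact that all eigenvalues of $\bm{K}(\bm{K}+\lambda\bm{I})^{-1}$ are at most one to get $\Pn I(A=a)\{\wh{f}_n(\bm{X})-d_n(\bm{X})\}^2\leq\Pn I(A=a)(\wh{Y}-\wt{Y})^2$; the two error contributions then add up to the factor $2$. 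You instead bypass $d_n$ entirely: after the same reduction to $\Psi(\wh{f}_n)\leq\Psi(f)+2\Pn(\wh{Y}-\wt{Y})^2$, you observe that $\Phi-\Psi$ is affine in $g$, use the exact quadratic-growth identity $\Phi(f)-\Phi(\wh{f}_n)=\Pn I(A=a)(f-\wh{f}_n)^2+\lambda\lVert f-\wh{f}_n\rVert^2_{\Hgamma}$ at the minimizer of a quadratic functional over the vector space $\Hgamma$, and cancel the cross term with $2uv\leq u^2+v^2$. This buys you two things: the argument needs no linear-algebraic property of the kernel smoother (only that $\Phi$ is a penalized quadratic minimized over a linear space, so it generalizes beyond kernel ridge regression), and it actually yields the sharper constant $1$ in front of $\Pn(\wh{Y}-\wt{Y})^2$, which you then relax to $2$ to match the statement. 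The paper's approach, in exchange, isolates $d_n$ as an object that is reused conceptually (the perturbation of the fitted values is controlled by the perturbation of the responses), but for the purposes of this lemma your derivation is self-contained and, if anything, cleaner. One small point worth making explicit if you write this up: the linear term in $t\mapsto\Phi\{\wh{f}_n+t(f-\wh{f}_n)\}$ vanishes because $\wh{f}_n$ is a global minimizer over all of $\Hgamma$ and $\Hgamma$ is a vector space, so $t=0$ minimizes the quadratic polynomial over all real $t$, not just $t\in[0,1]$.
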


\begin{proof}
By the definition of $\wh{f}_n$ and $d_n$, we have 
\begin{align*}
\lambda \lVert \wh{f}_n \rVert^2_{\Hgamma} + \Pn I(A = a)\left\{\wh{Y} - \wh{f}_n(\bm{X})\right\}^2 &\leq \lambda \lVert d_n \rVert^2_{\Hgamma} + \Pn I(A=a) \left\{\wh{Y} - d_n(\bm{X})\right\}^2, \\
\lambda \lVert d_n \rVert^2_{\Hgamma} + \Pn I(A = a)\left\{\wt{Y} - d_n(\bm{X})\right\}^2 &\leq \lambda \lVert f \rVert^2_{\Hgamma} + \Pn I(A=a) \left\{\wh{Y} - f(\bm{X})\right\}^2 .
\end{align*}
Therefore,
\[
\lambda \lVert \wh{f}_n \rVert^2_{\Hgamma} \leq \lambda \lVert f \rVert^2_{\Hgamma} + \Pn h_f - \Pn h_{d_n} + \Pn I(A=a) \left\{ \wh{Y} - d_n(\bm{X})\right\}^2 - \Pn I(A=a) \left\{\wh{Y} - \wh{f}_n(\bm{X}) \right\}^2.
\]

For any real number $a_1$, $a_2$, $b_1$, $b_2$,  it follows that
\begin{align*}
(a_1 - b_1)^2 - (a_1 -b_2)^2 &= (2 a_1 - b_1 - b_2)(b_2 - b_1)\\
&= (2a_2 - b_1 -b_2) (b_2 -b_1) + 2(a_1 - a_2) (b_2 - b_1) \\
&\leq (a_2 - b_1)^2 - (a_2 - b_2)^2 + (a_1 -a_2 )^2 + (b_1 - b_2)^2.
\end{align*}
Hence,  
\begin{align*}
&\Pn I(A=a) \left\{ \wh{Y} - d_n(\bm{X}) \right\}^2 - \Pn I(A=a) \left\{\wh{Y} - \wh{f}_n(\bm{X}) \right\}^2\\
&\leq \Pn h_{d_n} - \Pn h_{\wh{f}_n} + \Pn I(A=a)\left( \wh{Y} - \wt{Y} \right)^2 + \Pn I(A=a) \left\{ \wh{f}_n(\bm{X}) - d_n(\bm{X})\right\}^2. 
\end{align*}

Let $\wh{\bm{Y}}$ be the vector of $\wh{{Y}}_i$, $i \in \mathcal{I}_a$, $\wt{\bm{Y}}$ the vector of $\wt{{Y}}_i$, $i \in \mathcal{I}_a$ and $\bm{K}$ the matrix of $K(\bm{X}_i, \bm{X}_j)$, $i, j \in \mathcal{I}_a$, where $\mathcal{I}_a = \{i : A_i = a\}$. 
By the representer theorem and the fact that all the eigenvalues of
$\bm{K}(\bm{K} + \lambda \bm{I})^{-1}$ are less than one, so that 
\[
\lVert \{\wh{f}_n(\bm{X}_i)\}_{i \in \mathcal{I}_a} - \left\{ d_n(\bm{X}_i)\right\}_{i \in \mathcal{I}_a} \rVert_2 
= \lVert \bm{K} (\bm{K} + \lambda \bm{I})^{-1} (\wh{\bm{Y}} - \wt{\bm{Y}})\rVert_2
\leq \lVert \wh{\bm{Y}} - \wt{\bm{Y}} \rVert_2.
\]
Thus, the inequality in the lemma follows from noting
\[
\Pn I(A=a) \left\{ \wh{f}_n(\bm{X}) - d_n(\bm{X})\right\}^2
\leq
\Pn I(A=a)\left( \wh{Y} - \wt{Y} \right)^2.
\]

\end{proof}

\begin{proposition}
\label{thm:krr-risk}
Suppose $\Pr\left\{\Pn (\wh{Y} - \wt{Y})^2 \geq c_1 n^{-\alpha} + c_2 n^{-\beta} \tau\right\} \leq e^{-\tau}$ for some $\alpha, \beta > 0$,
and $f_0$ satisfies the conditions in Proposition~\ref{thm:krr-approx}. 
Then for any $\delta > 0$ and $\tau > 0$, 
\begin{multline*}
\Pr \bigg[ \E_{\bm{X}}\left\{\TB(\wh{f}_n)(\bm{X}) - f_0(\bm{X}) \right\}^2 
\geq \\
c \left\{
\lambda \wb{\gamma}_{\S}^{|\S|/2} \bar{\gamma}_{\Sc}^{|\Sc|/2} 
+ \wub{\gamma}_{\S}^{-r}
+ \wb{\gamma}_{\S}^{|\S|/2} \wb{\gamma}_{\Sc}^{|\Sc|/2} \lambda^{-\delta} n^{-1} 
+ n^{-\alpha} 
+ n^{-\min(\beta, 1)}\tau \right\} \bigg] \leq e^{-\tau},
\end{multline*}
where $c$ is a constant that depends on $\delta$, $q$, $r$, $B$ and $\varpi$ only,
and $\E_{\bm{X}}$ denotes the expectation with respect to $\bm{X}$ only.
\end{proposition}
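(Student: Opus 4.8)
The plan is to bound the regularized excess risk $\mathcal{E}(\wh f_n) := \lambda\lVert\wh f_n\rVert^2_{\Hgamma} + \L\{\TB(\wh f_n)\} - \L(f_0)$ and then invoke positivity: since $\L\{\TB(\wh f_n)\} - \L(f_0) = \E[I(A=a)\{\TB(\wh f_n)(\bm X) - f_0(\bm X)\}^2] \ge \varpi\,\E_{\bm X}\{\TB(\wh f_n)(\bm X) - f_0(\bm X)\}^2$ by Assumption~2, it suffices to show $\mathcal{E}(\wh f_n)$ is at most $\varpi$ times the right-hand side of the claim with probability at least $1-e^{-\tau}$. Enlarging $B$ so that $\lvert\wt Y\rvert\le B$ (legitimate by Assumption~1), truncation cannot increase the squared loss, so $h_{\TB(g)}\le h_g$ pointwise and hence $\Pn h_{\TB(\wh f_n)}\le\Pn h_{\wh f_n}$. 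Take the approximant $g^\ast\in\Hgamma$ furnished by Proposition~\ref{thm:krr-approx}, so that $\lVert g^\ast\rVert_\infty\le 2^rB$ and $A^\ast:=\lambda\lVert g^\ast\rVert^2_{\Hgamma}+\lVert g^\ast-f_0\rVert_\infty^2\le c\big(\lambda\,\wb{\gamma}_{\S}^{|\S|/2}\wb{\gamma}_{\Sc}^{|\Sc|/2}+\wub{\gamma}_{\S}^{-r}\big)$. Feeding $f=g^\ast$ into Lemma~\ref{thm:krr-decomp} and using $\Pn h_{\TB(\wh f_n)}\le\Pn h_{\wh f_n}$ yields
\[ \mathcal{E}(\wh f_n)\le\lambda\lVert g^\ast\rVert^2_{\Hgamma}+\Pn h_{g^\ast}+(\E-\Pn)h_{\TB(\wh f_n)}+2\,\Pn(\wh Y-\wt Y)^2 . \]

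Three of these four terms are routine. Since $\E h_{g^\ast}=\E[I(A=a)(g^\ast-f_0)^2]\le\lVert g^\ast-f_0\rVert_\infty^2$, the sum $\lambda\lVert g^\ast\rVert^2_{\Hgamma}+\E h_{g^\ast}\le A^\ast$. The deviation $(\Pn-\E)h_{g^\ast}$ concerns a single bounded function whose variance is at most a constant times its mean (from the identity $(y-u)^2-(y-v)^2=(v-u)(2y-u-v)$ and boundedness of $\wt Y$, $g^\ast$, $f_0$); applying Proposition~\ref{thm:talagrand0} to the singleton $\{h_{g^\ast}-\E h_{g^\ast}\}$ and then Young's inequality bounds it by $c(A^\ast+n^{-1}\tau)$ on an event of probability $1-e^{-\tau}$. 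Finally $2\,\Pn(\wh Y-\wt Y)^2\le 2c_1n^{-\alpha}+2c_2n^{-\beta}\tau$ on an event of probability $1-e^{-\tau}$ by hypothesis.

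The substantive term is $(\E-\Pn)h_{\TB(\wh f_n)}$, where $\wh f_n$ is data-dependent with a priori unbounded RKHS norm; I would control it by a peeling argument following \citet[Theorem~7.20]{steinwart2008support}. For $r>0$ let $\mathcal{F}_r=\{g\in\Hgamma:\lambda\lVert g\rVert^2_{\Hgamma}+\E h_{\TB(g)}\le r\}$, so $\wh f_n\in\mathcal{F}_{\mathcal{E}(\wh f_n)}$, and put $Z_r=\sup_{g\in\mathcal{F}_r}(\E-\Pn)h_{\TB(g)}$. On $\mathcal{F}_r$ one has $\E h_{\TB(g)}^2\le 16B^2\E h_{\TB(g)}\le 16B^2r$, $\lVert h_{\TB(g)}\rVert_\infty\le 8B^2$, and $\lVert g\rVert_{\Hgamma}\le(r/\lambda)^{1/2}$; since $g\mapsto\TB(g)$ is $1$-Lipschitz and the map $u\mapsto I(A=a)\{(\wt Y-u)^2-(\wt Y-f_0)^2\}$ is $O(B)$-Lipschitz in supremum norm, Lemma~\ref{thm:rkhs-covering} with a fixed $m>q$ and $p:=q/m<1$ gives $\log\mathcal{N}(\{h_{\TB(g)}:g\in\mathcal{F}_r\},\lVert\cdot\rVert_\infty,\varepsilon)\le c\,A\,(r/\lambda)^{p/2}\varepsilon^{-p}$, where $A:=\prod_j(1+\gamma_j)^{1/2}\le\wb{\gamma}_{\S}^{|\S|/2}\wb{\gamma}_{\Sc}^{|\Sc|/2}$. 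Substituting into Proposition~\ref{thm:sup-tala} (applied after recentering and adjoining the zero function, which only inflates constants), the residual dependence on $r$ cancels: the entropy factor $(r/\lambda)^{p/2}$ is killed by the factor $V_r^{-p/2}\asymp r^{-p/2}$, leaving $\E Z_r\le c\{n^{-1}A\lambda^{-p/2}+n^{-1/2}(rA\lambda^{-p/2})^{1/2}\}$; Corollary~\ref{thm:talagrand} together with Young's inequality then gives $Z_r\le\tfrac14r+c(n^{-1}A\lambda^{-p/2}+n^{-1}\tau)$ with probability $1-e^{-\tau}$.

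Combining the displayed decomposition with these bounds, a standard peeling over the geometric bands $r_k=2^kr^\ast$ with per-band confidence levels $\tau_k=\tau+2\log(k+2)$ (so a union bound preserves confidence $1-e^{-\tau}$) forces $\mathcal{E}(\wh f_n)\le r^\ast$, where $r^\ast\asymp A^\ast+n^{-1}A\lambda^{-p/2}+n^{-\alpha}+n^{-\min(\beta,1)}\tau$: on the band $\{\mathcal{E}(\wh f_n)\in[r_k,r_{k+1})\}$ the decomposition bounds $r_k\le\mathcal{E}(\wh f_n)$ by a quantity containing $Z_{r_{k+1}}\le\tfrac14r_{k+1}=\tfrac12r_k$, which for a sufficiently large leading constant in $r^\ast$ contradicts $\mathcal{E}(\wh f_n)\ge r_k$. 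Intersecting the (finitely many) exceptional events costs only a shift of $\tau$ by a constant, absorbed by the $n^{-\min(\beta,1)}\tau$ term. Dividing by $\varpi$, setting $\delta=p/2=q/(2m)$, and substituting the bounds for $A^\ast$ and $A$ produces the asserted inequality with $c$ depending only on $\delta$, $q$, $r$, $B$ and $\varpi$. I expect the main obstacle to be the bookkeeping in this last step: verifying that the variance--mean relationship exactly annihilates the radius dependence of the entropy term, that the logarithmic factors from the union bound over bands are absorbed (which needs $r^\ast\gtrsim n^{-1}\tau$, guaranteed by the $n^{-1}A\lambda^{-\delta}$ and $n^{-\min(\beta,1)}\tau$ terms), and that truncation, recentering and adjoining the zero function leave all constants of the claimed form.
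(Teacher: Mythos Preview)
Your proposal is correct and follows essentially the same route as the paper: decompose via Lemma~\ref{thm:krr-decomp} with the approximant from Proposition~\ref{thm:krr-approx}, handle the three easy terms by Bernstein/Talagrand and the hypothesis on $\Pn(\wh Y-\wt Y)^2$, and control the data-dependent term $(\E-\Pn)h_{\TB(\wh f_n)}$ by the entropy bound of Lemma~\ref{thm:rkhs-covering} combined with Proposition~\ref{thm:sup-tala} and peeling, exploiting the cancellation of the radius in $J_V$ exactly as you describe. The only cosmetic difference is that the paper packages the peeling via the normalized process $m_f=(h_{\TB(f)}-\E h_{\TB(f)})/(\mathcal E(f)+s)$ and \citet[Theorem~7.7]{steinwart2008support}, then chooses $s$ so that the ratio is at most $1/2$, whereas you peel explicitly over geometric bands with a union bound; these are equivalent presentations of the same argument and yield the same bound.
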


\begin{proof}

By Proposition~\ref{thm:krr-approx} and the inequality
$E \left[ I(A = a) \left\{ f(\bm{X}) - f_0(\bm{X}) \right\}^2 \right] \leq \lVert f - f_0 \rVert_\infty^2$, 
there exists some function $f_n \in \Hgamma$ such that
\begin{equation}
\label{eq:krr-term1}
\lambda \lVert f_n \rVert^{2}_{\Hgamma} + \E h_{f_n}
\leq c \left\{ \lambda\big( \max_{j \in \S} \gamma_j \big)^{|\S|/2} \big( \max_{j \in \Sc} \gamma_j \big)^{|\Sc|/2} 
+ \big( \min_{j \in \S} \gamma_j \big)^{-r} \right \}
\end{equation}
for some constant $c$ independent of $n$, and 
$\lVert f \rVert_{\infty} \leq 2^r B$.

By the property of the truncation operator and the fact that $\lVert Y \rVert_\infty \leq B$ with probability 1, we have
$\Pn h_{\TB(\wh{f}_n)} \leq \Pn h_{\wh{f}_n}$.
We apply Lemma~\ref{thm:krr-decomp} with $f = f_n$ to obtain
\begin{align*}
\mathcal{E}(\wh{f}_n)
&\leq \lambda \lVert f_n \rVert^2_{\Hgamma} + \Pn h_{f_n} - \Pn h_{\TB(\wh{f}_n)} + \E h_{\TB(\wh{f}_n)}
+ \Pn (\wh{Y} - \wt{Y})^2 \\
&\leq (\lambda \lVert f_n \rVert^2_{\Hgamma} + \E h_{f_n}) 
+ \lvert \Pn h_{f_n} - \E h_{f_n} \rvert
+ \lvert \E h_{\TB(\wh{f}_n)} - \Pn h_{\TB(\wh{f}_n)} \rvert
+ \Pn (\wh{Y} - \wt{Y})^2. 
\end{align*}
Note that $\E h_{\TB(\wh{f}_n)}$ is defined as computing $h_{\TB(f)}$ 
and then plugging in $f = \wh{f}_n$, thus
$\E h_{\TB(\wh{f}_n)}$ is a random variable.

We will consider the three terms  in the above display separately.
The first term can be bounded above using equation \eqref{eq:krr-term1}.

For the second term, we first observe that
\[
|h_{f_n}| \leq |\{ Y - f_n (\bm{X}) \}^2 - \{ Y - f_0(\bm{X})\}^2|
= |\{ f_n(\bm{X}) + f_0(\bm{X}) - 2Y\} \{ f_n(\bm{X}) - f_0(\bm{X})\}|.
\]
Because  $\lVert f_0 \rVert_{\infty} \leq \wt{B}$ and  $\lVert f_n \rVert_{\infty} \leq \wt{B} $ for $\wt{B} = 2^r B$,
we have $E h^2_{f_n} \leq 16 \wt{B}^2 \E\{f_n(\bm{X}) - f_0(\bm{X})\}^2 = 16 \wt{B}^2 \E h_{f_n}$ and $| h_{f_n} | \leq 8 \wt{B}^2$. 
By Bernstein's inequality \citep[][Theorem 6.12]{steinwart2008support}, 
we obtain
\[
\Pr \left( |\Pn h_{f_n} - \E h_{f_n}| \geq \frac{16 \wt{B}^2 \tau}{3n}+ \left\{\frac{32 \wt{B}^2 \tau ( \E h_{f_n})}{n}\right\}^{1/2} \right) \leq 2 e^{-\tau}.
\]
Using $2 (u v)^{1/2} \leq u + v$, it follows that 
\[
\left\{\frac{32 \wt{B}^2 \tau ( \E h_{f_n})}{n}\right\}^{1/2}
\leq \frac{8 \wt{B}^2 \tau}{n} + \E h_{f_n}
\leq \frac{8 \wt{B}^2 \tau}{n} + \E h_{f_n} + \lambda \lVert f_n \rVert^2_{\Hgamma}.
\]
Therefore, 
\begin{equation}
\label{eq:krr-term2}
\Pr \left(  \left| \Pn h_{f_n} - \E h_{f_n} \right| \geq \frac{14 \wt{B}^2 \tau}{n} +  \E h_{f_n} + \lambda \lVert f_n \rVert^2_{\Hgamma} \right) \leq 2 e^{-\tau}.
\end{equation}

Bounding the third term is a little bit more involved.
Let $s > 0$ be fixed;  for any $f \in \Hgamma$, define 
\[
m_f = \frac{h_{\TB(f)} - \E h_{\TB(f)}}{\mathcal{E}(f) + s} 
= \frac{h_{\TB(f)} - \E h_{\TB(f)}}{\lambda \lVert f \rVert^2_{\Hgamma} + \E h_{\TB(f)} + s}.
\]
Because $\lVert \TB(f) \rVert_{\infty} \leq B$,  
$\lVert m_f\rVert_{\infty} \leq 16B^2/s$. 
Furthermore, because $\E h^2_{\TB(f)} \leq 16B^2 \E h_{\TB(f)}$, 
\[
\E m^2_f \leq \frac{\E h^2_{\TB(f)}}{4s\E h_{\TB(f)}} \leq \frac{4 B^2}{s}, 
\]
where $\E h_{\TB(f)} > 0$, and 
$
\E h^2_{\TB(f)} =0 \leq 4B^2/s
$
when $\E h_{\TB(f)} =0$. 

Define $\mathcal{F}_s = \{f \in \Hgamma: \mathcal{E}(f) \leq s\} \cup \{0\}$, where $0$ denotes the zero function. 
By Corollary~\ref{thm:talagrand},  it follows that 
\[
\Pr \left\{ \sup_{f \in \mathcal{F}_s} |m_f| \geq 2 \E \sup_{f \in \mathcal{F}_s} |m_f| + \left( \frac{8B^2\tau}{ns} \right)^{1/2} + \frac{32 B^2 \tau}{ns} \right\} \leq e^{-\tau}.
\]
We shall derive an upper bound for $\sup_{f \in \mathcal{F}_s} |m_f|$
based on an upper bound for $\E \sup_{f \in \mathcal{F}_s} |h_{\TB(f)} - \E h_{\TB(f)}|$.
To this end, we compute the covering number for $\mathcal{G}_s = \{ h_{\TB(f)} - \E h_{\TB(f)}: f \in \mathcal{F}_s \}$.

For any $f \in \mathcal{F}_s$, we have $\lVert f \rVert_{\Hgamma} \leq s^{1/2} \lambda^{-1/2}$. Hence,
\[
\mathcal{N}(\mathcal{F}_s, \lVert \cdot \rVert_{\infty}, \varepsilon) 
\leq \mathcal{N} \{ (s^{1/2} \lambda^{-1/2}) \mathcal{B}_{\Hgamma}, \lVert \cdot \rVert_{\infty}, \varepsilon\} 
= \mathcal{N}(\mathcal{B}_{\Hgamma}, \lVert \cdot \rVert_{\infty}, s^{-1/2}\lambda^{1/2}\varepsilon). 
\]
By the fact that $\lVert \TB(f) - \TB(g) \rVert_{\infty} \leq \lVert f-g \rVert_{\infty}$, 
\[
\lVert h_{\TB(f)} - \E h_{\TB(f)} - h_{\TB(g)} + \E h_{\TB(g)} \rVert_{\infty} \leq 8B \lVert f-g \rVert_{\infty}.
\]
Hence, $\mathcal{N} (\mathcal{G}_s, \lVert \cdot \rVert_{\infty}, \varepsilon) \leq \mathcal{N}\{ \mathcal{F}_s, \lVert \cdot \rVert_{\infty}, \varepsilon/(8B) \}.$
Combining these inequalities and applying
Lemma~\ref{thm:rkhs-covering}, shows 
\[
\log \mathcal{N} ( \mathcal{G}_s, \lVert \cdot \rVert_{\infty}, \varepsilon) 
\leq \log \mathcal{N} \{ \mathcal{B}_{\Hgamma}, \lVert \cdot \rVert_{\infty}, (8B)^{-1} s^{-1/2}\lambda^{1/2}\varepsilon\}
\leq c_1 a_{\bm{\gamma}} s^{q/(2m)}\lambda^{-q/(2m)}\varepsilon^{-q/m},
\]
where $m \geq 1$ is an arbitrary integer, 
$c_1$ is a constant that depends on $m$, $q$, $B$, $r$ only, 
and $a_{\bm{\gamma}} = \prod^{q}_{j=1} (1 + \gamma_j)^{1/2}
\leq \wb{\gamma}_{\S}^{|\S|/2} \wb{\gamma}_{\Sc}^{|\Sc|/2}$.

For any $f \in \mathcal{F}_s$, we have 
$\lVert h_{\TB(f)} - \E h_{\TB(f)} \rVert_{\infty} \leq 16 B^2$ and 
$\var h_{\TB(f)} \leq \E h^2_{\TB(f)} \leq 16 B^2s$. 
Apply Proposition~\ref{thm:sup-tala} to obtain 
\[
\E \sup_{f \in \mathcal{F}_s} |h_{\TB(f)} - \E h_{\TB(f)}| \leq 1024 (16 B^2 J / n) + 64 (16 B^2 J s / n)^{1/2},
\]
where $J = \int^{1}_{0} c_1 a_{\bm{\gamma}} (16 B^2)^{q/(2m)} 
\lambda^{-q/(2m)} \varepsilon^{-q/m} \, d\varepsilon \leq c_2 a_{\bm{\gamma}} \lambda^{-q/(2m)}$. 
Thus, 
\[
\E \sup_{f \in \mathcal{F}_s} |h_{\TB(f)} - \E h_{\TB(f)}| \leq c_3 \left\{ a_{\bm{\gamma}} \lambda^{-q/(2m)}n^{-1} + a_{\bm{\gamma}}^{1/2} \lambda^{-q/(4m)} s^{1/2} n^{-1/2} \right\}.
\]
Hence, by the peeling technique \citep[][Theorem 7.7]{steinwart2008support}, 
we obtain
\[
\E \sup_{f \in \mathcal{F}} |m_f| \leq 4 c_3 \{ a_{\bm{\gamma}} \lambda^{-q/(2m)} s^{-1}n^{-1} + a_{\bm{\gamma}}^{1/2} \lambda^{-q/(4m)} s^{-1/2} n^{-1/2} \}. 
\]
Combine the bound of $\E \sup_{f \in \mathcal{F}} |m_f|$
and the tail bound of $\sup_{f \in \mathcal{F}} |m_f|$ to obtain
\[
\Pr \left[ \sup_{f \in \mathcal{F}} \frac{|h_{\TB(f)} - \E h_{\TB(f)}|}{\mathcal{E}(f) + s} 
\geq c_4 \left\{ \frac{a_{\bm{\gamma}}}{\lambda^{q/(2m)}sn} + \frac{a^{1/2}_{\bm{\gamma}}}{\lambda^{q/(4m)}s^{1/2}n^{1/2}} + \frac{\tau^{1/2}}{s^{1/2}n^{1/2}} + \frac{\tau}{sn} \right\}  \right] \leq e^{-\tau},
\]
where $c_4 > 0$ is some constant that depends on $m$, $q$, $B$, $r$ only. 
Without loss of generality, we assume $c_4 \geq 1$.

Let 
\[
s = 64 c_4^2 \max \left\{ \frac{a_{\bm{\gamma}}}{\lambda^{q/(2m)}n}, \frac{\tau}{n} \right\},
\]
then 
\[
\frac{c^2_4 a_{\bm{\gamma}}}{\lambda^{q/(2m)}sn} \leq \left( \frac{c_4^2 a_{\bm{\gamma}}}{\lambda^{q/(2m)}sn} \right)^{1/2} \leq \frac{1}{8},
\quad
\frac{c_4^2 \tau}{sn} \leq \left( \frac{c^2_4 \tau}{sn} \right)^{1/2} \leq \frac{1}{8}.
\]
Therefore, we have 
\begin{equation}
\label{eq:krr-term3}
\Pr \left\{ |\Pn h_{\TB(f)} - \E h_{\TB(f)}| \geq \mathcal{E}(f) / 2 + s / 2
\text{ for some } f \in \mathcal{F} \right\} \leq e^{-\tau}.
\end{equation}

Plug-in $f = \wh{f}_n$ in equation~\eqref{eq:krr-term3} and combine
equations~\eqref{eq:krr-term1}, \eqref{eq:krr-term2},
\eqref{eq:krr-term3} and the condition on $\Pn (\wh{Y} -
\wt{Y})^2$ to  obtain
\[
\Pr \left\{ \mathcal{E}(\wh{f}_n) \leq c_6 \left( \lambda \wb{\gamma}_{\S}^{|\S|/2} \bar{\gamma}_{\Sc}^{|\Sc|/2} + \wub{\gamma}_{\S}^{-r}+ \wb{\gamma}_{\S}^{|\S|/2} \wb{\gamma}_{\Sc}^{|\Sc|/2} \lambda^{-q/(2m)} n^{-1} + n^{-1} \tau + 
n^{-\alpha} + n^{-\beta}\tau \right) \right\} \leq e^{-\tau}.
\]
Since $m$ can be arbitrarily large, $\delta = q / (2m)$ can be arbitrarily small.

Finally, 
since
\[
\E_{\bm{X}} I(A = a) \left\{ \TB(\wh{f}_n)(\bm{X}) - f_0(\bm{X}) \right\}^2 
\leq \mathcal{E}\left( \wh{f}_n \right),
\]
we observe that
\begin{align*}
\E_{\bm{X}} I(A=a) \{ \TB(\wh{f}_n)(\bm{X}) - f_0(\bm{X}) \}^2 
&= \E_{\bm{X}} \Pr(A=a| \bm{X} ) \{ \TB(\wh{f}_n)(\bm{X}) - f_0(\bm{X})\}^2 \\
&\geq \varpi \E_{\bm{X}} \left\{\TB(\wh{f}_n)(\bm{X}) - f_0(\bm{X}) \right\}^2
\end{align*}
by Assumption~2.
\end{proof}

We immediately obtain the following corollaries.

\begin{corollary}
Assume the conditions in Proposition~\ref{thm:krr-risk} hold.
Furthermore, 
suppose $\wb{\gamma}_{\S} = \wb{\theta}_{\S} n^{2/(2r + q)}$, 
$\wub{\gamma}_{\S} = \wub{\theta}_{\S} n^{2/(2r + q)}$, 
$\wb{\gamma}_{\Sc} = \wb{\theta}_{\Sc} n^{2/(2r + q)}$, 
and $\lambda = \theta_{\lambda}n^{-1}$. 
Then, for any $\xi > 0$, 
\[
\Pr \left( \E_{\bm{X}} \left\{\TB(\wh{f}_n)(\bm{X}) - f_0(\bm{X}) \right\}^2  \geq 
c \left[ n^{-\min\{2r/(2r+q) + \xi, \alpha\}} 
+ n^{-\min(\beta, 1)} \tau \right] \right)
\leq e^{-\tau},
\]
\end{corollary}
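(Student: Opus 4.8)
The plan is to specialize Proposition~\ref{thm:krr-risk} by substituting the prescribed rates for $\bm{\gamma}$ and $\lambda$ and then collecting the resulting powers of $n$. Recall that $|\S| + |\Sc| = q$. With $\wb{\gamma}_{\S} = \wb{\theta}_{\S} n^{2/(2r+q)}$, $\wub{\gamma}_{\S} = \wub{\theta}_{\S} n^{2/(2r+q)}$, $\wb{\gamma}_{\Sc} = \wb{\theta}_{\Sc} n^{2/(2r+q)}$ and $\lambda = \theta_{\lambda} n^{-1}$, I would evaluate the three $\bm{\gamma}$-dependent terms in the bound of Proposition~\ref{thm:krr-risk}. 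First, $\lambda \wb{\gamma}_{\S}^{|\S|/2} \wb{\gamma}_{\Sc}^{|\Sc|/2}$ equals a constant times $n^{-1 + \frac{2}{2r+q}\cdot\frac{|\S|+|\Sc|}{2}} = n^{-1 + \frac{q}{2r+q}} = n^{-2r/(2r+q)}$. Second, $\wub{\gamma}_{\S}^{-r}$ equals a constant times $n^{-2r/(2r+q)}$. Third, $\wb{\gamma}_{\S}^{|\S|/2} \wb{\gamma}_{\Sc}^{|\Sc|/2} \lambda^{-\delta} n^{-1}$ equals a constant times $n^{\frac{q}{2r+q} + \delta - 1} = n^{-2r/(2r+q) + \delta}$.

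Next, I would exploit the fact, established inside the proof of Proposition~\ref{thm:krr-risk} (where $\delta = q/(2m)$ and $m$ is an arbitrary positive integer), that $\delta$ may be taken to be any positive number at the cost of a constant that then depends on $\delta$. Given $\xi > 0$, I would fix some $\delta \in (0, \xi]$, so that the third term above is bounded by a constant times $n^{-2r/(2r+q) + \xi}$, and so are the first two. Hence the entire $\bm{\gamma}$-dependent part of the bound in Proposition~\ref{thm:krr-risk} is $O\{n^{-2r/(2r+q)+\xi}\}$, with a constant depending only on $q$, $r$, $B$, $\varpi$, $\xi$ and the fixed $\wb{\theta}_{\S}$, $\wub{\theta}_{\S}$, $\wb{\theta}_{\Sc}$, $\theta_{\lambda}$.

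Finally, I would combine this with the remaining two terms $n^{-\alpha}$ and $n^{-\min(\beta,1)}\tau$ from Proposition~\ref{thm:krr-risk}. Using $n^{-a} + n^{-b} \le 2 n^{-\min(a,b)}$ gives $n^{-2r/(2r+q)+\xi} + n^{-\alpha} \le 2 n^{-\min\{2r/(2r+q)+\xi,\,\alpha\}}$, and absorbing all numerical and $\theta$-constants into a single constant $c$ yields exactly the claimed inequality. There is no substantial obstacle here beyond careful bookkeeping; the points that need attention are verifying the exponent identity $-1 + q/(2r+q) = -2r/(2r+q)$ in the first term, checking that the $\theta$'s are genuinely $n$-free so they may be swallowed by $c$, and confirming that fixing $\delta = \xi$ keeps the constant in Proposition~\ref{thm:krr-risk} finite (it does, since that constant depends on $\delta$ but $\delta$ is now a fixed number).
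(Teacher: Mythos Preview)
Your approach---substitute the prescribed rates into Proposition~\ref{thm:krr-risk} and collect powers of $n$---is precisely what the paper intends; the paper itself gives no proof beyond ``We immediately obtain the following corollaries.'' Your evaluation of the three $\bm{\gamma}$-dependent terms is correct.

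There is one slip in the last step. You correctly obtain that the $\bm{\gamma}$-dependent contribution is of order $n^{-2r/(2r+q)+\xi}$, i.e., $n^{-\{2r/(2r+q)-\xi\}}$. Applying $n^{-a}+n^{-b}\le 2\,n^{-\min(a,b)}$ with $a=2r/(2r+q)-\xi$ and $b=\alpha$ therefore yields $2\,n^{-\min\{2r/(2r+q)-\xi,\,\alpha\}}$, not $2\,n^{-\min\{2r/(2r+q)+\xi,\,\alpha\}}$. The latter quantity is \emph{smaller} than $n^{-2r/(2r+q)+\xi}$ whenever $\alpha\ge 2r/(2r+q)+\xi$, so it cannot serve as an upper bound. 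This is evidently a sign typo in the corollary as stated; the intended exponent inside the $\min$ is $2r/(2r+q)-\xi$, which is consistent both with the ``nearly optimal rate $O\{n^{-2r/(2r+q)+\xi'}\}$'' quoted in the main text and with the form $n^{-2\varphi_T+\xi}$ used when the corollary is invoked in the proof of Theorem~1. With that sign corrected, your argument is complete.
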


\begin{corollary}
Assume the conditions in Proposition~\ref{thm:krr-risk} hold.
Furthermore, 
suppose $\wb{\gamma}_{\S} = \wb{\theta}_{\S} n^{2/(2r+|\S|)}$, 
$\wub{\gamma}_{\S} = \wub{\theta}_{\S} n^{2/(2r+|\S|)}$, 
$\wb{\gamma}_{\Sc} = \wb{\theta}_{\Sc}$, 
and $\lambda = \theta_{\lambda} n^{-1}$.
Then, for any $\xi>0$, 
\[
\Pr \left( \E_{\bm{X}} \left\{\TB(\wh{f}_n)(\bm{X}) - f_0(\bm{X}) \right\}^2  \geq 
c \left[ n^{-\min\{2r/(2r + |\S|) + \xi, \alpha\}} 
+ n^{-\min(\beta, 1)} \tau \right] \right)
\leq e^{-\tau},
\]
\end{corollary}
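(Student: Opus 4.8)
The plan is to specialize the finite-sample risk bound of Proposition~\ref{thm:krr-risk} to the tuning choices stated in the corollary. Recall that Proposition~\ref{thm:krr-risk} gives, for every $\delta>0$ and every $\tau>0$,
\[
\Pr\left[ \E_{\bm{X}}\{\TB(\wh{f}_n)(\bm{X}) - f_0(\bm{X})\}^2 \ge c\left(\lambda\,\wb{\gamma}_{\S}^{|\S|/2}\wb{\gamma}_{\Sc}^{|\Sc|/2} + \wub{\gamma}_{\S}^{-r} + \wb{\gamma}_{\S}^{|\S|/2}\wb{\gamma}_{\Sc}^{|\Sc|/2}\lambda^{-\delta} n^{-1} + n^{-\alpha} + n^{-\min(\beta,1)}\tau\right)\right]\le e^{-\tau},
\]
where $c$ depends only on $\delta$, $q$, $r$, $B$ and $\varpi$. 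First I would set $\delta=\xi$, which is permissible because $\delta$ is free in Proposition~\ref{thm:krr-risk} and its effect is absorbed into $c$; then I would substitute $\wb{\gamma}_{\S}=\wb{\theta}_{\S}\,n^{2/(2r+|\S|)}$, $\wub{\gamma}_{\S}=\wub{\theta}_{\S}\,n^{2/(2r+|\S|)}$, $\wb{\gamma}_{\Sc}=\wb{\theta}_{\Sc}$ (constant in $n$), and $\lambda=\theta_{\lambda}n^{-1}$, and bound the five terms one by one.

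The term-by-term computation is elementary bookkeeping. Using the identity $-1+|\S|/(2r+|\S|)=-2r/(2r+|\S|)$, the first term equals $\theta_{\lambda}\wb{\theta}_{\S}^{|\S|/2}\wb{\theta}_{\Sc}^{|\Sc|/2}\,n^{-2r/(2r+|\S|)}$ and the second equals $\wub{\theta}_{\S}^{-r}\,n^{-2r/(2r+|\S|)}$; since $\wb{\gamma}_{\Sc}$ is held bounded, the factor $\wb{\gamma}_{\Sc}^{|\Sc|/2}$ is merely a constant, so the third term equals $\wb{\theta}_{\S}^{|\S|/2}\wb{\theta}_{\Sc}^{|\Sc|/2}\theta_{\lambda}^{-\xi}\,n^{-2r/(2r+|\S|)+\xi}$. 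The fourth and fifth terms are unchanged, namely $n^{-\alpha}$ and $n^{-\min(\beta,1)}\tau$. Summing the first three bounds them by $O(n^{-2r/(2r+|\S|)+\xi})$, and combining with $n^{-\alpha}$ shows that the whole non-$\tau$ part is $O(n^{-\min\{2r/(2r+|\S|),\,\alpha\}+\xi})$; absorbing all fixed factors (the various $\theta$'s and the $\xi$-dependence) into a single constant $c$ then yields the stated inequality, with the arbitrarily small $\delta$ of Proposition~\ref{thm:krr-risk} identified with $\xi$.

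There is no substantive obstacle here beyond careful accounting; the point I would stress is the role of the noise variables. Keeping $\wb{\gamma}_{\Sc}$ bounded, instead of letting it grow like $n^{2/(2r+q)}$ as in the preceding corollary, is exactly what turns the exponent $-2r/(2r+q)$ into the strictly larger $-2r/(2r+|\S|)$ whenever $|\Sc|>0$: the noise dimension $|\Sc|$ then enters only through the harmless multiplicative constants $\wb{\theta}_{\Sc}^{|\Sc|/2}$ appearing in the first and third terms, never through a power of $n$. I would also check that the hypotheses inherited from Proposition~\ref{thm:krr-risk}, in particular the tail bound on $\Pn(\wh{Y}-\wt{Y})^2$ with exponents $\alpha$ and $\beta$ and the modulus-of-smoothness condition on $f_0$, are untouched by the reparametrization, which they are since neither involves the bandwidths.
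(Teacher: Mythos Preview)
Your proposal is correct and is exactly the intended argument: the paper presents this result as an immediate corollary of Proposition~\ref{thm:krr-risk}, and the only work is the term-by-term substitution you carry out. Your observation that the $+\xi$ in the exponent should be read as a \emph{loss} (i.e., the non-$\tau$ part is $O\big(n^{-\min\{2r/(2r+|\S|),\alpha\}+\xi}\big)$) is the right way to interpret the stated bound; the placement of $\xi$ inside the $\min$ in the corollary as written appears to be a typographical slip.
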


\subsection{Useful inequalities for the analysis of decision lists}

Define $U_t(\bm{x}, a) = \max_{a' \in \mathcal{A}_t} Q_t(\bm{x}, a') - Q_t(\bm{x}, a)$ and $\wh{U}_t(\bm{x}, a) = \max_{a' \in \mathcal{A}_t} \wh{Q}_t(\bm{x}, a') - Q_t(\bm{x}, a)$. 
Because
\[
\left|\max_{a' \in \mathcal{A}_t} \wh{Q}_t(\bm{x}, a') - \max_{a' \in \mathcal{A}_t} Q_t(\bm{x}, a)\right| \leq \max_{a' \in \mathcal{A}_t} 
\left|\wh{Q}_t(\bm{x}, a') - Q_t(\bm{x}, a)\right|,
\]
it follows that
\[
\left|\wh{U}_t(\bm{x}, a) - U_t(\bm{x}, a)\right|
\leq 2 \max_{a' \in \mathcal{A}_t} \left|\wh{Q}_t(\bm{x}, a') - Q_t(\bm{x}, a')\right|.
\]
Thus, for any $p \geq 1$
\begin{equation}
\label{eq:u-q}
\Pn \left| \wh{U}_t(\bm{X_t}, a) - U_t(\bm{X_t}, a) \right|^p
\leq 2 \sum_{a' \in \mathcal{A}_t} \left| \wh{Q}_t(\bm{X_t}, a') - Q_t(\bm{X_t}, a') \right|^p.
\end{equation}

Following the notation used in the algorithm description, define 
\[
\wh{\Omega}_{t\ell} (R, a) = I( \bm{X}_t \in \wh{G}_{t\ell}, \bm{X}_t \in R ) \left\{ \wh{U}_t(\bm{X}_t, a) - \zeta \right\} - \eta \left\{ 2-V(R) \right\}
\]
and 
\[
\Omega_{t\ell}(R, a) = I ( \bm{X}_t \in G^{\ast}_{t\ell}, \bm{X}_t \in R ) \left\{ U_t(\bm{X}_t, a) - \zeta \right\} - \eta \left\{ 2-V(R) \right\}.
\]
By the definition of $( \wh{R}_{t\ell}, \wh{a}_{t\ell} )$ in the main article, we have
\begin{multline*}
( \wh{R}_{t\ell}, \wh{a}_{t\ell} ) 
= \argmax_{R \in \mathcal{R}_t, a \in \mathcal{A}_t} 
\Pn I(\bm{X}_t \in \wh{G}_{t\ell}) \wh{Q}_t \{ \bm{X}_t, \wh{\pi}^Q_t(\bm{X}_t) \} \\
- \Pn I(\bm{X}_t \in \wh{G}_{t\ell}, \bm{X}_t \in R) \wh{U}_t(\bm{X}_t, a)  \\
+ \Pn\zeta I\{\bm{X}_t \in \wh{G}_{t\ell}, \bm{X}_t \in R\} + \eta\{2 - V(R)\}.
\end{multline*}
Thus,  $( \wh{R}_{t\ell}, \wh{a}_{t\ell} ) =
\argmin_{R \in \mathcal{R}_t, a \in \mathcal{A}_t} 
\Pn \wh{\Omega}_{t\ell} (R, a)$.
Similarly, 
\[
\left( R^{\ast}_{t\ell}, a^{\ast}_{t\ell} \right) 
= \argmax_{R \in \mathcal{R}_t, a \in \mathcal{A}_t} \Psi_{t\ell}(R, a) 
= \argmin_{R \in \mathcal{R}_t, a \in \mathcal{A}_t} \E \Omega_{t\ell}(R, a).
\]

Recall that $\mathcal{R}_t$ consists of rectangles in $\R^q$ defined using at most two variables. Hence $\mathcal{R}_t$ is a subset of the set of all intervals $\{ (\bm{a}, \bm{b}]: \bm{a}, \bm{b} \in \R^q \}$, where $(\bm{a}, \bm{b}] = \{\bm{x} \in \R^q: a_j \leq x_j \leq b_j \text{ for all } j\}$. Hence $\mathcal{R}_t$ is a Vapnik-Cervonenkis class, or VC class for short \citep[][Example 2.6.1]{van1996weak}.

The following lemma gives an upper bound 
for $\sup_{R \in \mathcal{R}_t} |\Pn \wh{\Omega}_{t\ell}(R, a) - \E \Omega_{t\ell} (R, a)|$ for any given $a \in \mathcal{A}_t$. 

\begin{lemma}
\label{thm:cr-global}
We have
\begin{multline*}
\sup_{R \in \mathcal{R}_t} \left|\Pn \wh{\Omega}_{t\ell} (R, a) - \E \Omega_{t\ell}(R, a)\right| 
\leq \sup_{R \in \mathcal{R}_t} \left|\Pn \Omega_{t\ell} (R, a) - \E \Omega_{t\ell} (R, a)\right| \\
+ \Pn \left| \wh{U}_t(\bm{X}_t, a) - U_t(\bm{X}_t, a) \right|
+ B \sum_{k < \ell} \Pn I\left(\bm{X}_t \in \wh{R}_{t k} \symmdiff R^{\ast}_{t k} \right),
\end{multline*}
and
\[
\pr\left\{
\sup_{R \in \mathcal{R}_t} |\Pn \Omega_{t\ell} (R, a) - \E \Omega_{t\ell} (R, a)| \geq c(n^{-1/2} + \tau^{1/2} n^{-1/2})
\right\}
\leq e^{-\tau}.
\]
\end{lemma}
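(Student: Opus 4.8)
The lemma has two parts: the displayed inequality relating $\wh\Omega_{t\ell}$ to $\Omega_{t\ell}$ is a purely deterministic decomposition, and the tail bound is a routine empirical-process estimate over the VC class $\mathcal{R}_t$. I would establish them in that order.

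For the first inequality, I would start from the triangle inequality, valid for every $R \in \mathcal{R}_t$,
\[
\bigl|\Pn\wh\Omega_{t\ell}(R,a)-\E\Omega_{t\ell}(R,a)\bigr|
\le \bigl|\Pn\Omega_{t\ell}(R,a)-\E\Omega_{t\ell}(R,a)\bigr|
+ \Pn\bigl|\wh\Omega_{t\ell}(R,a)-\Omega_{t\ell}(R,a)\bigr|,
\]
and then bound the last term uniformly in $R$. In $\wh\Omega_{t\ell}(R,a)-\Omega_{t\ell}(R,a)$ the penalty $\eta\{2-V(R)\}$ cancels, leaving $I(\bm{X}_t\in\wh G_{t\ell}\cap R)\{\wh U_t(\bm{X}_t,a)-\zeta\}-I(\bm{X}_t\in G^{\ast}_{t\ell}\cap R)\{U_t(\bm{X}_t,a)-\zeta\}$. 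Adding and subtracting $I(\bm{X}_t\in\wh G_{t\ell}\cap R)\{U_t(\bm{X}_t,a)-\zeta\}$ splits this into a part bounded in absolute value by $|\wh U_t(\bm{X}_t,a)-U_t(\bm{X}_t,a)|$ and a part equal to $\{I(\bm{X}_t\in\wh G_{t\ell})-I(\bm{X}_t\in G^{\ast}_{t\ell})\}I(\bm{X}_t\in R)\{U_t(\bm{X}_t,a)-\zeta\}$. The elementary ingredient is the set inclusion $\wh G_{t\ell}\symmdiff G^{\ast}_{t\ell}\subseteq\bigcup_{k<\ell}(\wh R_{tk}\symmdiff R^{\ast}_{tk})$, immediate from $\wh G_{t\ell}=\mathcal{X}_t\setminus\bigcup_{k<\ell}\wh R_{tk}$ and $G^{\ast}_{t\ell}=\mathcal{X}_t\setminus\bigcup_{k<\ell}R^{\ast}_{tk}$, whence $|I(\bm{X}_t\in\wh G_{t\ell})-I(\bm{X}_t\in G^{\ast}_{t\ell})|\le\sum_{k<\ell}I(\bm{X}_t\in\wh R_{tk}\symmdiff R^{\ast}_{tk})$. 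Combined with the crude bound that $|U_t(\bm{X}_t,a)-\zeta|$ is at most a constant, which we may take to be $B$ (the covariates and outcomes are bounded by Assumption~1), this gives $\Pn|\wh\Omega_{t\ell}(R,a)-\Omega_{t\ell}(R,a)|\le\Pn|\wh U_t(\bm{X}_t,a)-U_t(\bm{X}_t,a)|+B\sum_{k<\ell}\Pn I(\bm{X}_t\in\wh R_{tk}\symmdiff R^{\ast}_{tk})$, a bound free of $R$. Taking $\sup_{R\in\mathcal{R}_t}$ on both sides of the triangle inequality delivers the first claim.

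For the tail bound, note again that the deterministic term $\eta\{2-V(R)\}$ cancels in $\Pn\Omega_{t\ell}(R,a)-\E\Omega_{t\ell}(R,a)$, so $\sup_{R\in\mathcal{R}_t}|\Pn\Omega_{t\ell}(R,a)-\E\Omega_{t\ell}(R,a)|=\sup_{g\in\mathcal{F}}|\Pn g|$, where $\mathcal{F}$ consists of the functions $\bm{x}\mapsto I(\bm{x}\in G^{\ast}_{t\ell}\cap R)\{U_t(\bm{x},a)-\zeta\}$ centered at their means, indexed by $R\in\mathcal{R}_t$, together with the zero function. Since $\mathcal{R}_t$ is a VC class (it lies inside the collection of axis-parallel rectangles defined by at most two coordinates, as already observed) and $\bm{x}\mapsto I(\bm{x}\in G^{\ast}_{t\ell})\{U_t(\bm{x},a)-\zeta\}$ is a fixed bounded function, $\mathcal{F}$ has uniformly polynomial covering numbers, $\sup_Q\log\mathcal{N}(\mathcal{F},\lVert\cdot\rVert_{L^2(Q)},\varepsilon)\le c_1\log(c_2/\varepsilon)=:\psi(\varepsilon)$, so that $J_B=\int_0^1\psi(B\varepsilon)\,d\varepsilon<\infty$. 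The functions in $\mathcal{F}$ are bounded by $2B$ and have zero mean, and $\mathcal{F}$ is separable (restrict the thresholds defining $R$ to the rationals), so Propositions~\ref{thm:sup-hoef} and~\ref{thm:hoeffding} apply: the former gives $\E\sup_{g\in\mathcal{F}}|\Pn g|\le 8(B^2 J_B/n)^{1/2}\le c\,n^{-1/2}$, and the latter then gives $\pr\{\sup_{g\in\mathcal{F}}|\Pn g|\ge\E\sup_{g\in\mathcal{F}}|\Pn g|+(2B^2\tau/n)^{1/2}\}\le e^{-\tau}$. Combining the two bounds and absorbing constants yields $\pr\{\sup_{R\in\mathcal{R}_t}|\Pn\Omega_{t\ell}(R,a)-\E\Omega_{t\ell}(R,a)|\ge c(n^{-1/2}+\tau^{1/2}n^{-1/2})\}\le e^{-\tau}$.

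Neither step is genuinely hard, but the main point is conceptual rather than computational: $\wh\Omega_{t\ell}$ depends on the previously estimated clauses $\wh R_{t1},\dots,\wh R_{t,\ell-1}$, which are random, so one cannot apply an empirical-process inequality to $\wh\Omega_{t\ell}$ directly. The decomposition in the first part is exactly what circumvents this, replacing $\wh\Omega_{t\ell}$ by the fixed-population kernel $\Omega_{t\ell}$ (whose fluctuation is controlled in the second part) at the cost of the correction terms $\Pn|\wh U_t-U_t|$ and $B\sum_{k<\ell}\Pn I(\bm{X}_t\in\wh R_{tk}\symmdiff R^{\ast}_{tk})$, which are then handled using the already-established convergence of $\wh Q_t$ (equation~\eqref{eq:u-q}) and the inductive hypothesis on $\wh R_{tk}$, $k<\ell$. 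The only other routine verification is the uniform-entropy bound for $\mathcal{F}$, which follows from standard VC-subgraph arguments once $\mathcal{R}_t$ is recognized as a VC class.
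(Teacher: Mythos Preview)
Your proposal is correct and follows essentially the same route as the paper's proof: a triangle-inequality decomposition of $\Pn\wh\Omega_{t\ell}-\E\Omega_{t\ell}$ into the empirical fluctuation of $\Omega_{t\ell}$ plus a pointwise difference $\wh\Omega_{t\ell}-\Omega_{t\ell}$, the latter split via an add-and-subtract into a $|\wh U_t-U_t|$ piece and a $\wh G_{t\ell}\symmdiff G^{\ast}_{t\ell}$ piece bounded using $\wh G_{t\ell}\symmdiff G^{\ast}_{t\ell}\subseteq\bigcup_{k<\ell}(\wh R_{tk}\symmdiff R^{\ast}_{tk})$; and for the tail bound, the VC property of $\mathcal{R}_t$ combined with Propositions~\ref{thm:hoeffding} and~\ref{thm:sup-hoef}. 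The only cosmetic difference is that the paper adds and subtracts on the intersection $\wh G_{t\ell}\cap G^{\ast}_{t\ell}$ rather than on $\wh G_{t\ell}$, but this leads to the same bound.
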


\begin{proof}
We have
\begin{multline*}
\sup_{R \in \mathcal{R}_t} |\Pn \wh{\Omega}_{t\ell} (R, a) - \E \Omega_{t\ell}(R, a)|  \\
\leq \sup_{R \in \mathcal{R}_t} |\Pn \wh{\Omega}_{t\ell} (R, a) - \Pn \Omega_{t\ell} (R, a)| + \sup_{R \in \mathcal{R}_t} |\Pn \Omega_{t\ell} (R, a) - \E \Omega_{t\ell} (R, a)|.
\end{multline*}
For the first term, we observe that
\begin{align*}
\sup_{R \in \mathcal{R}_t} & \left| \Pn \wh{\Omega}_{t \ell} (R, a) - \Pn \Omega_{t\ell}(R, a)\right| \\
&\leq \sup_{R \in \mathcal{R}_t} \left| \Pn I(\bm{X}_t \in \wh{G}_{t\ell} \cap G^{\ast}_{t\ell}, \bm{X}_t \in R) \left\{ \wh{U}_t (\bm{X}_t, a) - U_t(\bm{X}_t, a) \right\}  \right| \\
&\quad {} + \sup_{R \in \mathcal{R}_t} \Pn I(\bm{X}_t \in \wh{G}_{t\ell} \symmdiff G^{\ast}_{t\ell}, \bm{X}_t \in R) \left| \wh{U}_t(\bm{X}_t, a) - \zeta\right| \\
&\leq \Pn \left| \wh{U}_t (\bm{X}_t, a) - U_t(\bm{X}_t, a)\right| 
+ B \Pn I(\bm{X}_t \in \wh{G}_{t\ell} \symmdiff G^{\ast}_{t\ell}),
\end{align*}
By the definition of $G_{t\ell}$, 
\[
\Pn I(\bm{X}_t \in \wh{G}_{t\ell} \symmdiff G^{\ast}_{t\ell})
\leq \sum_{k < \ell} \Pn I(\bm{X}_t \in \wh{R}_{t k} \symmdiff R^{\ast}_{t k}).
\]

For the second term, VC preservation properties \citep[][Lemma 2.6.18]{van1996weak}, 
the set 
\[
\mathcal{F} = \{ I (\bm{X}_t \in R) I(\bm{X}_t \in G^{\ast}_{t\ell}) \{U_t (\bm{X}_t, a) - \zeta\}: R \in \mathcal{R}_t \}
\]
is also a VC class. 
Let $\nu$ be its VC index. Then, by Theorem 2.6.7 in \citet[][]{van1996weak}, 
\[
\sup_{Q} \mathcal{N}(\mathcal{G}, \lVert \cdot \rVert_{L^{2}(Q)}, \varepsilon) \leq c_1 \varepsilon^{-2\nu},
\]
where $Q$ is any probability measure and $c_1$ is a constant that
depends on $\nu$ only.  For any $f \in \mathcal{F}$, it can be seen
that 
$\lVert f \rVert_{\infty} \leq B$.  Thus, by
Propositions~\ref{thm:hoeffding} and~\ref{thm:sup-hoef}, since
$\int_0^1 \log (\varepsilon^{-2\nu}) < \infty$, we have
\[
\Pr \left\{ \sup_{f \in \mathcal{F}} |\Pn f - \E f| \geq c \left( \frac{B^2}{n} \right)^{1/2} + c\left( \frac{B^2\tau}{n} \right)^{1/2}  \right\} \leq e^{-\tau}
\]
for any $\tau > 0$, where $c$ is some constant that depends on $\nu$. 

\end{proof}

Recall that $\rho_t(R_1, R_2) = \Pr (\bm{X}_t \in R_1 \symmdiff
R_2)$. The following lemma gives an upper bound  on 
\[
\sup_{R \in \mathcal{R}_t, \rho_t(R, R^{\ast}_{t\ell}) \leq \delta} 
\left|
\left\{ 
\Pn \wh{\Omega}_{t\ell}(R, a^{\ast}_{t\ell}) - \E \Omega_{t\ell} (R, a^{\ast}_{t\ell})
\right\} 
-
\left\{ 
\Pn \wh{\Omega}_{t\ell}(R^{\ast}_{t\ell}, a^{\ast}_{t\ell}) - \E \Omega_{t\ell} (R^{\ast}_{t\ell}, a^{\ast}_{t\ell})
\right\} 
\right|. 
\]

\begin{lemma}
\label{thm:cr-local}
We have
\begin{align*}
& \sup_{R \in \mathcal{R}_t, \rho_t(R, R^{\ast}_{t\ell}) \leq \delta} 
\left|
\left\{ 
\Pn \wh{\Omega}_{t\ell}(R, a^{\ast}_{t\ell}) - \E \Omega_{t\ell} (R, a^{\ast}_{t\ell})
\right\} 
-
\left\{ 
\Pn \wh{\Omega}_{t\ell}(R^{\ast}_{t\ell}, a^{\ast}_{t\ell}) - \E \Omega_{t\ell} (R^{\ast}_{t\ell}, a^{\ast}_{t\ell})
\right\} 
\right| \\
\leq 
& \sup_{R \in \mathcal{R}_t, \rho_t(R, R^{\ast}_{t\ell}) \leq \delta} 
\left|
\left\{ 
\Pn \Omega_{t\ell}(R, a^{\ast}_{t\ell}) - \E \Omega_{t\ell} (R, a^{\ast}_{t\ell})
\right\} 
-
\left\{ 
\Pn \Omega_{t\ell}(R^{\ast}_{t\ell}, a^{\ast}_{t\ell}) - \E \Omega_{t\ell} (R^{\ast}_{t\ell}, a^{\ast}_{t\ell})
\right\} 
\right| \\
& {}+ 
\left\{ \sup_{R \in \mathcal{R}_t, \rho_t(R, R^{\ast}_{t\ell}) \leq \delta} 
\Pn I(\bm{X}_t \in R \symmdiff R^{\ast}_{t\ell}) \right\}^{1/2} \left[ \Pn \left\{ \wh{U}_t(\bm{X}_t, a) - U_t(\bm{X}_t, a) \right\}^2 \right]^{1/2} \\
& {}+ B \left\{ \sup_{R \in \mathcal{R}_t, \rho_t(R, R^{\ast}_{t\ell}) \leq \delta} 
\Pn I(\bm{X}_t \in R \symmdiff R^{\ast}_{t\ell}) \right\}^{1/2} \left\{ \sum_{k < \ell} \Pn I\left(
\bm{X}_t \in \wh{R}_{t k} \symmdiff R^{\ast}_{t k} \right) \right\}^{1/2}.
\end{align*}
Let $J$ denote the first term on the right hand side of the above equation. We have
\[
\pr\left\{J \geq c \delta^{1/2 - \beta} (n^{-1/2} + n^{-1/2}\tau^{1/2}) \right\} \leq e^{-\tau}.
\]
In addition, 
\[
\pr\left\{ \sup_{R \in \mathcal{R}_t, \rho_t(R, R^{\ast}_{t\ell})}
\Pn I(\bm{X}_t \in R \symmdiff R^\ast_{t\ell}) \geq c \delta^{1-\beta} (1 + n^{-1} \tau) \right \} \leq e^{-\tau}.
\]
\end{lemma}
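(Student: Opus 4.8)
The plan is to treat the displayed inequality (which is deterministic) and the two tail bounds separately, the latter two by a common empirical-process recipe built on the concentration results already proved.

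For the deterministic inequality, fix $a=a^\ast_{t\ell}$ and add and subtract $\Pn\Omega_{t\ell}$ evaluated at $R$ and at $R^\ast_{t\ell}$. This splits the quantity inside the left-hand supremum into $\{(\Pn-\E)\Omega_{t\ell}(R,a^\ast_{t\ell})\}-\{(\Pn-\E)\Omega_{t\ell}(R^\ast_{t\ell},a^\ast_{t\ell})\}$, whose supremum over the $\delta$-ball is the term I call $J$, plus a remainder involving only $\wh\Omega_{t\ell}-\Omega_{t\ell}$, in which the penalties $\eta\{2-V(R)\}$ cancel. Writing $D(\bm X_t)=I(\bm X_t\in\wh G_{t\ell})\{\wh U_t(\bm X_t,a^\ast_{t\ell})-U_t(\bm X_t,a^\ast_{t\ell})\}+\{I(\bm X_t\in\wh G_{t\ell})-I(\bm X_t\in G^\ast_{t\ell})\}\{U_t(\bm X_t,a^\ast_{t\ell})-\zeta\}$, one has $(\wh\Omega_{t\ell}-\Omega_{t\ell})(R,a^\ast_{t\ell})=I(\bm X_t\in R)\,D(\bm X_t)$, so the remainder at $R$ minus at $R^\ast_{t\ell}$ equals $\Pn[\{I(\bm X_t\in R)-I(\bm X_t\in R^\ast_{t\ell})\}\,D(\bm X_t)]$, which is supported on $R\symmdiff R^\ast_{t\ell}$. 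Cauchy--Schwarz bounds its modulus by $\{\Pn I(\bm X_t\in R\symmdiff R^\ast_{t\ell})\}^{1/2}\{\Pn D(\bm X_t)^2\}^{1/2}$, and Minkowski's inequality in $L^2(\Pn)$ together with $|U_t-\zeta|\le B$ and $\wh G_{t\ell}\symmdiff G^\ast_{t\ell}\subset\bigcup_{k<\ell}(\wh R_{tk}\symmdiff R^\ast_{tk})$ gives $\{\Pn D^2\}^{1/2}\le\{\Pn(\wh U_t-U_t)^2\}^{1/2}+B\{\sum_{k<\ell}\Pn I(\bm X_t\in\wh R_{tk}\symmdiff R^\ast_{tk})\}^{1/2}$. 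Taking the supremum over $R$ with $\rho_t(R,R^\ast_{t\ell})\le\delta$ and restoring $J$ yields the stated inequality.

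For the bound on $J$, I would note that $\Omega_{t\ell}(R,a^\ast_{t\ell})-\Omega_{t\ell}(R^\ast_{t\ell},a^\ast_{t\ell})=\{I(\bm X_t\in R)-I(\bm X_t\in R^\ast_{t\ell})\}I(\bm X_t\in G^\ast_{t\ell})\{U_t-\zeta\}$ has sup-norm $O(1)$ and is supported on $R\symmdiff R^\ast_{t\ell}$, hence has variance at most $cB^2\Pr(\bm X_t\in R\symmdiff R^\ast_{t\ell})=cB^2\rho_t(R,R^\ast_{t\ell})\le cB^2\delta$. Since $\{I(\bm X_t\in R):R\in\mathcal R_t\}$ is a VC class, so (by VC preservation) is $\{\Omega_{t\ell}(R,a^\ast_{t\ell})-\Omega_{t\ell}(R^\ast_{t\ell},a^\ast_{t\ell}):\rho_t(R,R^\ast_{t\ell})\le\delta\}$, and its uniform $L^2$ entropy satisfies $\sup_Q\log\mathcal N(\cdot,\lVert\cdot\rVert_{L^2(Q)},\varepsilon)\le c\nu\log_+(cB/\varepsilon)$. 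With $V=cB^2\delta$ the integral $J_V=\int_0^1\psi(V^{1/2}\varepsilon)\,d\varepsilon$ of Proposition~\ref{thm:sup-tala} is of order $1+\log(1/\delta)$, so that proposition gives $\E J\le c\delta^{1/2}\{1+\log(1/\delta)\}^{1/2}n^{-1/2}+c\{1+\log(1/\delta)\}n^{-1}$; absorbing the logarithm into a factor $\delta^{-\beta}$ for an arbitrarily small $\beta>0$ (and using that the $n^{-1}$ term is dominated in the range of $\delta$ of interest) leaves $\E J\le c\delta^{1/2-\beta}n^{-1/2}$. Corollary~\ref{thm:talagrand} then upgrades this to $\pr\{J\ge 2\E J+(2V\tau/n)^{1/2}+c\tau/n\}\le e^{-\tau}$; since $(V\tau/n)^{1/2}\le c\delta^{1/2-\beta}n^{-1/2}\tau^{1/2}$, collecting terms gives $\pr\{J\ge c\delta^{1/2-\beta}(n^{-1/2}+n^{-1/2}\tau^{1/2})\}\le e^{-\tau}$.

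The last tail bound is the same argument applied to the VC class $\{I(\bm X_t\in R\symmdiff R^\ast_{t\ell}):\rho_t(R,R^\ast_{t\ell})\le\delta\}$, which has unit envelope and whose members have mean $\rho_t(R,R^\ast_{t\ell})\le\delta$ and variance at most $\delta$: bound $\sup_R\Pn I(\bm X_t\in R\symmdiff R^\ast_{t\ell})\le\delta+\sup_R|(\Pn-\E)I(\bm X_t\in R\symmdiff R^\ast_{t\ell})|$, apply Proposition~\ref{thm:sup-tala} and Corollary~\ref{thm:talagrand} with $V\le\delta$, and make the same absorptions to reach $c\delta^{1-\beta}(1+n^{-1}\tau)$. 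The main obstacle throughout is the localization: beating the trivial $O(1)$ bound to obtain the $\delta^{1/2}$ (respectively $\delta$) scaling forces the use of the variance-sensitive Talagrand inequality together with the VC entropy bound, and the factor $\delta^{-\beta}$ is precisely the price of absorbing the logarithm the entropy integral produces; the secondary wrinkle is that $\wh\Omega_{t\ell}$ uses the estimated $\wh G_{t\ell}$ and $\wh U_t$ in place of $G^\ast_{t\ell}$ and $U_t$, which is why the deterministic step must factor the increment through $R\symmdiff R^\ast_{t\ell}$ and pay for that mismatch separately --- via $\lVert\wh U_t-U_t\rVert_{L^2(\Pn)}$ and $\sum_{k<\ell}\Pn I(\bm X_t\in\wh R_{tk}\symmdiff R^\ast_{tk})$ --- so that the $\delta^{1/2}$ factor in front survives.
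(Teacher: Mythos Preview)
Your proposal is correct and follows essentially the same approach as the paper: the deterministic inequality is obtained by separating the centered empirical process of $\Omega_{t\ell}$ from the mismatch $\wh\Omega_{t\ell}-\Omega_{t\ell}$, factoring the latter through $R\symmdiff R^\ast_{t\ell}$ and applying Cauchy--Schwarz, while the two tail bounds come from the VC-class entropy estimate combined with the variance-sensitive Talagrand inequality (Proposition~\ref{thm:sup-tala} and Corollary~\ref{thm:talagrand}), with the logarithm absorbed into $\delta^{-\beta}$. The only cosmetic difference is that the paper splits the mismatch term by region (intersection $\wh G_{t\ell}\cap G^\ast_{t\ell}$ versus symmetric difference $\wh G_{t\ell}\symmdiff G^\ast_{t\ell}$) and applies Cauchy--Schwarz to each piece, whereas you package the mismatch into a single function $D$ and then use Minkowski in $L^2(\Pn)$; both routes give the identical bound.
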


\begin{proof}

We have
\begin{align*}
& \sup_{R \in \mathcal{R}_t, \rho_t(R, R^{\ast}_{t\ell}) \leq \delta} 
\left|
\left\{ 
\Pn \wh{\Omega}_{t\ell}(R, a^{\ast}_{t\ell}) - \E \Omega_{t\ell} (R, a^{\ast}_{t\ell})
\right\} 
-
\left\{ 
\Pn \wh{\Omega}_{t\ell}(R^{\ast}_{t\ell}, a^{\ast}_{t\ell}) - \E \Omega_{t\ell} (R^{\ast}_{t\ell}, a^{\ast}_{t\ell})
\right\} 
\right| \\
\leq 
& \sup_{R \in \mathcal{R}_t, \rho_t(R, R^{\ast}_{t\ell}) \leq \delta} 
\left|
\left\{ 
\Pn \Omega_{t\ell}(R, a^{\ast}_{t\ell}) - \E \Omega_{t\ell} (R, a^{\ast}_{t\ell})
\right\} 
-
\left\{ 
\Pn \Omega_{t\ell}(R^{\ast}_{t\ell}, a^{\ast}_{t\ell}) - \E \Omega_{t\ell} (R^{\ast}_{t\ell}, a^{\ast}_{t\ell})
\right\} 
\right| \\
& {}+ \sup_{R \in \mathcal{R}_t, \rho_t(R, R^{\ast}_{t\ell}) \leq \delta}  
\left| \Pn \wh{\Omega}_{t\ell}(R, a) - \Pn \wh{\Omega}_{t\ell} (R^{\ast}_{t\ell}, a) - \Pn \Omega_{t\ell}(R, a) + \Pn \Omega_{t\ell} (R^{\ast}_{t\ell}, a) \right|.
\end{align*}

The first term can be bounded above using properties of VC classes.
For any $\delta > 0$, define 
\begin{multline*}
\mathcal{F}_{\delta} = \big\{ I(\bm{X}_t \in R) I(\bm{X}_t \in G^{\ast}_{t\ell}) \left\{ U_t(\bm{X}_t, a) - \zeta \right\} \\
- I(\bm{X}_t \in R^{\ast}_{t\ell}) I(\bm{X}_t \in G^{\ast}_{t\ell}) \left\{ U_t(\bm{X}_t, a) - \zeta \right\}: R \in \mathcal{R}_t, \rho_t(R, R^{\ast}_{t\ell}) \leq \delta \big\}.
\end{multline*}
Because $\mathcal{R}_t$ is a VC class,
$\mathcal{F}_\delta$ is a VC class for any $\delta$. 
In addition, 
$
\sup_{Q} \mathcal{N}(\mathcal{F}_\delta, \lVert \cdot \rVert_{L^2(Q)}, \varepsilon) \leq c_1 \varepsilon^{-2\nu}
$
for some constants $c_1$ and $\nu$
independent of $\delta$.

For any $f \in \mathcal{F}_{\delta}$, we have $\lVert f \rVert_{\infty} \leq B$ and $\E f^2 \leq B^2 \delta$. Thus, by Propositions~1 and~3, 
\[
\Pr \left[ \sup_{f \in \mathcal{F}_{\delta}} |\Pn f - \E f| \geq 
c_2 \left\{
\frac{\delta^{1/2} \log^{1/2} (1/\delta)}{n^{1/2}} + 
\frac{\log (1/\delta)}{n} + 
\frac{\delta^{1/2} \tau^{1/2}}{n^{1/2}}+ 
\frac{\tau^{1/2}}{n^{1/2}} \right\}  \right] \leq e^{-\tau} ,
\]
where $c_2$ is some constant that depends on $B$. 
As $\delta \in (0, 1]$,  it follows that 
$\log (1/\delta) \leq c_3 \delta^{-\beta}$ for any $\beta > 0$, 
where $c_3$ is same constant that depends on $\beta$ only. 
Thus, 
\[
\Pr \left\{ \sup_{f \in \mathcal{F}_{\delta}} |\Pn f - \E f| \geq c_4 \delta^{1/2 - \beta} \left( \frac{1}{n^{1/2}} + \frac{1}{n\delta^{1/2}} + \frac{\delta^{\beta} \tau^{1/2}}{n^{1/2}} + \frac{\tau}{n\delta^{1/2}} \right) \right\} \leq e^{-\tau}. 
\]

Hence, when $\delta^{1/2} \geq n^{-1/2}\tau^{1/2}$, we have 
\[
\Pr \left\{ \sup_{f \in \mathcal{F}_{\delta}} |\Pn f - \E f| \geq c_5 \delta^{1/2 - \beta} \left( n^{-1/2} + n^{-1/2} \tau^{1/2} \right)\right\} \leq e^{-\tau}.
\]

For the second term, we observe that
\begin{align*}
&\left| \Pn \wh{\Omega}_{t\ell}(R, a) - \Pn \wh{\Omega}_{t\ell} (R^{\ast}_{t\ell}, a) - \Pn \Omega_{t\ell}(R, a) + \Pn \Omega_{t\ell} (R^{\ast}_{t\ell}, a) \right| \\
={} &\Big| \Pn \left\{ I(\bm{X}_t \in R) - I(\bm{X}_t \in R^{\ast}_{t\ell})\right\} I(\bm{X}_t \in  \wh{G}_{t\ell}) \left\{ \wh{U}_t(\bm{X}_t, a) - \zeta \right\}  \\
&- \Pn \left\{ I(\bm{X}_t \in R) - I(\bm{X}_t \in R^{\ast}_{t\ell})\right\} I(\bm{X}_t \in  {G}^{\ast}_{t\ell}) \left\{ \wh{U}_t(\bm{X}_t, a) - \zeta  \right\}  \Big| \\
\leq {} &\Pn I(\bm{X}_t \in R \symmdiff R^{\ast}_{t\ell}) I(\bm{X}_t \in \wh{G}_{t\ell} \cap G^{\ast}_{t\ell}) \left| \wh{U}_t(\bm{X}_t, a) - U_t(\bm{X}_t, a) \right| \\
&{}+ \Pn I(\bm{X}_t \in R \symmdiff R^{\ast}_{t\ell}) I (\bm{X}_t \in \wh{G}_{t\ell} \symmdiff G^{\ast}_{t\ell}) B.
\end{align*}
Using the Cauchy-Schwarz inequality, 
\begin{align*}
&\Pn I(\bm{X}_t \in R \symmdiff R^{\ast}_{t\ell}) I(\bm{X}_t \in \wh{G}_{t\ell} \cap G^{\ast}_{t\ell}) \left| \wh{U}_t(\bm{X}_t, a) - U_t(\bm{X}_t, a) \right| \\
\leq &\Pn I(\bm{X}_t \in R \symmdiff R^{\ast}_{t\ell}) \left| \wh{U}_t(\bm{X}_t, a)- U_t(\bm{X}_t, a)\right| \\
\leq &\left\{ \Pn I(\bm{X}_t \in R \symmdiff R^{\ast}_{t\ell})\right\}^{1/2} \left[\Pn\left\{ \wh{U}_{t}(\bm{X}_t, a) - U_t(\bm{X}_t, a)\right\}^2 \right]^{1/2},
\end{align*}
and
\begin{align*}
&\Pn I(\bm{X}_t \in R \symmdiff R^{\ast}_{t\ell}) I(\bm{X}_t \in \wh{G}_{t\ell} \symmdiff G^{\ast}_{t\ell}) B \\
\leq &B \left\{ \Pn I(\bm{X}_t \in R \symmdiff R^{\ast}_{t\ell})\right\}^{1/2} \left\{ \Pn I(\bm{X}_t \in \wh{G}_{t\ell} \symmdiff G^{\ast}_{t\ell}) \right\}^{1/2}.
\end{align*}

Therefore, 
\begin{align*}
&\sup_{R \in \mathcal{R}_t, \rho_t(R, R^{\ast}_{t\ell})}
 \left| \Pn \wh{\Omega}_{t\ell}(R, a) - \Pn\wh{\Omega}_{t\ell} (R^{\ast}_{t\ell}, a) - \Pn \Omega_{t\ell}(R, a) + \Pn\Omega_{t\ell}(R^{\ast}_{t\ell}, a) \right| \\
\leq &\left\{ \sup_{R \in \mathcal{R}_t, \rho_t(R, R^{\ast}_{t\ell})}
 \Pn I(\bm{X}_t \in R \symmdiff R^{\ast}_{t\ell}) \right\}^{1/2} \left[ \Pn \left\{ \wh{U}_t(\bm{X}_t, a) - U_t(\bm{X}_t, a) \right\}^2 \right]^{1/2} \\
&+ B \left\{ \sup_{R \in \mathcal{R}_t, \rho_t(R, R^{\ast}_{t\ell})}
 \Pn I(\bm{X}_t \in R \symmdiff R^{\ast}_{t\ell}) \right\}^{1/2} \left\{ \Pn I\left(\bm{X}_t \in \wh{G}_{t\ell} \symmdiff G^{\ast}_{t\ell}\right) \right\}^{1/2}.
\end{align*}

Finally, let  $\mathcal{G}_{\delta} = \left\{ I(\bm{X}_t \in R \symmdiff R^{\ast}_{t\ell}): R \in \mathcal{R}_{t}, \rho_t (R , R^{\ast}_{t\ell}) \leq \delta \right\} $. 
Then, for any $g \in \mathcal{G}_{\delta}$,  $\lVert g \rVert_{\infty} \leq 1$ and $\E g^2 \leq \delta$. 
Thus, by Propositions~1 and~3, 
\[
\Pr \left[ \sup_{g \in \mathcal{G}_{\delta}} |\Pn g - \E g| \geq c_6
\left\{
\frac{\delta^{1/2} \log^{1/2}(1/\delta)}{n^{1/2}}
+ \frac{\log(1/\delta)}{n} 
+ \frac{\delta^{1/2} \tau^{1/2}}{n^{1/2}}
+ \frac{\tau}{n} \right\} \right] \leq e^{-\tau}
\]
Because $\sup_{g \in \mathcal{G}_{\delta}} \E g \leq \delta$ and
$(\delta/n)^{1/2} \leq (\delta + 1/n)/2$, 
\[
\Pr \left\{ \sup_{g \in \mathcal{G}_{\delta}}\Pn g \geq c_7 \delta^{1-\beta} \left( 1 + \frac{\tau}{n}\right) \right\} \leq e^{-\tau}.
\]

\end{proof}

The following lemma is useful for establishing the rate of convergence.
It is a finite-sample version of \citet[][Theorem~3.2.5]{van1996weak}.
Though we state the lemma in terms of maximizing $M_n$, 
an analogous conclusion applies for minimizing $M_n$.

\begin{lemma}
\label{thm:cr-rate}
Let $\{M_n(\theta): \theta \in \Theta \}$ be a stochastic process and $M(\theta)$ a deterministic function. 
Suppose $M(\theta) - M(\theta_0) \leq -\kappa d^2 (\theta, \theta_0)$ for some non-negative function $d: \Theta \times \Theta \to \R$ and positive number $\kappa$. 
Let $c_0$ be some value that may depend on $n$.
Suppose when $\eta \geq c_0$, we have
\[
\Pr \left\{ \sup_{\theta: d(\theta, \theta_0) \leq \delta} \left|(M_{\theta} - M)(\theta) - (M_n - M)(\theta_0)\right| \geq c_1 \delta^{\xi} \tau^{1/2} \right\} \leq e^{-\tau},
\]
where $\xi \in (0,1]$, $c_1$ is a constant which is independent of $\delta$ and $\tau$ but may depend on $n$. 

Let $\wh{\theta}_n = \argmax_{\theta \in \Theta} M_n(\theta)$. Define 
\[
\eta = \max \left\{4 \kappa^{-1/(2-\xi)} c_1^{1/(2-\xi)} \tau^{1/(4-2\xi)} , c_0 \right\}.
\]
Then, 
\[
\Pr \left\{ d(\wh{\theta}_n, \theta_0) \geq \eta \right\} \leq 3e^{-\tau}.
\]

\end{lemma}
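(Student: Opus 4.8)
The plan is to run a peeling (``slicing'') argument of the type used to prove \citet[][Theorem~3.2.5]{van1996weak}, but tracking constants so as to obtain the stated finite-sample tail bound. First I would partition the complement of the good event into dyadic shells: for $j \ge 1$ set $S_j = \{\theta \in \Theta: 2^{j-1}\eta \le d(\theta, \theta_0) < 2^j \eta\}$, so that $\{d(\wh\theta_n, \theta_0) \ge \eta\} \subseteq \bigcup_{j \ge 1} \{\wh\theta_n \in S_j\}$ (only finitely many shells are non-empty if $d$ is bounded, but this is not needed). On $\{\wh\theta_n \in S_j\}$, the optimality of $\wh\theta_n$, namely $M_n(\wh\theta_n) \ge M_n(\theta_0)$, combined with the curvature hypothesis $M(\theta_0) - M(\wh\theta_n) \ge \kappa\, d^2(\wh\theta_n, \theta_0)$, gives
\[
(M_n - M)(\wh\theta_n) - (M_n - M)(\theta_0) \;\ge\; \kappa\, d^2(\wh\theta_n, \theta_0) \;\ge\; \tfrac{\kappa}{4}\, 2^{2j} \eta^2,
\]
while $d(\wh\theta_n, \theta_0) < 2^j\eta$ forces the left-hand side to be at most $\sup_{d(\theta, \theta_0) \le 2^j\eta} |(M_n - M)(\theta) - (M_n - M)(\theta_0)|$. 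Hence $\{\wh\theta_n \in S_j\}$ is contained in the event that this supremum exceeds $\tfrac{\kappa}{4} 2^{2j}\eta^2$.

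Next I would apply the modulus-of-continuity hypothesis with radius $\delta = 2^j\eta$; this is legitimate for every $j \ge 1$ because $2^j\eta \ge \eta \ge c_0$ (I read the hypothesis ``when $\eta \ge c_0$'' as requiring the bound for shell-radii $\delta \ge c_0$, which is precisely what the $c_0$ term in the definition of $\eta$ secures). Taking the free parameter in that bound to be the value $\tau_j$ determined by $c_1 (2^j\eta)^\xi \tau_j^{1/2} = \tfrac{\kappa}{4} 2^{2j}\eta^2$, equivalently
\[
\tau_j^{1/2} \;=\; \frac{\kappa}{4 c_1}\, 2^{j(2-\xi)}\, \eta^{2-\xi},
\]
yields $\Pr\{\wh\theta_n \in S_j\} \le e^{-\tau_j}$. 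It then remains to check that the prescribed $\eta$ makes $\tau_j$ at least $4^j\tau$. Raising $\eta \ge 4\kappa^{-1/(2-\xi)} c_1^{1/(2-\xi)} \tau^{1/(4-2\xi)}$ to the power $2-\xi$ and using the identity $(2-\xi)/(4-2\xi) = 1/2$ gives $\tfrac{\kappa}{4c_1}\eta^{2-\xi} \ge 4^{1-\xi}\tau^{1/2} \ge \tau^{1/2}$, where the last step uses $\xi \le 1$; substituting into the display for $\tau_j^{1/2}$ gives $\tau_j \ge 2^{2j(2-\xi)}\tau \ge 4^j\tau$, again by $\xi \le 1$.

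Summing over the shells then closes the argument:
\[
\Pr\{d(\wh\theta_n, \theta_0) \ge \eta\} \;\le\; \sum_{j \ge 1} \Pr\{\wh\theta_n \in S_j\} \;\le\; \sum_{j \ge 1} e^{-\tau_j} \;\le\; \sum_{j \ge 1} e^{-4^j \tau},
\]
and the right-hand side is at most $3 e^{-\tau}$: for $\tau$ bounded away from $0$ this follows from the super-geometric decay of the exponents $4^j$, and for small $\tau$ the bound is trivial since then $3e^{-\tau} \ge 1$. The only genuinely delicate point is the constant bookkeeping in the preceding paragraph: one must verify that the precise exponents $1/(2-\xi)$ on $\kappa$ and $c_1$, the exponent $1/(4-2\xi)$ on $\tau$, and the numerical factor $4$ in the definition of $\eta$ are matched so that $\tau_j \ge 4^j\tau$ simultaneously for every $j \ge 1$ and every $\xi \in (0,1]$; everything else is routine, and the minimizing-$M_n$ version follows by replacing $M_n, M$ with their negatives.
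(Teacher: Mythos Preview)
Your proposal is correct and follows essentially the same peeling argument as the paper: both partition $\{d(\wh\theta_n,\theta_0)\ge\eta\}$ into dyadic shells $[2^{j-1}\eta,2^j\eta)$, use optimality plus the curvature condition to force the centered increment to exceed $\kappa\eta_{j-1}^2$, and then invoke the modulus hypothesis at radius $2^j\eta$. The only cosmetic difference is that the paper applies the hypothesis with the free parameter set to $j\tau$ (yielding $e^{-j\tau}$ and summing to $e^{-\tau}/(1-e^{-\tau})\le 3e^{-\tau}$), whereas you solve for $\tau_j$ and obtain the sharper $\tau_j\ge 4^j\tau$; since $4^j\ge j$ this dominates the paper's bound and the same geometric-sum conclusion follows.
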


\begin{proof}

Fix $\eta > 0$, define $\eta_j= \eta 2^{-j}$, $j \geq 0$, then
\[
\Pr \left\{ d(\wh{\theta}_n, \theta_0) \geq \eta \right\} \leq \sum^{\infty}_{j=1} \Pr \left[ \sup_{\theta: \eta_{j-1} \leq d(\theta, \theta_0) < \eta_j} \left\{ M_n(\theta) - M_n(\theta_0)\right\} \geq 0 \right].
\]
We observe that 
\begin{align*}
M_n(\theta) - M_n(\theta_0) &= \{ (M_n - M)(\theta) - (M_n - M)(\theta_0) \} + \{M(\theta) - M(\theta_0)\} \\
&\leq |(M_n - M)(\theta) - (M_n - M)(\theta_0)| - \kappa d^2(\theta, \theta_0).
\end{align*}
Hence, we have 
\begin{multline*} 
\Pr \left[ \sup_{\theta: \eta_{j-1} \leq d(\theta, \theta_0) < \eta_j} \{M_n(\theta) - M_n(\theta_0)\} \geq 0 \right] \\
\leq \Pr \left\{ \sup_{\theta: d(\theta, \theta_0) \leq \eta_j} |(M_n - M)(\theta) - (M_n - M)(\theta_0)| \geq \kappa \eta^2_{j-1} \right\}.   
\end{multline*}

Let $\beta = 1/(2-\xi)$. Then
$\eta = 4\kappa^{-\beta} c_1^{\beta}\tau^{\beta/2}$.  Hence,
$\eta^{2-\xi} \geq 4 \kappa^{-1} c_1 \tau^{1/2}$.  Because
$j 2^{-j} \leq 1$, $ j \geq j^{1/2} \geq 1$ for all $j\geq 1$ and
$\xi -2 \leq -1$,
\[
\eta^{2-\xi} \geq \kappa^{-1} 2^{-j+2}jc_1 \tau^{1/2} \leq \kappa^{-1} 2^{j(\xi-2) + 2}c_1 j^{1/2}\tau^{1/2}
\]
That is, 
$
\kappa \eta^2 2^{2j-2} \geq \eta^{\xi} 2^{j\xi} c_1 j^{1/2} \tau^{1/2}
$.
By the definition of $\eta_j$ and $\eta_{j-1}$, we have 
$
\kappa \eta^{2}_{j-1} \geq \eta^{\xi}_{j} c_1 j^{1/2} \tau^{1/2}.
$
By the condition on $M_n - M$, we have
\[
\Pr \left\{ \sup_{\theta: d(\theta, \theta_0) \leq \eta_j} |(M_n - M)(\theta) - (M_n - M)(\theta_0)| \geq \kappa \eta^{2}_{j-1} \right\} \leq e^{-j\tau}.
\]
Therefore, we have 
$
\Pr \left\{ d(\wh{\theta}_n, \theta_0) \geq \eta \right\}  \leq \sum_{j=1}^{\infty} e^{-j\tau} = e^{-\tau} / \left( 1- e^{-\tau} \right)
$.
Note that $e^{-\tau} / \left( 1- e^{-\tau} \right) \leq 3e^{-\tau}$ when $\tau \geq 1$ 
and $\Pr \left\{d(\wh{\theta}_n, \theta_0) \geq \eta \right\} \leq 1 \leq 3e^{-\tau}$ when $\tau < 1$. 

\end{proof}

\subsection{Proof of Theorem~1}

In this subsection, $\xi$ and $\beta$ denote arbitrary positive numbers.
The value of $\xi$ or $\beta$ may be different at each occurrence.
We start at the last stage $t=T$. 
Define $\varphi_T = r_T / (2 r_T + q_T)$.
Because $\wh{Y}_T = \wt{Y}_T$ for any $a \in \mathcal{A}_T$, 
under the conditions on $\bm{\gamma}_T$ and $\lambda_T$,
by Proposition~\ref{thm:krr-risk} and its corollary,
we have 
\[
\Pr \left[ \E_{\bm{X}} \left\{ \wh{Q}_T(\bm{X}, a) - Q_T(\bm{X}, a) \right\}^2 
\geq c_1 \left( n^{- 2\varphi_T + \xi} + n^{-1}\tau \right) \right] \leq e^{-\tau}.
\]
This establishes the consistency and convergence rate for $\wh{Q}_T$.

Next, we consider $(\wh{R}_{T\ell}, \wh{a}_{T\ell})$ for $\ell = 1, 2, \ldots$. 
In view of Assumption~4 (\romanNum{1}) and (\romanNum{2}), 
by reducing $\kappa$, we can have 
Assumption~4 (\romanNum{1}) hold for all $R$
instead of only those $R$ close to the true value.

When $\ell =1$, we have $\wh{G}_{T1} = G^{\ast}_{T1} = \mathcal{X}_T$. 
Thus, for any $a \in \mathcal{A}_T$, by equation~\eqref{eq:u-q} and
Lemma~\ref{thm:cr-global}, it follows that 
\[
\Pr \left\{ \sup_{R \in \mathcal{R}_T} |\Pn \wh{\Omega}_{T1} (R, a) - \E \Omega_{T1} (R, a)| \geq c_1 n^{-\varphi_T + \xi}\tau \right\} \leq e^{-\tau}.
\]
By Assumption~4~(\romanNum{3}), 
we have $\inf_{R \in \mathcal{R}_T, a \neq a^{\ast}_{T1}} \E \Omega_{T1} (R, a) \geq \E \Omega_{T1} (R^{\ast}_{T1}, a^{\ast}_{T1}) + \varsigma $. 
Thus, 
\begin{align*}
\Pr(\wh{a}_{T1} \neq a^{\ast}_{T1}) 
&\leq \sum_{a \neq a^{\ast}_{T1}} \Pr \left\{ \sup_{R \in \mathcal{R}_T} \Pn \wh{\Omega}_{T1}(R, a) \geq \Pn \wh{\Omega}_{T1} (R^{\ast}_{T1}, a^{\ast}_{T1})  \right\} \\
&\leq \sum_{a} \Pr \left\{ \sup_{R \in \mathcal{R}_T} \left| \Pn \wh{\Omega}_{T1} (R, a) - \E \Omega_{T1} (R, a) \right| \geq \varsigma/2  \right\}
\end{align*}
Hence, 
\[
\Pr (\wh{a}_{T1} \neq a^{\ast}_{T1}) \leq c_1 \exp(-c_2 n^{\varphi_T - \xi}), 
\]
where $c_1$ depends on $|\mathcal{A}_T|$ and $c_2$ depends on
$\varsigma$.  Actually, as seen from the proof of Theorem~2, we are
able to obtain a faster convergence rate for $\wh{a}_{T1}$.  However,
this does not affect the final result because $\wh{R}_{T1}$ converges
at a much slower rate, as shown below.

We proceed to establish the convergence rate for $\wh{R}_{T1}$. 
For any $\delta > 0$, by equation~\eqref{eq:u-q} and Lemma~\ref{thm:cr-local}, 
\begin{multline*}
\Pr \bigg\{ \sup_{R \in \mathcal{R}_T, \rho_T(R, R^{\ast}_{T1}) \leq \delta} 
 \left| \Pn \wh{\Omega}_{T1} (R, a^{\ast}_{T1}) - \Pn \wh{\Omega}_{T1} (R^{\ast}_{T1}, a^{\ast}_{T1}) 
 - \E \Omega_{T1}(R, a^{\ast}_{T1}) + \E \Omega_{T1} (R^{\ast}_{T1}, a^{\ast}_{T1}) \right| \\
\geq  c_1 \delta^{1/2 -\beta} n^{-\varphi_T + \xi} \tau \bigg\} \leq e^{-\tau}. 
\end{multline*}
Hence, by Lemma~\ref{thm:cr-rate},  
\[
\Pr \left\{ \rho_T(\wh{R}_{T1}, R^{\ast}_{T1}) \geq c_1 n^{-(2/3)\varphi_T + \xi } \tau \right\} \leq c_2 e^{-\tau}.
\]
Note that we take $\beta$ sufficiently small so that it can be absorbed into $\xi$.

We next proceed  to $\ell = 2$. 
By equation~\eqref{eq:u-q} and Lemma~\ref{thm:cr-global}, for any $a \in \mathcal{A}_T$, 
\[
\Pr \left\{ \sup_{R \in \mathcal{R}_T} 
|\Pn \wh{\Omega}_{T2} (R, a) - \E \Omega_{T2} (R, a) |  \geq c_1 n^{-(2/3) \varphi_T + \xi} \tau \right\} \leq e^{-\tau}. 
\]
Similar to $\wh{a}_{T1}$, we obtain 
\[
\Pr (\wh{a}_{T2} \neq a^{\ast}_{T2}) \leq c_1 \exp\left\{ -c_2 n^{(2/3)\varphi_T-\xi} \right\}.
\]
By equation~\eqref{eq:u-q} and Lemma~\ref{thm:cr-local}, for any $\delta > 0$, we have
\begin{multline*}
\Pr \bigg\{ \sup_{R \in \mathcal{R}_T, \rho_T(R, R^{\ast}_{T2}) \leq \delta} 
\left| \Pn \wh{\Omega}_{T2} (R, a^{\ast}_{T2}) - \Pn \wh{\Omega}_{T2} (R^{\ast}_{T2}, a^{\ast}_{T2}) 
- \E \Omega_{T2}(R, a^{\ast}_{T2}) + \E \Omega_{T2} (R^{\ast}_{T2}, a^{\ast}_{T2}) \right| \\
\geq  c_1 \delta^{1/2 -\beta} n^{-(2/3)\varphi_T + \xi} \tau \bigg\} \leq e^{-\tau}. 
\end{multline*}
Hence, by Lemma~\ref{thm:cr-rate}, 
\[
\Pr \left\{ \rho_T(\wh{R}_{T2}, R^{\ast}_{T2}) \geq c_1 n^{-(2/3)^2\varphi_T + \xi} \tau \right\} \leq c_2 e^{-\tau}.
\]
Again, $\beta$ is chosen to be sufficiently small so as to be absorbed into $\xi$.

Using induction, for any $\ell$, we obtain
\[
\Pr (\wh{a}_{T\ell}\neq a^{\ast}_{T\ell}) 
\leq c_1 \exp\left\{
-c_2 n^{(2/3)^{\ell-1} \varphi_T} \right\} 
\]
and 
\[
\Pr \left\{
\rho_T (\wh{R}_{T\ell}, R^{\ast}_{T\ell}) \geq c_1 n^{-(2/3)^\ell\varphi_T} \tau \right\} \leq c_2 e^{-\tau}.
\]
Make the change of variables $\tau \to c_1 n^{-(2/3)^\ell\varphi_T} \tau$, to obtain
\[
\Pr \{
\rho_T (\wh{R}_{T\ell}, R^{\ast}_{T\ell}) \geq \tau \} 
\leq c_1 \exp\left\{
-c_2 n^{(2/3)^{\ell} \varphi_T} \right\} .
\]

Therefore, 
\begin{align*}
\Pr \left\{ M_T(\wh{\pi}_T) \geq \tau \right\} 
&\leq \sum_{\ell=1}^{L_T^*} \pr\left(\wh{a}_{T\ell} \neq a^*_{T\ell}\right)
+ \sum_{\ell=1}^{L_T^*} \pr\left\{\rho_T(\wh{R}_{T\ell}, R^*_{T\ell}) \geq \tau / L_T^* \right\}\\
& \leq c_1 \exp(- c_2 n^{\phi_T - \xi} \tau),
\end{align*}
where $\phi_T = (2/3)^{L_T^*} \varphi_T$.
Consequently, 
\[
\Pr \left\{ V_T(\pi^{\ast}_{T}) - V_T(\wh{\pi}_T) \geq \tau \right\} 
\leq \Pr \left\{ M_T(\wh{\pi}_T) \geq \tau / B \right\} 
\leq c_3 \exp(- c_4 n^{\phi_T - \xi} \tau).
\]

We now proceed to the earlier stages. 
Consider the $(T-1)$th stage.
By the risk bounds of $\wh{Q}_{T}$ and $\wh{\pi}_{T}$,  
\[
\Pr \left\{ \Pn \left(\wh{Y}_{T} - \wt{Y}_{T} \right)^2 \geq c_1 n^{-\phi_T+\xi} \tau \right\} \leq c_2 e^{-\tau}.
\]
Hence, by Proposition~\ref{thm:krr-risk}, for any $a \in \mathcal{A}_{T-1}$, we have 
\[
\Pr\left[ E_{\bm{X}} \left\{ \wh{Q}_{T-1}(\bm{X}, a) - Q_{T-1}(\bm{X}, a)\right\}^2 \geq c_1 n^{-2\varphi_{T-1} + \xi} \tau \right] \leq c_2 e^{-\tau},
\]
where $\varphi_{T-1} = \min\{\phi_{T}/2, r_{T-1} /(2r_{T-1} + q_{T-1})
\}$, i.e., the convergence rate of $\wh{Q}_{T-1}$ depends on the kernel regression convergence rate assuming the true response $\wt{Y}$ is observed
and the convergence rate of the surrogate response $\wh{Y}$.

The analysis of $(\wh{R}_{T-1, \ell}, \wh{a}_{T-1, \ell})$s are the
same as in the last stage.
Thus, 
\[
\Pr \left\{ M_{T-1}(\wh{\pi}_{T-1}) \geq \tau \right\} 
\leq c_1 \exp(- c_2 n^{\phi_{T-1} - \xi} \tau),
\]
and
\[
\Pr \left\{ V_{T-1}(\pi^{\ast}_{T-1}) - V_{T-1}(\wh{\pi}_{T-1}) \geq \tau \right\}
\leq c_3 \exp(- c_4 n^{\phi_{T-1} - \xi} \tau),
\]
where $\phi_{T-1} = (2/3)^{L_{T-1}^*} \varphi_{T-1}$.
Using induction, these two inequalities hold when $T-1$ is replaced by $t = T-2, \dots, 1$.

\subsection{Proof of Theorem~2}

At the last stage, by Proposition~\ref{thm:krr-risk},
\[
\Pr \left[ \E_{\bm{X}} \left\{ \wh{Q}_T(\bm{X}, a) - Q_T(\bm{X}, a)\right\}^2 
\geq c_1 \left(n^{-2\varphi_T + \xi} + n^{-1} \tau\right) \right] 
\leq e^{-\tau},
\]
where $\varphi_T = r_{T}/(2r_{T} + q_{T})$ and $\xi>0$ is arbitrary. 
By equation~\eqref{eq:u-q}, 
\[
\Pr \left[ \E_{\bm{X}} \left\{ \wh{U}_T(\bm{X}, a) - U_T(\bm{X}, a)\right\}^2 
\geq c_1 \left(n^{-2\varphi_T + \xi} + n^{-1} \tau\right) \right] 
\leq e^{-\tau}.
\]

Using a similar argument to the proof of Theorem~1, 
\[
\Pr \left\{ \sup_{R \in \mathcal{R}_T} |\Pn \wh{\Omega}_{T1} (R, a) - \E \Omega_{T1} (R, a)| \geq c_1 \left(n^{-\varphi_T + \xi} + n^{-1/2} \tau^{1/2} \right) \right\} \leq e^{-\tau}.
\]
and
\[
\Pr(\wh{a}_{T1} \neq a^{\ast}_{T1}) 
\leq \sum_{a} \Pr \left\{ \sup_{R \in \mathcal{R}_T} \left| \Pn \wh{\Omega}_{T1} (R, a) - \E \Omega_{T1} (R, a) \right| \geq \varsigma/2  \right\}.
\]
Note that $\varsigma$ is a fixed number independent of $n$.
Let $\tau^{1/2} = n^{1/2} \max(c_2 \varsigma - n^{-\varphi_T + \xi}, 0)$
and choose $c_2$ such that $2 c_1 c_2 < 1$.
Then, 
\[
\Pr (\wh{a}_{T1} \neq a^{\ast}_{T1} ) \leq c_3 \exp(-c_4 n)
\]
as $\varphi_T \in (0, 1)$.

Define
$\vartheta = \inf_{R: \rho_{T}(R, R^{\ast}_{T1}) > 0} \rho_T(R,
R^{\ast}_{T1})$.
Because the covariates are discrete, $\vartheta$ is strictly positive.
This is a major difference between the continuous covariates and the
discrete covariates.  By Assumption~4 (\romanNum{1}), we have
\begin{align*}
\Pr\left\{ \rho_T(\wh{R}_{T1} , R^{\ast}_{T1}) > 0 \right\}
&\leq \Pr \left\{ \sup_{R \in \mathcal{R}_T: \rho_T(R, R^{\ast}_{T1}) \geq \vartheta} \Pn \wh{\Omega}_{T1}(R, a^{\ast}_{T1})\geq \Pn \wh{\Omega}_{T1}(R^{\ast}_{T1}, a^{\ast}_{T1}) \right\} \\
&\leq \Pr \left\{ \sup_{R\in \mathcal{R}_T} \left| \Pn \wh{\Omega}_{T1}(R, a^{\ast}_{T1}) - \E \Omega_{T1} (R, a^{\ast}_{T1}) \right| \leq \kappa \vartheta^2 /2 \right\} \\
&\leq c_5 \exp(-c_6 n).
\end{align*}

We next analyze $(\wh{R}_{T2}, \wh{a}_{T2})$.
For any $a \in \mathcal{A}_T$, 
\[
\Pr \left\{ \sup_{R \in \mathcal{R}_T} |\Pn \wh{\Omega}_{T2} (R, a) - \E \Omega_{T2} (R, a)| \geq c_1 \left(n^{-\varphi_T + \xi} + n^{-1/2} \tau^{1/2} \right) \right\} \leq e^{-\tau}.
\]
Similar to $(\wh{R}_{T1}, \wh{a}_{T1})$, 
\[
\Pr (\wh{a}_{T2} \neq a^{\ast}_{T2} ) \leq c_1 \exp(-c_2 n)
\]
and
\[
\Pr\left\{ \rho_T(\wh{R}_{T2} , R^{\ast}_{T2}) > 0 \right\} \leq c_3 \exp(-c_4 n).
\]

As seen from this inequality,
a notable difference is that the estimation error does not propagate along the list,
compared to the general case where covariates can be continuous. 
The tail probability decays at the same exponential rate for every~$\ell$.
Therefore, we have
\[
\Pr \left\{ M_T(\wh{\pi}_T) > 0 \right\} 
\leq \sum_{\ell=1}^{L_T^*} \pr\left(\wh{a}_{T\ell} \neq a^*_{T\ell}\right)
+ \sum_{\ell=1}^{L_T^*} \pr\left\{\rho_T(\wh{R}_{T\ell}, R^*_{T\ell}) > 0 \right\}
\leq c_1 \exp(-c_2 n).
\]
Thus, 
\[
\Pr\left\{ V_T(\pi^{\ast}_{T}) - V_T(\wh{\pi}_T) > 0\right\}
\leq \Pr \left\{ M(\wh{\pi}_T) > 0 \right\}
\leq c_1 \exp(-c_2 n).
\]

We then move to the $(T-1)$th stage. 
Conditional on the event $\{M(\wh{\pi}_T) = 0\}$, which occurs with
probability $1- c_1\exp(-c_2 n)$,  
\[
\Pr \left[ \wh{Q}\left\{ \bm{X}_T, \wh{\pi}_T(\bm{X}_T) \right\} = \wh{Q} \left\{ \bm{X}_T, \pi^{\ast}_T(\bm{X}_T)\right\} \right] = 1.
\]
Hence,
\[
\Pr \left\{ \Pn \left(\wh{Y}_{T-1} - \wt{Y}_{T-1}\right)^2 
\geq c_1 \left(n^{-2\varphi_T + \xi} + n^{-1} \tau \right) \right\} 
\leq e^{-\tau}. 
\]

Define $\varphi_{T-1} = \min\left\{ r_{T-1}/(2r_{T-1} + q_{T-1}), \varphi_T \right\}$. By Proposition~\ref{thm:krr-risk},  
\[
\Pr \left[ \E_{\bm{X}} \left\{ \wh{Q}_{T-1}(\bm{X}, a) - Q_{T-1}(\bm{X}, a)\right\}^2 
\geq c_1 \left(n^{-2\varphi_{T-1} + \xi} + n^{-1} \tau\right) \right] 
\leq e^{-\tau}.
\]
Note that nothing is changed except that $T$ is replaced by $T-1$.
Using the same approach as in the $T$th stage, 
conditional on the event $\{M(\wh{\pi}_T) = 0\}$,
we obtain
\[
\Pr \left\{ M_{T-1}(\wh{\pi}_{T-1}) > 0 \right\} 
\leq c_1 \exp(-c_2 n),
\]
and
\[
\Pr\left\{ V_{T-1}(\pi^{\ast}_{T-1}) - V_{T-1}(\wh{\pi}_{T-1}) > 0\right\}
\leq c_1 \exp(-c_2 n).
\]
Because the event $\{M(\wh{\pi}_T) = 0\}$ occurs with probability $1- c_1\exp(-c_2 n)$,
both inequalities hold unconditionally with larger constants $c_1$ and $c_2$.

Using induction, we can establish analogous inequalities for $t = T-2, \dots, 1$.

\section{Algorithm Details and Proof of Proposition~1}

Fix an $t$ and $\ell$.
Define 
\[
U_{i a t \ell} = \left[ \wh{Q}_t \left\{ \bm{X}_{it}, \wh{\pi}_t^{Q}(\bm{X} _{it})\right\} - \wh{Q}_t(\bm{X}_{it}, a) - \zeta \right]
I \left( \bm{X}_{it} \in \wh{G}_{t\ell} \right).
\]
For notation simplicity, we shall omit the subscript $t$ and $\ell$
and write $U_{i a}$ and $\bm{X}_i$.
By definition of $(\wh{R}_{t\ell}, \wh{a}_{t\ell})$, 
\[
(\wh{R}_{t\ell}, \wh{a}_{t\ell}) = \argmin_{R \in \mathcal{R}_t, a \in \mathcal{A}_t} \frac{1}{n} \sum_{i=1}^n U_{ia}I(\bm{X}_{i} \in R) - \eta\{2-V(R) \}. 
\]
We will first fix the treatment $a$ and the covariates involved in $R$,
and focus on the computation of the optimal thresholds.
Then we will loop over all covariate pairs and all treatment options.

\paragraph{Finding the threshold when $R$ involves one covariate}

Without loss of generality, we assume $R = \{\bm{x}: x_j \leq \tau\}$.
The other situation $R = \{\bm{x}: x_j > \tau\}$ can be handled similarly. 
We want to compute
\[
\wh{\tau} = \argmin_{\tau} \sum_{i=1}^n U_{ita} I(X_{ij} \leq \tau),
\]
where $X_{ij}$ is the $j$th component of $\bm{X}_{i}$. 

Let $i_1, \dots, i_n$ be a permutation of $1, \dots, n$ such that $X_{i_1 j}\leq \dots \leq X_{i_n j}$. 
Because the objective function is piecewise constant,
we only need to compute 
\[
F(\tau) = \sum_{i=1}^n U_{ia} I(X_{ij} \leq \tau)
\]
when $\tau$ equals to some $X_{i_s j}$. 
We observe that
\[
F(X_{i_s j}) = \sum_{h \leq s} U_{i_h a}.
\]
Thus, it is clear that when $s \geq 2$
\[
F(X_{i_s j}) = F(X_{i_{s-1} j}) + U_{{i_s}a}.
\]

Hence, by starting at $s=1$ and using the recursive relationship, we can compute $F(X_{i_{s} j})$ for all $s$ and pick the smallest one in $O(n)$ time.

\paragraph{Dealing with ties}

If $ X_{i_s j} = X_{i_{s+1} j} $ for some $s \geq 1$,
then $F(X_{i_{s} j})$ should not be counted
when picking the minimum.
This is because $F(X_{i_{s} j})$ has not  included all subjects with
$X_{ij} = X_{i_s j}$ yet.

To avoid this problem, when there are ties, 
we first aggregate the $U_{ia}$ values
for subjects having the same value of $X_{ij}$.
Similar action can be taken when $R$ involves two covariates,
in which case the $U_{ia}$ values for subjects having the same value 
for both covariates are aggregated.

\paragraph{Finding the threshold when $R$ involves two covariates}
This situation is more complicated.
Without loss of generality, we assume $R=\{\bm{x}: x_j \leq \tau \text{ and } x_k \leq \sigma\}$. 
We want to compute 
\[
(\wh{\tau}, \wh{\sigma}) = \argmin_{\tau, \sigma} \frac{1}{n} \sum_{i=1}^n U_{ia} I(X_{ij} \leq \tau, X_{ik} \leq \sigma). 
\]
We cannot utilize the idea for one covariate
as there is no natural ordering in two-dimensional space. 
Our solution is to sort in one dimension and 
to use binary tree for fast lookup and insertion in the other dimension. 

We start with constructing a complete binary tree of at least $n$ leaves.
The height of such a tree is of order $O(\log_2 n)$. 

Let $i_1, \dots, i_n$ be a permutation of $1, \dots, n$ 
such that $X_{i_1 j} \leq \dots X_{i_n j}$. 
At each time $s$, we will insert $U_{i_s a}$ into the binary tree and search for the optimal threshold $\sigma$
among $X_{ik}, i = 1, \dots, n$. 
Note that at time $s$, values $U_{i_h a}, h \leq s$ are contained in
the binary tree.
So we are looking at the threshold $\tau = X_{i_s j}$.
Specifically, if the rank of $X_{i_s k}$ among $X_{ik}$s is $h$, 
which means $X_{i_s k}$ is the $h$th smallest among $X_{ik}$s, 
then we put $U_{i_s a}$ in the $h$th leaf from the left in the tree. 

In the tree, each node is associated with a subtree in which
that node serves as the root.
Each node contains two pieces of information. 
First, it computes the sum of all $U_{i_s a}$s in the associated subtree. 
Second, it computes the best thresholding sum in the associated subtree,
which is the smallest value among the sum of all $U_{i_s a}$s that satisfies $X_{i_s k} \leq \sigma$ for some $\sigma$,
where $\sigma$ can take the value of any $X_{i_s k}$ in the associated subtree. 

The binary tree structure enables us to update these two pieces of information effectively when a new value, $U_{i_s a}$, is inserted into the tree. 
We move from the leaf node to its parent, and then its ancestors, and finally the root. 
At each node, the sum of all $U_{i_s a}$s in the associated subtree is increased by $U_{i_s a}$. 
As for updating the best thresholding sum, because the thresholding condition is $X_{i_s k} \leq \sigma$, 
the best thresholding sum of a node can only be either 
the best thresholding sum in its left child,
or,
the sum of all $U_{i_s a}$ values in the left child plus the best thresholding sum in the right child, 
whichever is smaller. 

Because the height of the tree is $O(\log_2 n)$, the updating process
involves at most $O(\log_2 n)$ nodes and the time complexity at each
node is constant.  Therefore, when $U_{i_s a}$ is inserted into the
tree, we are able to find the optimal $\sigma$ that minimizes
$\sum_{h \leq s} U_{i_h a} I(X_{i_h k} \leq \sigma)$ in $O(\log_2 n)$
time.

Then we let $s$ run from $1$ to $n$, and find the $s$ that gives the minimum. 
In this way, we find the minimum of $\sum_{h=1}^{n} U_{i_h a} I(X_{i_h k} \leq \sigma, X_{i_h j} \leq X_{i_s j})$
with respect to $\sigma$ and $s$, 
which is exactly the minimum of $\sum_{i=1}^n U_{i_s a}I(X_{ik} \leq \sigma, X_{ij} \leq \tau)$ with respect to $\sigma$ and $\tau$. 
And the time complexity for finding both $\tau$ and $\sigma$ is $O(n \log_2 n)$.

\paragraph{Finding the covariate(s) and treatment}

Heretofore, we have discussed how to find the optimal thresholds when
the covariates to use $X_{ij}$, $X_{ik}$ and the treatment $a$ are
given.  Certainly we need to explore all $R$s defined using only one
covariate, and all $R$s defined using some pair of $X_{ij}$ and
$X_{ik}$.  We also need to loop over all treatment options
$a \in \mathcal{A}_t$.

Therefore, the overall time complexity is $O(n \log n q^2_t m_t)$, where $q_t$ is the dimension of $\bm{X}_i$ and $m_t = |\mathcal{A}_t|$ is the number of available treatment options.

\section{Variables in Data Analysis}

In the first stage, we have the following variables:
\begin{enumerate}
\item age: integer;
\item gender: 1 for male, 0 for female;
\item race: 1 for white, 0 for others;
\item education level: 1 for high school or below, 2 for some college, 3 for bachelor or up;
\item work status: 1 for full time, 0.5 for part time, 0 for no work;
\item bipolar type: 1 or 2;
\item status prior to the onset of the current episode: 1 for remission longer than 8 weeks;
\item status prior to the onset of the current episode: 1 for manic/hypomanic;
\item status prior to the onset of the current episode: 1 for mixed/cycling;
\item SUM-D at week 0;
\item SUM-ME at week 0.
\end{enumerate}

In the second stage, we have the following variables:
\begin{enumerate}
\item binary indicator for adverse effect tremor;
\item binary indicator for adverse effect dry mouth;
\item binary indicator for adverse effect sedation;
\item binary indicator for adverse effect constipation;
\item binary indicator for adverse effect diarrhea;
\item binary indicator for adverse effect headache;
\item binary indicator for adverse effect poor memory;
\item binary indicator for adverse effect sexual dysfunction;
\item binary indicator for adverse effect increase appetite;
\item SUM-D at week 6;
\item SUM-ME at week 6.
\end{enumerate}

\end{document}